\theoremstyle{plain}
\renewcommand{\pmb}{{}}
\def\benu{\begin{enumerate}}
	\def\eenu{\end{enumerate}}
\def\real{{\mathbb{R}}}
\def\R{{\real}}
\def\veps{\varepsilon}
\def\htheta{\hat{\theta}_{\text{S}}}
\def\bsob{\pmb{s}_{\rm obs}}
\def\r{r_n}								% r_n
\def\P{{\cal P}}									% parameter space
\def\ftil{\widetilde{f}}
\def\bs{\pmb{s}}
\def\bx{\pmb{x}}
\def\bS{\pmb{S}}
\def\bmu{\pmb{\mu}}
\def\tPi{\widetilde{\Pi}}
\def\btheta{\pmb{\theta}}
\def\ttheta{\widetilde{\theta}}
\newcommand{\bel}{\begin{eqnarray}\label}
\newcommand{\eel}{\end{eqnarray}}
\newcommand{\bes}{\begin{eqnarray*}}
	\newcommand{\ees}{\end{eqnarray*}}
\newcommand{\bei}{\begin{itemize}}
	\newcommand{\eei}{\end{itemize}}
\newcommand{\distas}[1]{\mathbin{\overset{#1}{\kern\z@\sim}}}%
\newsavebox{\mybox}\newsavebox{\mysim}
\newcommand{\distras}[1]{%
	\savebox{\mybox}{\hbox{\kern3pt$\scriptstyle#1$\kern3pt}}%
	\savebox{\mysim}{\hbox{$\sim$}}%
	\mathbin{\overset{#1}{\kern\z@\resizebox{\wd\mybox}{\ht\mysim}{$\sim$}}}%
}
\definecolor{capri}{rgb}{0.0, 0.75, 1.0}
\definecolor{caribbeangreen}{rgb}{0.0, 1, 0.5}
\begin{document}

\title{%Approximate confidence distribution computing: 
An effective likelihood-free approximate computing method with statistical inferential guarantees}

\author{Suzanne Thornton}
\affil{Department of Statistics and Biostatistics, Rutgers, The State University of New Jersey, U.S.A. \email{suzanne.thornton@rutgers.edu}}

\author{Wentao Li}
\affil{ Department of Mathematics, Statistics, and Physics, Newcastle University, United Kingdom \email{wentao.li@newcastle.ac.uk}}

\author{Minge Xie}
\affil{Department of Statistics and Biostatistics, Rutgers, The State University of New Jersey, U.S.A. \email{mxie@stat.rutgers.edu}}

\maketitle

	\begin{abstract}
	Approximate Bayesian computing is a powerful likelihood-free method that has grown increasingly popular since early applications in population genetics. However, complications arise in the theoretical justification for Bayesian inference conducted from this method with a non-sufficient summary statistic. In this paper, we seek to re-frame approximate Bayesian computing within a frequentist context and justify its performance by standards set on the frequency coverage rate. In doing so, we develop a new computational technique called {\it approximate confidence distribution computing}, yielding theoretical support for the use of non-sufficient summary statistics in likelihood-free methods. Furthermore, we demonstrate that approximate confidence distribution computing extends the scope of approximate Bayesian computing to include data-dependent priors without damaging the inferential integrity. This data-dependent prior can be viewed as an initial `distribution estimate' of the target parameter which is updated with the results of the approximate confidence distribution computing method. A general strategy for constructing an appropriate data-dependent prior is also discussed and is shown to often increase the computing speed while maintaining statistical inferential guarantees. We supplement the theory with simulation studies illustrating the benefits of the proposed method, namely the potential for broader applications and the increased computing speed compared to the standard approximate Bayesian computing methods.
	\end{abstract}
	
	\begin{keywords}       
		Approximate Bayesian computing; Bernstein-von Mises; Confidence distribution; Exact inference; Large sample theory.
	\end{keywords}
%-----------------------------------------------------------------------------------------------------------------------------	
 
\section{Introduction}
\label{sec:intro}
\subsection{Background to approximate Bayesian computing}
Approximate Bayesian computing is a likelihood-free method that approximates a posterior distribution while avoiding direct calculation of the likelihood. This procedure originated in population genetics where complex demographic histories yield intractable likelihoods. Since then, approximate Bayesian computing has been applied to many other areas besides the biological sciences including astronomy and finance; cf., e.g., \cite{Cameron2012, Csillery2010, Peters2012}. Despite its practical popularity in providing a Bayesian solution for complex data problems, the theoretical justification for inference from this method is under-developed and has only recently been explored in statistical literature; cf., e.g., \cite{Robinson2014, Barber2015, Frazier2016, Li2016}. In this paper, we seek to re-frame the problem within a frequentist setting and help address two weaknesses of approximate Bayesian computing: (1) lack of  theoretical justification for Bayesian inference when using a non-sufficient summary statistic and (2) slow computing speed. We propose a novel likelihood-free method as a bridge connecting Bayesian and frequentist inferences and examine it within the context of the existing literature on approximate computing.

Let $x_{\rm obs} = \{x_1, \dots, x_n\}$ be an observed sample from some unknown distribution with density $ f(\cdot\mid\theta)$. 
Assume that the sample is observations of some data generating model, $M_{\theta}$, where $\theta \in \P \subset \R^p$ is unknown. For any given $\theta$, we know how to simulate artificial data from $M_{\theta}$. The standard accept-reject version of approximate Bayesian computing proceeds as follows:

\begin{algo}{(Accept-reject approximate Bayesian computing)} \label{alg:rejABC}
	\begin{tabbing}
		\quad 1. Simulate $\theta_{1},\ldots,\theta_{N}\sim \pi(\theta)$; \\
		\quad 2. For each $i=1,\ldots,N$, simulate $x^{(i)}=\{x_{1}^{(i)},\ldots,x_{n}^{(i)}\}$ from $M_{\theta_i}$;\\ 
		\quad 3. For each $i=1,\ldots,N$, accept $\theta_{i}$ with probability $K_{\veps}(s^{(i)}-s_{\rm obs})$, where $s_{{\rm obs}}= $ \\ \qquad $S_{n}(x_{\rm obs})$ and $ s^{(i)}=S_{n}(x^{(i)})$.
	\end{tabbing}
\end{algo}  
In the above algorithm, $\pi(\cdot)$ is a prior distribution function and the data is summarized by some low-dimension summary statistic, $S_n(\cdot)$
(e.g.,  $S_n(\cdot)$ is a mapping from the sample space in $\mathbb{R}^{n}$ to ${\cal S} \subset \mathbb{R}^{d}$ with $d \leq n$). The kernel probability $K_\veps(\cdot)$ follows the notation $K_{\veps}(u) = \veps^{-1}K(u/ \veps)$, where $K_\veps(\cdot)$ is a kernel function. We refer to $\veps$ as the {\it tolerance level} and typically assume it goes to zero. In many cases, $\veps$ is required to go to zero at a certain rate of $n$ (cf., e.g., \cite{Li2016}), but there are cases in finite sample development in which $\veps$ is independent of sample size $n$, see e.g. \cite{Barber2015}. 

The underlying distribution from which the accepted copies or draws of $\theta$ are generated in an Appropriate Bayesian computing algorithm is called the {\it approximate Bayesian computed posterior}, with the probability density,
\begin{align} 
 \pi_{\veps}(\theta\mid s_{\rm obs})=
\frac{\int_{ \cal S}\pi(\theta)f_n(s\mid\theta)K_{\veps}(s - s_{\rm obs})\,ds}{\int_{{\cal P} \times{\cal S}}\pi(\theta)f_n(s\mid\theta)K_{\veps}(s - s_{\rm obs})\,ds d\theta}, \label{ABC_approx_posterior}
\end{align}
and corresponding cumulative  distribution function denoted by $\Pi_{\veps}(\theta\mid s_{\rm obs})$. Here $f_n (  s \mid\theta)$ denotes the probability density of the summary statistic, implied by $f(x\mid\theta)$ and is typically unknown. We will refer to $f_n ( s \mid\theta)$ as an {\it s-likelihood}. Since this is a Bayesian procedure, 
% in addition to the assumption of the existence of a data-generating model, $M_{\theta}$, 
Algorithm \ref{alg:rejABC} assumes a prior distribution, $\pi(\cdot)$, on $\theta$. In the absence of prior information, the user may select a flat prior.

A common assertion is that $ \pi_{\veps}(\theta\mid s_{\rm obs})$ is close enough to the target posterior distribution, $ p(\theta\mid x) \propto \pi(\theta)f(x\mid\theta)$, e.g. \cite{Marin2011}; however, the quality of this approximation depends on the closeness of the tolerance level to zero and, more crucially for our purposes, on the choice of summary statistic $S_n(\cdot)$. Indeed, we have the following lemma: 
\begin{lemma}\label{ABClemma}  
%{\mylemm  
	Let $K(\cdot)$ be a symmetric kernel density function with $\int { u} K({ u}) d{ u} = 0$ and $\int  \|u\|^2 K({ u}) d{ u} < \infty$ where $\| \cdot \|$ is the Euclidean norm. Suppose the matrix of second derivatives of $f_n(s\mid\theta)$ is bounded with respective to $s$. Then
	\begin{align}
	\pi_{\veps}(\theta\mid s_{\rm obs}) \propto \pi(\theta)f_n(s_{{\rm obs}}\mid\theta)
	%{\int_{\cal P}\pi(\theta)f_n(s_{{\rm obs}}\mid\theta) d\theta} 
	+ O(\veps^2).
	\label{ABC2}
	\end{align}
%}
\end{lemma}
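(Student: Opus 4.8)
The plan is to work directly from the definition \eqref{ABC_approx_posterior} and show that the numerator, viewed as a function of $\theta$, collapses to $\pi(\theta)f_n(s_{\rm obs}\mid\theta)$ up to an $O(\veps^2)$ error; the denominator is then just the integral of the numerator over $\P$, i.e.\ a $\theta$-free normalizing constant that inherits the same order of error, which is exactly what the $\propto$ in \eqref{ABC2} absorbs. First I would pull $\pi(\theta)$ out of the inner $s$-integral, so the numerator is $\pi(\theta)\int_{\cal S} f_n(s\mid\theta)K_{\veps}(s-s_{\rm obs})\,ds$, and then substitute $s = s_{\rm obs}+\veps u$, whose Jacobian cancels the scaling built into $K_{\veps}$, turning the integral into $\int f_n(s_{\rm obs}+\veps u\mid\theta)\,K(u)\,du$, with the range of $u$ containing the support of $K$ once $\veps$ is small enough (or handled by an interior-point assumption on $s_{\rm obs}$).

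Next I would Taylor expand $f_n(\cdot\mid\theta)$ in its first argument about $s_{\rm obs}$ to second order with an integral remainder and integrate term by term against $K$. The constant term returns $f_n(s_{\rm obs}\mid\theta)$ because $K$ integrates to one; the linear term vanishes because $\int u\,K(u)\,du = 0$ by the symmetry hypothesis; and the quadratic remainder is bounded in modulus by a constant multiple of $\sup_{s}\bigl\|\nabla_s^2 f_n(s\mid\theta)\bigr\|\cdot\veps^2\int\|u\|^2 K(u)\,du$, which is $O(\veps^2)$ by the bounded-Hessian and finite-second-moment hypotheses. This gives numerator $=\pi(\theta)\{f_n(s_{\rm obs}\mid\theta)+O(\veps^2)\}$; integrating over $\theta\in\P$ gives denominator $=\int_{\P}\pi(\theta)f_n(s_{\rm obs}\mid\theta)\,d\theta + O(\veps^2)$, a positive constant; dividing the two yields \eqref{ABC2}.

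The main obstacle is the uniformity of the $O(\veps^2)$ remainder in $\theta$: for the error term to survive the integration that produces the denominator, the Hessian bound in the hypothesis must be read as uniform over $\theta\in\P$ (or at least $\pi$-integrable), and one also needs $0<\int_{\P}\pi(\theta)f_n(s_{\rm obs}\mid\theta)\,d\theta<\infty$ so that the division is legitimate and the numerator and denominator errors can be combined into a single $O(\veps^2)$. A secondary, more cosmetic point is the truncation of the $u$-integral when ${\cal S}\subsetneq\R^d$, which is harmless for compactly supported $K$ or when $s_{\rm obs}$ lies in the interior of ${\cal S}$, but should be stated explicitly.
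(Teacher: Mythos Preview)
Your proposal is correct and follows essentially the same route as the paper: both pull $\pi(\theta)$ outside, Taylor expand $f_n(\cdot\mid\theta)$ about $s_{\rm obs}$, and kill the zeroth and first moment terms using $\int K=1$ and $\int uK(u)\,du=0$, leaving an $O(\veps^2)$ remainder from the second-order term. Your treatment is in fact more careful than the paper's brief sketch---you explicitly handle the denominator and flag the uniformity-in-$\theta$ issue, which the paper simply absorbs into the $\propto$ symbol without comment.
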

 
\noindent
Various versions of this result are known (cf., e.g., \cite{Barber2015} and \cite{Li2017}); for completeness, we provide a brief proof of Lemma~\ref{ABClemma} in the appendix. Note that, if the summary statistic $S_n(\cdot)$ is not sufficient, $f_n(s_{{\rm obs}}\mid\theta)$ can be very different from $f(x\mid\theta)$, in which case $ \pi_{\veps}(\theta\mid s_{\rm obs})$ can be a very poor approximation to the target posterior, $p(\theta \mid x)$, even if $\veps \to 0$. 

Figure~\ref{fig:ABC} provides such an example where we consider random data from a Cauchy distribution with a known scale parameter. Only the data itself is sufficient for the location parameter, $\theta$; therefore, any summary statistic, including the commonly used sample mean and median, will not be sufficient.  Figure~\ref{fig:ABC} illustrates that, without sufficiency, the posterior approximation resulting from Algorithm~\ref{alg:rejABC}, using either sample mean and sample median as $S_n(\cdot)$, will never converge to the targeted posterior distribution, thus indicating that the approximations to the target posterior can be quite poor. What's more, the two different summary statistics lead to quite different approximate Bayesian computed posteriors $\pi_{\veps}(\theta\mid s_{\rm obs})$. 
%Specifically, Figure~\ref{fig:ABC} shows~two applications: one using the sample mean and the other using the sample median as the summary statistic, each with a flat prior on the unknown location parameter. 
In neither case is the approximate Bayesian computed posterior the same as the targeted posterior distribution, regardless of sample size or the~rate~of $\veps \rightarrow 0$, including the rate typically required in the existing literature; cf.,~\citet{Li2016}.
% Additionally, t
The approximate Bayesian computed posteriors obtained using the sample mean are much flatter than those obtained using the sample median. Further details about Figure~\ref{fig:ABC} can be found in Section~\ref{sec:Cauchy}.

For this reason, inference from $ \Pi_{\veps}(\cdot\mid s_{\rm obs})$ can produce misleading results within a Bayesian context when the summary statistic used is not sufficient. Questions arise such as, if $\Pi_{\veps}(\cdot \mid s_{obs})$ is different from the target posterior distribution, can it still be used in Bayesian inference? Or, since different summary statistics can produce different approximate posterior distributions, can one or more of these distributions be used to make statistical inferences?

\begin{figure}[h]
	\centering	
	  \subfloat[Cauchy data of sample size $n=50$.]{\includegraphics[height = 2.5in, width = .49\linewidth ]{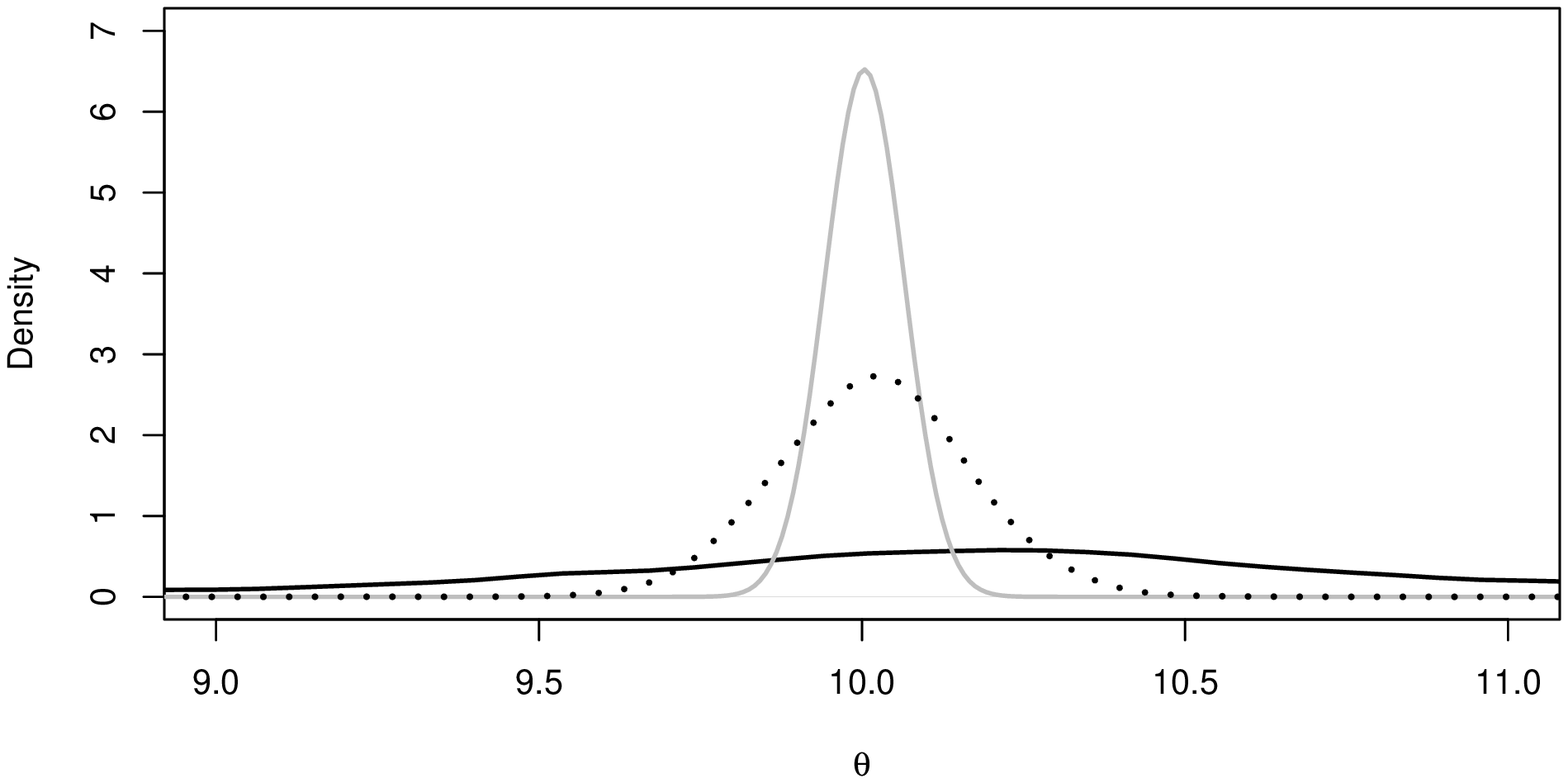}} \hfill
	  \subfloat[Cauchy data of sample size $n=5000$.]{\includegraphics[height = 2.5in, width = .49\linewidth ]{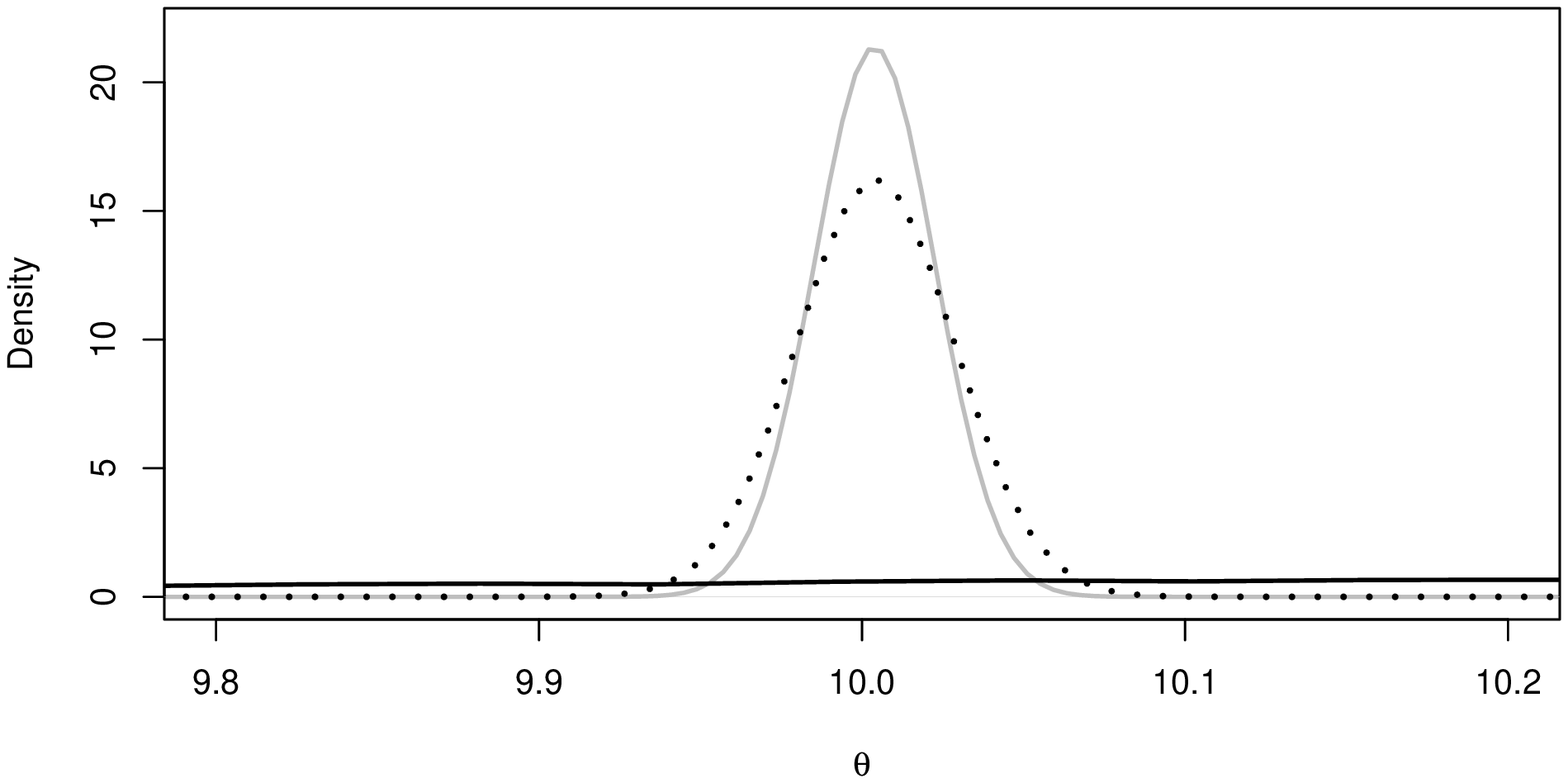}}
	\caption{{\it The three curves in each of the two plots are the target posterior (gray) and approximate Bayesian computed posteriors for data from a Cauchy distribution with known scale parameter for summary statistic $S_n = \bar{x}$ (solid black) and $S_n = \mbox{Median}(x)$ (dashed black). The prior density is a constant in $\mathbb{R}$.}
}
	\label{fig:ABC}
\end{figure}	
In this paper, we attempt to address these questions by instead re-framing Algorithm \ref{alg:rejABC} within a frequentist context, thus creating a more general  likelihood-free method based on confidence distribution theory. To this end, we introduce a new computational method called {\it approximate confidence-distribution computing}. 

\subsection{Approximate confidence distribution computing} 
When estimating an unknown parameter, we often desire that our estimators, whether point estimators or interval estimators, have certain properties such as unbiasedness or a certain coverage of the true parameter value in the long run. A confidence distribution is an extension of this tradition in that it is a distribution estimate (i.e., it uses a sample-dependent distribution function to estimate the target parameter) that satisfies certain desirable properties. Following \cite{Xie2013}, \cite{Schweder2016}, we define a confidence distribution as follows.  

{\mydef {\it A sample-dependent function on the parameter space is a {\sc confidence distribution} for a parameter $\theta$ if 1) For each given sample the function is a distribution function on the parameter space; 2) The function can provide confidence intervals/regions of all levels for $\theta$.} } 
 
\vspace{1.5mm}

A confidence distribution estimator has a similar appeal to a Bayesian posterior in that it is a distribution function carrying much information about the parameter. A confidence distribution however, is a frequentist notion which treats the parameter as a fixed, unknown quantity. It is not a distribution of the parameter; rather, it is a sample-dependent function used to estimate the parameter of interest, including to quantify the uncertainty of the estimation. 

The theoretical foundation for approximate confidence distribution computing relies upon the {\it frequentist coverage property} of confidence distributions. This is the property by which a confidence distribution is able to produce confidence intervals/regions for $\theta$ that contain this true parameter value, $\theta_0$, at any specified frequency.

We hope to demonstrate that the construction of approximate confidence distribution computing as a likelihood-free method provides one of many examples in which confidence distribution theory provides a useful inferential tool for a problem where a statistical method with desirable properties was previously unavailable. 
Furthermore, approximate confidence distribution computing provides a computational method with potential applications extending beyond the scope of Algorithm~\ref{alg:rejABC} and, as will be discussed later, it introduces some flexibility that can greatly decrease computing costs. 

Approximate confidence distribution computing proceeds in the same manner as Algorithm~\ref{alg:rejABC}, but no longer requires a prior assumption on $\theta$; instead, the user is free to select a data-dependent function, $r_{n}(\theta)$, from which potential parameter values will be generated. 
Specifically, the new algorithm proceeds as follows:
\noindent

\begin{algo}
	{(Accept-reject approximate confidence distribution computing) \label{alg:rejACC}} 
	\begin{tabbing}
		\quad 1. Simulate $\theta_{1},\ldots,\theta_{N}\sim r_{n}(\theta)$; \\
		\quad 2. and 3. are identical with steps 2 and 3 of Algorithm \ref{alg:rejABC}.
	\end{tabbing}
\end{algo} 

\noindent
The underlying distribution from which the accepted draws of $\theta$ are simulated is denoted by $Q_{\veps}(\theta\mid s_{\rm obs})$. We refer to $Q_{\veps}(\theta \mid s_{\rm obs})$ as an {\it approximate confidence distribution} and denote the corresponding density by $q_{\veps}(\theta\mid s_{\rm obs})$ as defined by replacing $\pi(\theta)$ in \eqref{ABC_approx_posterior} with $r_{n}(\theta)$:
\begin{align} 
 q_{\veps}(\theta\mid s_{\rm obs})=
\frac{\int_{ \cal S}r_{n}(\theta)f_n(s\mid\theta)K_{\veps}(s - s_{\rm obs})\,ds}{\int_{{\cal P} \times{\cal S}}r_{n}(\theta)f_n(s\mid\theta)K_{\veps}(s - s_{\rm obs})\,ds d\theta}, \label{ACC_approx_posterior}
\end{align}
 In this way, approximate Bayesian computing can be viewed as a special case of approximate confidence distribution computing with $r_{n}(\theta) = \pi(\theta)$. 

From a Bayesian perspective, one may view Algorithm \ref{alg:rejACC} as as an extension permitting the use of Algorithm \ref{alg:rejABC} in the presence of a data-dependent prior. However, there is another natural, frequentist interpretation that views the function $r_{n}(\theta)$ as an initial distribution estimate for $\theta$ and views Algorithm \ref{alg:rejACC} as a method to update this estimate in pursuit of a better-performing distribution estimate. The logic of this frequentist interpretation is analogous to any updating algorithm in point estimation (e.g., say, a Newton-Raphson algorithm or an Expectation-maximization algorithm), which requires an initial estimate and then updates in search for a better-performing estimate. One may ask if the data are thus being `doubly used'. The answer depends on how the initial distribution estimate is chosen. Under some constraints on $r_{n}(\theta)$, Algorithm \ref{alg:rejACC} can guarantee a distribution estimator for $\theta$ that satisfies the frequentist coverage property thus $q_{\veps}(\theta\mid s_{\rm obs})$ can be used to make inferences (e.g., deriving confidence intervals/regions, $p$-values, etc.), although Algorithm \ref{alg:rejACC} may not guarantee `estimation efficiency' (i.e., producing the tightest confidence sets for all levels) unless the summary statistic is sufficient. 

\subsection{Related work}  
Likelihood-free methods such as approximate Bayesian computing have existed for more than 20 years, but research regarding the theoretical properties of these methods is a newly active area, e.g. \cite{Li2016, Frazier2016}. Here we do not attempt to give a full review of all likelihood-free methods, but we acknowledge the existence of alternatives such as indirect inference, e.g. \cite{Creel2013, Gourieroux1993}. 

One of our theoretical results specifies conditions under which Algorithm \ref{alg:rejACC} produces an asymptotically normal confidence distribution. This result, presented in Section \ref{sec:largeSamp}, generalizes the work of \cite{Li2017} on the asymptotic normality of the approximate Bayesian computed posterior. 
%under the Bernstein-von Mises type convergence. 
However, in contrast to these papers, we are not concerned with viewing the result of Algorithm \ref{alg:rejACC} as an approximation to some posterior distribution, rather we focus on the properties and performance of this distribution inherited through its connection to confidence distributions. More importantly, the properties we develop here allow us to conduct  inference while guaranteeing the frequentist coverage property. Additionally, presented separately in Section \ref{sec:thms}, we specify general conditions under which Algorithm \ref{alg:rejACC} can be used to conduct frequentist inference that is beyond the Bernstein-von Mises type convergence, including exact inference that does not rely on any sort of asymptotic (large $n$) assumptions or normally distributed populations.  Aside from the errors of Monte-Carlo approximation and the choice of tolerance level, the exact inference from Algorithm \ref{alg:rejACC} ensures the targeted repetitive coverage rates and type-I errors.

The main goal of the paper is to present the idea that the continued study of likelihood-free methods would benefit from the incorporation of confidence distribution theory. To this end, and for the ease of presentation, we mainly focus on the basic accept-reject version of Algorithm \ref{alg:rejACC}, although we will compare the performance of Algorithm \ref{alg:rejACC} with a typical importance sampling approximate Bayesian computing method and also conclude that much of the existing work in the approximate Bayesian computation literature can also be applied to Algorithm \ref{alg:rejACC} to further improve upon its computational performance as discussed in Sections~\ref{sec:thms} and \ref{sec:discuss}.

\subsection{Notation}
Throughout the paper we will use the following notation. The observed data is $x_{\rm obs} \in \mathscr{X} \subset \mathbb{R}^n$, the summary statistic is a mapping $S_n: \mathscr{X} \rightarrow {\cal S} \subset \mathbb{R}^{d}$ and the observed summary statistic is $s_{\rm obs} = S_n(x_{\rm obs})$. The parameter of interest is $\theta \in \P \subset \mathbb{R}^p$ with $p \leq d \leq n$; i.e. the number of unknown parameters is no greater than the number of summary statistics and dimension of the summary statistic is no greater than the dimension of the data. If some function of $S_n$ is an estimator for $\theta$, we denote this function by $\hat{\theta}_S$. Any function of a particular observation, $s_{\rm obs}$, is therefore an estimate. Let $\theta_0$ represent the fixed, true value of the parameter $\theta$. 
Denote the approximate confidence distribution function by $Q_{\veps}(\theta \mid s_{\rm obs})$, its density function by $q_{\veps}(\theta\mid s_{\rm obs})$, and a random draw from this distribution by $\theta_{\rm ACC}$. Similarly, denote the approximate Bayesian computed posterior distribution by $\Pi_{\veps}(\theta \mid s_{\rm obs})$ and its density function by $\pi_{\veps}(\theta\mid s_{\rm obs})$.
%, and its random draw by $\theta_{\rm ABC}$.
Additionally, %\ST{for a series $z_n$, we use the notation that $z_n \approx a_n$, if there exists constants $m$ and $M$ such that $0<m<|z_n/a_n|<M<1$ as $n \rightarrow \infty$}, and 
for a real function $g(x)$, denote its gradient function at some $x=x_0$ by $D_x\{g(x_0)\}$; for simplicity and when it is clear from context, $x$ is omitted from $D_{x}$. %{\color{red} [Shouldn't this sentence be: ``$x$ is omitted from $D_{x}$"?]}

% [$D_{\theta}$ is not defined, so I guess it is $D_{x}$??]} 
%$D_{\theta}$.   

%%-------------------------------------------------------------------------------------------------------%%

\section{Establishing frequentist guarantees for Algorithm \ref{alg:rejACC}}
\label{sec:thms}
% If the randomness in the Monte-Carlo simulation from $Q_{\veps}$ matches that of the sampling population, then approximate confidence distribution computing can be used to help us answer inference questions with frequentist guarantees on performance.

In this section, we formally % derive this statement and
 establish conditions under which Algorithm \ref{alg:rejACC} can be used to produce confidence regions with guaranteed frequentist coverages at any level.  

To motivate our main theoretical result, we first consider the simple case where we have a scalar parameter, $\theta$, and $\htheta $ is a function that maps the summary statistic into the parameter space $\P$. % = (-\infty, \infty)$.  
Suppose further that the Monte-Carlo copy of $(\theta_{\rm ACC}-\htheta)\mid S_n =s_{\rm obs} $ and the sampling population copy of $(\htheta-\theta)\mid\theta=\theta_0$ have the same distribution: 
\begin{equation}
\label{eq:LR}
(\theta_{\rm ACC}-\htheta)\mid S_n = s_{\rm obs}\sim (\htheta-\theta)\mid\theta=\theta_0.
\end{equation}
Then, we can conduct inference for $\theta$ with a guaranteed frequentist standard of performance. On the left hand side of (\ref{eq:LR}), $\htheta$ is fixed given $s_{\rm obs}$ and the (conditional) probability measure is with respect to $\theta_{\rm ACC}$, meaning the randomness is due to the simulation conducted in Algorithm \ref{alg:rejACC}. Conversely, on the right hand side,  $\htheta$ is a random variable since the data is random for a given parameter $\theta_0$. 
That is, equation (\ref{eq:LR}) states that the `randomness' in $\theta_{\rm ACC}$ from the Monte-Carlo simulation match that in $\htheta$ of the sampling population.
This is very similar to the bootstrap central limit theorem that $n^{1/2}(\theta_{B}-\hat{\theta}_{S})\mid S_n = s_{\rm obs} \sim n^{1/2}(\hat{\theta}_{S}-\theta)\mid\theta=\theta_0$, as $n \to\infty$, where appropriate; cf, \cite{Singh1981} and \cite{Freedman1981}. There, the randomness on the left hand side is from the bootstrap estimator, $\theta_{B}$ given $S_n = s_{\rm obs}$, and the randomness on the right hand side is from the random sample of the sampling population.

Given (\ref{eq:LR}), let $G(t) = \text{pr}(\htheta - \theta \leq t \mid \theta=\theta_0)$. Then $\text{pr}^*(\theta_{\rm ACC} -  \htheta \leq t \mid S_{n} = s_{\rm obs} ) = G(t)$ where $\text{pr}^*(\cdot \mid S_{n} = s_{\rm obs}) $ refers to the probability measure on simulation given $ S_n = s_{\rm obs}$ corresponding to the left hand side of (\ref{eq:LR}). Define $H(t, s_{\rm obs})= \text{pr}^*(2\htheta - \theta_{\rm ACC} \leq t \mid S_{n} = s_{\rm obs} )$, a mapping from $\P \times {\cal S} \rightarrow (0,1)$. %\Minge{The notation $D$ here may be mixed with the $D$ on line 241... perhaps used a new notation?]}  
Conditional on $s_{\rm obs}$,  $H(t, s_{\rm obs})$ is a sample-dependent cumulative distribution function on $\P$; We use the shorthand $H_n(t)$ to denote $H(t, s_{\rm obs})$. The following statement Remark\ref{remk1} holds as proved in the appendix. In the remark, $H_n^{-1}(\alpha)$ is the quantile of  $H_n(\cdot)$, i.e., the solution of  $H_n(t) = \alpha$, and $\theta_{ACC, \alpha}$ is a quantile of $\theta_{\rm ACC}$, defined by $\text{pr}^*(\theta_{\rm ACC} \leq \theta_{ACC,\alpha}\mid S_{n}= s_{\rm obs} ) = \alpha$.

\begin{remark} \label{remk1}
	Under the setup above, $H_n(t)$ is a confidence distribution for $\theta$ and, for any $\alpha \in(0,1)$,  $(-\infty, H_n^{-1}(1-\alpha)] = (-\infty,  2\htheta - \theta_{ACC, \alpha}]$ is an $(1-\alpha)$-level confidence interval of $\theta$. 
% Here $H_n^{-1}(\alpha)$ is the quantile of  $H_n(\cdot)$, i.e., the solution of  $H_n(t) = \alpha$, and $\theta_{ACC, \alpha}$ is a quantile of $\theta_{\rm ACC}$, defined by $\text{pr}^*(\theta_{\rm ACC} \leq \theta_{ACC,\alpha}\mid S_{n}= s_{\rm obs} ) = \alpha$.
\end{remark}
\noindent% 

Now we introduce a key lemma that generalizes the argument above to a multidimensional parameter and a wider range of relationships between $S_n$ and $\theta_{\rm ACC}$. This lemma assumes a relationship between two mappings $V$ and $W: \P \times {\cal S} \rightarrow \mathbb{R}^k$, where $V(\cdot, S_n)$ is a function that acts on the parameter space $\P$, given $S_n = s_{\rm obs}$, and $W(\theta, \cdot)$ is a function that acts on the space of the summary statistic ${\cal S} \subset \mathbb{R}^{d}$, given $\theta=\theta_0$. For example, in the one dimensional argument above, $V(t_1, t_2) = - W(t_1, t_2) = t_1 -\hat{\theta}(t_2)$, where $\hat{\theta}$ is a function of the summary statistic. 
%; however, we may also wish to consider other non-linear mappings. 
Corresponding to (\ref{eq:LR}), we require a matching equation: $V(\theta_{\rm ACC}, S_n)\mid S_n=s_{\rm obs} \sim W( \theta, S_n)\mid\theta = \theta_0 $. Formally, for general mappings $V$ and $W$, we consider Condition~\ref{cond:ACC_interval} below. In the condition, $\delta_\veps \to 0$, as $\veps \to 0$. Here, $\veps$ is the tolerance level for the matching of simulated $s^{(i)}$ and $s_{\rm obs}$ in step 3 of Algorithm \ref{alg:rejACC}, and it may or may not depend on the sample size $n$.
\begin{condition} \label{cond:ACC_interval}
For $\mathfrak{B}$ a Borel set on $\mathbb{R}^k$, % $\veps\rightarrow0$ implies
	\[
	\sup_{A\in\mathfrak{B}}\left\| \text{\rm pr}^*\{V(\theta_{\rm ACC}, S_n) \in A \mid  S_n= s_{\rm obs} \}-  \text{\rm pr}\{W( \theta, S_n) \in A \mid \theta = \theta_0 \}
	\right\| = o_p(\delta_\veps), 
	\] where $\text{\rm pr}^*(\cdot \mid s_{\rm obs}) $ refers to the probability measure on the simulation given $ S_n = s_{\rm obs}$ and $\text{\rm pr}(\cdot \mid \theta_{0})$ is the probability measure on the data before it is observed. 
\end{condition}

% \noindent 
For a given $s_{\rm obs}$ and $\alpha \in (0,1)$,  define a set $A_{1 - \alpha} \subset \mathbb{R}^k$  such that, 
\begin{equation}
\label{eq:A} 
\text{pr}^*\{V(\theta_{\rm ACC}, S_n) \in A_{1 - \alpha} \mid S_n = s_{\rm obs}\} = (1 - \alpha) + o(\delta'), 
\end{equation}  
where $\delta' > 0$ is a pre-selected small positive precision number.
Condition \ref{cond:ACC_interval} implies that 
\begin{equation}
\label{eq:AB}
\Gamma_{1 - \alpha}(s_{\rm obs}) \stackrel{\hbox{\tiny def}} = \{\theta: W(\theta, s_{\rm obs}) \in A_{1 - \alpha} \} \subset {\cal P}
\end{equation}
is a level $(1 - \alpha) 100\%$ confidence region for $\theta_0$.
We summarize this in the following lemma which is proved in the appendix.
Note that in the next lemma, $\delta = \max\{\delta_\veps,\delta'\}$ and there are no requirements on the sufficiency of the summary statistic $S_n$ in the lemma. However, if the selected summary statistic happens to be sufficient, then inference based on the results of Algorithm \ref{alg:rejACC} is equivalent to maximum likelihood inference.  
%\Minge{[With the notation changes, please modify the proof accordingly]}

\begin{lemma}  \label{main1}
	Suppose that there exist mappings $V$ and $W: \P \times {\cal S} \rightarrow \mathbb{R}^k$ such that Condition \ref{cond:ACC_interval} holds.
	Then, $\text{\rm pr}\{\theta \in \Gamma_{1 - \alpha}(S_n) \mid \theta = \theta_0 \} 
	= (1 - \alpha) + o_p(\delta)$. If further Condition \ref{cond:ACC_interval} holds almost surely, then $\text{\rm pr}\{\theta \in \Gamma_{1 - \alpha}(S_n) \mid \theta = \theta_0 \} = (1 - \alpha) + o(\delta)$, almost surely.
\end{lemma}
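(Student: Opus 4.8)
\textbf{Proof plan for Lemma~\ref{main1}.}
The plan is to unwind the definitions of $A_{1-\alpha}$ and $\Gamma_{1-\alpha}(s_{\rm obs})$ and then feed them into Condition~\ref{cond:ACC_interval} directly. The target event $\{\theta_0 \in \Gamma_{1-\alpha}(S_n)\}$ is, by the defining equation~\eqref{eq:AB}, exactly the event $\{W(\theta_0, S_n) \in A_{1-\alpha}\}$, where here $A_{1-\alpha}$ is the Borel set constructed from $s_{\rm obs}=S_n(x_{\rm obs})$ according to~\eqref{eq:A}. So the first step is to write
\[
\text{pr}\{\theta_0 \in \Gamma_{1-\alpha}(S_n) \mid \theta = \theta_0\} = \text{pr}\{W(\theta, S_n) \in A_{1-\alpha} \mid \theta = \theta_0\},
\]
which reduces the claim to controlling the right-hand side. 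The second step is to invoke Condition~\ref{cond:ACC_interval} with $A = A_{1-\alpha}$, which gives
\[
\text{pr}\{W(\theta, S_n) \in A_{1-\alpha} \mid \theta = \theta_0\} = \text{pr}^*\{V(\theta_{\rm ACC}, S_n) \in A_{1-\alpha} \mid S_n = s_{\rm obs}\} + o_p(\delta_\veps).
\]
The third step is to substitute the defining property~\eqref{eq:A} of $A_{1-\alpha}$, namely that the first term on the right equals $(1-\alpha) + o(\delta')$, and then combine the two error terms using $\delta = \max\{\delta_\veps, \delta'\}$ to conclude $\text{pr}\{\theta_0 \in \Gamma_{1-\alpha}(S_n) \mid \theta = \theta_0\} = (1-\alpha) + o_p(\delta)$. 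For the almost-sure refinement, the same three steps apply verbatim, except that the $o_p(\delta_\veps)$ from Condition~\ref{cond:ACC_interval} is replaced by an $o(\delta_\veps)$ holding almost surely, so the final error is $o(\delta)$ almost surely.

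The one subtlety worth being careful about — and the step I expect to be the main obstacle, modest though it is — is the measurability/uniformity bookkeeping in the first step. The set $A_{1-\alpha}$ is itself a function of the realized $s_{\rm obs}$, so when we pass back to $\text{pr}\{\cdot \mid \theta=\theta_0\}$ we are evaluating a random set at a random argument; one has to make sure that $A_{1-\alpha}$, viewed as a Borel-set-valued map of $S_n$, is such that $\{W(\theta_0, S_n) \in A_{1-\alpha}(S_n)\}$ is a genuine event and that the supremum over $\mathfrak{B}$ in Condition~\ref{cond:ACC_interval} is what lets us plug in this data-dependent set rather than a fixed one. This is precisely why Condition~\ref{cond:ACC_interval} is stated as a supremum over all Borel sets rather than pointwise for each fixed $A$: the uniformity absorbs the data-dependence of $A_{1-\alpha}$, and the $o_p$ (respectively a.s.\ $o$) rate is inherited without change. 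After that observation the argument is a two-line chain of equalities, so the proof is short; the substance of the lemma is really carried by Condition~\ref{cond:ACC_interval} and the construction~\eqref{eq:A}--\eqref{eq:AB}, not by the deduction itself.
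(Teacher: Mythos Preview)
Your proposal is correct and matches the paper's own proof essentially line for line: the paper also rewrites $\{\theta_0\in\Gamma_{1-\alpha}(S_n)\}$ as $\{W(\theta_0,S_n)\in A_{1-\alpha}\}$, applies a triangle inequality to split off the $o(\delta')$ term from~\eqref{eq:A} and the $o_p(\delta_\veps)$ term from Condition~\ref{cond:ACC_interval}, and then combines via $\delta=\max\{\delta_\veps,\delta'\}$. Your additional remark that the supremum over $\mathfrak{B}$ in Condition~\ref{cond:ACC_interval} is what licenses plugging in the data-dependent set $A_{1-\alpha}$ is a point the paper leaves implicit.
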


Often, $\delta'$ in (\ref{eq:A}) is designed to control Monte-Carlo approximation error, thus whether or not Lemma~\ref{main1} is a large sample result depends only on whether or not we require $\veps \to 0$ at a certain rate of the sample size $n$. In the latter part of this section, we will consider a case of Lemma \ref{main1} that is sample-size independent. In this case, aside from the errors of Monte-Carlo approximation and the choice of tolerance level, Algorithm \ref{alg:rejACC} provides an {\it exact} inference that does not rely on large sample asymptotics. Later, in Section \ref{sec:largeSamp}, we extend the large-sample Bernstein-von Mises theory to Algorithm \ref{alg:rejACC}, using a tolerance $\veps$ that depends on $n$.  

Before we move on to verify Condition \ref{cond:ACC_interval} for different cases, we first relate equation (\ref{eq:A}) to $\theta_{\rm ACC}$ samples from $Q_{\veps}(\cdot \mid s_{\rm obs})$. Suppose $\theta_{{\rm ACC}, i}$, $i = 1, \ldots, N$,  are $m$ Monte-Carlo copies of $\theta_{\rm ACC}$. Let ${ v}_i =    V(\theta_{{\rm ACC}, i},  s_{\rm obs})$. The set $A_{1 - \alpha}$ can typically be a $(1-\alpha)100\%$ contour set of $\{{ v}_1, \ldots, { v}_m\}$ satisfying $o(\delta') = o(m^{-1/2})$. For example, we can directly use ${ v}_1, \ldots, { v}_m$ to construct a  $100(1-\alpha)\%$ depth contour as $A_{1 - \alpha} = \{\theta : (1/m)\sum_{i=1}^{m}\mathbb{I}{\{\hat D({ v}_i)< \hat D(\theta)\}} \geq \alpha\}$, where $\hat D(\cdot)$ is an empirical depth function on~$\P$ computed based on the empirical distribution of $\{{ v}_1, \ldots, { v}_m\}$. See, e.g., \cite{Serfling2002} and \cite{Liu1999} for the development of data depth and depth contours in nonparametric multivariate analysis.  In the special case where $k=1$, by defining $\hat q_{\alpha} = { v}_{[m \alpha]}$, the $[m \alpha]$th largest ${ v}_1, \ldots, { v}_m$, a $(1-\alpha)100\%$ confidence region for $\theta_0$ can then be  constructed as $\Gamma_{1 - \alpha}(s_{\rm obs})  = \{ \theta :  \hat q_{\alpha/2}  \leq W(\theta, s_{\rm obs}) \leq \hat q_{1- \alpha/2} \}$ or  $\Gamma_{1 - \alpha}(s_{\rm obs})  = \{ \theta :  W(\theta, s_{\rm obs}) \leq \hat q_{1 - \alpha} \}$.  

We also remark that the existing literature on likelihood-free methods typically relies upon obtaining a ``nearly sufficient'' summary statistic to justify inferential results; see e.g., \cite{Joyce2008}. In this paper however, we explore guaranteed frequentist properties of Algorithm \ref{alg:rejACC} that hold without regard to a ``sufficient enough" summary statistic. However, if the summary statistic happens to be sufficient, then an appropriate choice of the rough initial estimate, $r_{n}(\theta)$, means that inference based on the resulting distribution, $Q_{\veps}(\cdot \mid s_{obs})$, is also efficient.

	To end this section, we explore a special case of Algorithm \ref{alg:rejACC} where the mappings $V$ and $W$ correspond an approximate pivotal statistic. Here, we call 
	a mapping $T = T(\theta, S_{n})$ from $\P \times {\cal S} \to \mathbb{R}^{d}$  an {\it approximate pivot statistic}, if
	\begin{equation}\label{eq:apivot}
	\text{pr}\{T(\theta, S_{n}) \in A \mid \theta = \theta_0\} =  \int_{t \in A}
	g(t) d t \,  \{1 + o(\delta^{''})\}, 
	% \quad \hbox{for $A \subset \mathbb{R}^{d}$,}
	\end{equation}
	where $g(t)$ is a density function that is free of the parameter $\theta$ and $A \subset \mathbb{R}^{d}$ is any Borel set. Also, $\delta^{''}$ is either zero or a small number (tending to zero) that may or may not depend on the sample size $n$. % and the parameter $\btheta$. 
	The usual pivotal cases are special examples of such. Other examples, including that to be discussed in Section~\ref{sec:largeSamp}, involve large sample asymptotics with $\delta^{''}$ is a function of $n$, in particular, $\delta^{''} \to 0$ as $n \to 0$. However, there are also cases where $\delta^{''}$ does not involve the sample size $n$. 
	For example, suppose $S_n | \theta = \lambda \sim \text{Poisson}(\lambda)$. Then, $T( \lambda, S_{n}) = (S_n -  \lambda)/\sqrt{ \lambda}$ is an approximate pivot when $\lambda$ is large. In this case, the  density function is $\phi(t) \{1 + o(\lambda^{-1})\}$, where $\phi(t)$ the density function of the standard normal distribution \cite[]{Cheng1949}.
	
	We have the following theorem for approximate pivot statistics. 
	% Theorem \ref{thm:pivot} . 
	A proof is given in the appendix. 
	\noindent
	
	\begin{theorem} \label{thm:pivot} 
		Suppose $T = T(\theta, S_{n})$ is an approximate pivot statistic that is differentiable with respect to the summary statistic. 
		% Assume $T(\cdot, \cdot)$ is differentiable.
		% with $\frac{\partial}{\partial s}  T(s, \btheta) = a^{-1}(\btheta) \frac{\partial}{\partial \theta}  T(s, \btheta)$ for some function $a(\cdot)$ that is free of $s$.  
		% Suppose, 
		Assume that, for given $t$ and $\btheta$,  $s_{t, \btheta}$ is solution to the equation $t = T(\theta, s)$ and 
		\begin{equation}\label{eq:req}
		\hbox{$\int  r_{n}(\theta) K_{ \veps}\left(s_{t, \theta}-s_{\rm obs}\right)   d \theta = C\{1+ o(\delta^{'}_{\veps})\}$, where $C$ is a constant free of $t$,}
		\end{equation}
		Here,  $r_{n}(\theta)$, $K(\cdot)$, and  $\veps$ are as specified in Algorithm \ref{alg:rejACC}, and $\delta^{'}_{\veps} \to 0$ as 
		$\veps \to 0$. 
		Then, Condition \ref{cond:ACC_interval} holds almost surely, for $V(\theta, S_{n}) = W(\theta,S_{n}) = T( \theta, S_{n})$ and $\delta = \max\{\delta^{''}, \delta^{'}_{\veps}\}$.  
		Furthermore, by Lemma~\ref{main1} and for observed $S_n = s_{\rm obs}$, $\Gamma_{1 - \alpha}(s_{\rm obs})$ defined in~(\ref{eq:AB}) is a level  $(1 - \alpha)100\%$ confidence region with $\text{pr}\{\theta \in \Gamma_{1 - \alpha}(S_n) \mid \theta= \btheta_0 \} = (1 - \alpha) + o(\delta)$, almost surely. 
	\end{theorem}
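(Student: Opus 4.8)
The plan is to evaluate the left-hand side of Condition~\ref{cond:ACC_interval}, specialized to $V=W=T$, directly from the sampling scheme of Algorithm~\ref{alg:rejACC}, and to reduce it to $\int_A g(t)\,dt$ by two successive changes of variables: first, in the summary-statistic coordinate, trading $f_n(s\mid\theta)$ for the pivot density $g$ via \eqref{eq:apivot}; second, in the parameter coordinate, using the normalization hypothesis \eqref{eq:req} to make the surviving $\theta$-integral a constant. The right-hand side of Condition~\ref{cond:ACC_interval} is $\text{pr}\{T(\theta,S_n)\in A\mid\theta=\btheta_0\}=\int_A g(t)\,dt+o(\delta'')$ directly from \eqref{eq:apivot}, so the two sides will then agree to order $o(\delta)$ uniformly over $A$, and the almost-sure branch of Lemma~\ref{main1} delivers the asserted coverage of $\Gamma_{1-\alpha}(s_{\rm obs})$.

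\textbf{Setting up and the first change of variables.} A pair $(\theta,s)$ accepted in Algorithm~\ref{alg:rejACC} has joint density proportional to $r_{n}(\theta)f_n(s\mid\theta)K_{\veps}(s-s_{\rm obs})$, so that
\[
\text{pr}^*\{T(\theta_{\rm ACC},S_n)\in A\mid S_n=s_{\rm obs}\}=\frac{\int_{\P}\int_{{\cal S}}\mathbb{I}\{T(\theta,s)\in A\}\,r_{n}(\theta)f_n(s\mid\theta)K_{\veps}(s-s_{\rm obs})\,ds\,d\theta}{\int_{\P}\int_{{\cal S}}r_{n}(\theta)f_n(s\mid\theta)K_{\veps}(s-s_{\rm obs})\,ds\,d\theta}.
\]
For each fixed $\theta$, the hypotheses that $T(\theta,\cdot)$ is differentiable and that $t=T(\theta,s)$ has the unique solution $s=s_{t,\theta}$ make $s\mapsto t=T(\theta,s)$ an admissible change of variables; under it $f_n(s\mid\theta)\,ds$ becomes $g_\theta(t)\,dt$, where $g_\theta$ denotes the density of $T(\theta,S_n)$, while $\mathbb{I}\{T(\theta,s)\in A\}$ becomes $\mathbb{I}\{t\in A\}$. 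Hence both numerator and denominator turn into integrals of $r_{n}(\theta)g_\theta(t)K_{\veps}(s_{t,\theta}-s_{\rm obs})$ over $(\theta,t)$, the numerator being the one restricted to $t\in A$.

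\textbf{Reducing and concluding.} Next I would invoke \eqref{eq:apivot} in its density form, $g_\theta(t)=g(t)\{1+o(\delta'')\}$ with error uniform in $t$ and with $g$ free of $\theta$, and absorb the resulting $\{1+o(\delta'')\}$ factors. Then Fubini together with hypothesis \eqref{eq:req} replaces the inner $\theta$-integral $\int r_{n}(\theta)K_{\veps}(s_{t,\theta}-s_{\rm obs})\,d\theta$ by $C\{1+o(\delta^{'}_{\veps})\}$, a quantity free of $t$, in numerator and denominator alike; the factors $C$ cancel in the ratio, leaving
\[
\text{pr}^*\{T(\theta_{\rm ACC},S_n)\in A\mid S_n=s_{\rm obs}\}=\frac{\int_A g(t)\,dt}{\int g(t)\,dt}+o(\delta)=\int_A g(t)\,dt+o(\delta)
\]
uniformly over $A$, with $\delta=\max\{\delta'',\delta^{'}_{\veps}\}$ and $\int g=1$ up to an $o(\delta'')$ term (take $A=\mathbb{R}^{d}$ in \eqref{eq:apivot}). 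Since the right-hand side of Condition~\ref{cond:ACC_interval} equals $\int_A g(t)\,dt+o(\delta'')$, the two sides differ by $o(\delta)$ uniformly over $A$, which is Condition~\ref{cond:ACC_interval} with $V=W=T$. Because \eqref{eq:apivot} and \eqref{eq:req} are deterministic statements about the fixed configuration $s_{\rm obs}$ rather than statements holding only in probability over the data, this bound holds for every realization, i.e. almost surely; applying the almost-sure branch of Lemma~\ref{main1} then gives $\text{pr}\{\theta\in\Gamma_{1-\alpha}(S_n)\mid\theta=\btheta_0\}=(1-\alpha)+o(\delta)$, almost surely.

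\textbf{Main obstacle.} The delicate point will be the use of \eqref{eq:apivot} in the step above: as stated it controls probabilities of Borel sets with a \emph{relative} error $o(\delta'')$, whereas pulling $g_\theta(t)=g(t)\{1+o(\delta'')\}$ through an integral against the concentrated kernel $K_{\veps}$ (whose sup-norm is of order $\veps^{-d}$) needs this relative error to be uniform at the density level. One therefore has to argue that the density form is genuinely what an approximate pivot supplies --- as it is in the Poisson example, where the density of $(S_n-\lambda)/\sqrt{\lambda}$ is $\phi(t)\{1+o(\lambda^{-1})\}$ --- or else recast the argument entirely in terms of the measures $g_\theta(t)\,dt$. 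A lesser wrinkle is that \eqref{eq:req} must be in force for all $t$ in the range of $T(\cdot,s_{\rm obs})$, which in effect forces the map $\theta\mapsto s_{t,\theta}$ to be non-degenerate with compatible dimensions, and that $r_{n}$ be a nonnegative integrable weight so the displayed ratios are well defined; both are built into the statement's hypotheses.
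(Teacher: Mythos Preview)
Your proposal is correct and follows essentially the same route as the paper: write the accepted-pair density as $r_n(\theta)f_n(s\mid\theta)K_\veps(s-s_{\rm obs})$, change variables $s\mapsto t=T(\theta,s)$ with its Jacobian to trade $f_n$ for the pivot density $g$ (up to $1+o(\delta'')$), then integrate out $\theta$ using \eqref{eq:req} to obtain $T'\mid s_{\rm obs}\sim g(t)\{1+o(\delta'')\}\{1+o(\delta'_\veps)\}$, and compare with \eqref{eq:apivot} for $W$. Your flagged obstacle---that \eqref{eq:apivot} is stated at the level of probabilities of sets but is used at the density level---is well spotted and is in fact glossed over in the paper as well; the paper simply asserts the Jacobian identity $f_n(s_{t,\theta}\mid\theta)|T^{(1)}(\theta,s_{t,\theta})|^{-1}=g(t)\{1+o(\delta'')\}$ without further comment.
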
 
	
	Location and scale families contain natural pivot statistics. We verify requirement (\ref{eq:req}) for the location and scale families, which leads to the following corollary. A proof of the corollary is also given in the appendix. 

	\begin{corollary}\label{cor:pivot}
		Assume $\hat{\mu}_S$ and $\hat{\sigma}_S$ are point estimators for location and scale parameters $\mu$ and $\sigma$, respectively.\\
		{ \small Part 1} 
		Suppose $\hat{\mu}_S \sim g_1(\hat{\mu}_S - \mu)$. 
		If $r_{n}(\mu) \propto 1$, then, for any $u$,  
		$$|\text{pr}^*(\mu_{\rm ACC} - \hat{\mu}_S  \leq u\mid \hat{\mu}_{\rm obs})-  \text{pr}(\hat{\mu}_S  - \mu \leq u \mid\mu=\mu_0)| = o(1), \quad\text{almost surely.}$$ 
		{ \small Part 2} 
		Suppose $\hat{\sigma}_S \sim g_2(\hat{\sigma}_S/ \sigma )/\sigma$. 
		If $r_{n}(\sigma) \propto 1 / \sigma$, then, for any $v>0$, 
		$$\left|\text{pr}^*\left(\frac{\sigma_{\rm ACC}}{\hat{\sigma}_S}  \leq v \mid \hat{\sigma}_{\rm obs}\right)-  \text{pr}\left(\frac{\hat{\sigma}_S}{\sigma} \leq v \mid\sigma=\sigma_0 \right) \right| = o(1), \quad\text{almost surely.}$$ 
		{ \small Part 3} 
		Suppose $\hat{\mu}_S \sim g_1\{(\hat{\mu}_S - \mu)/ \sigma \}/\sigma $ and $\hat{\sigma}_S \sim g_2\left( \hat{\sigma}_S/ \sigma\right)/\sigma$ are independent.
		If $r_{n}(\mu, \sigma) \propto 1 / \sigma$, then, for any $u$ and any $v > 0$,  
		\bes 
		&& \left|\text{pr}^*\left(\mu_{\rm ACC} - \hat{\mu}_S  \leq u, \frac{\sigma_{\rm ACC}}{\hat{\sigma}_S}  \leq v \mid \hat{\mu}_{\rm obs},  \hat{\sigma}_{\rm obs}\right)-\right.  \\
		&&\hspace{1cm}\left.\text{pr}\left(\mu_{\rm ACC} - \hat{\mu}_S  \leq u, \frac{\hat{\sigma}_S}{\sigma} \leq v \mid\mu=\mu_0, \sigma=\sigma_0 \right)\right|
		= o(1), \quad\text{almost surely.}
		\ees 
		Furthermore, we may derive
		$ H_1(\hat{\mu}_S, x) = 1 - \int_{-\infty}^{\hat{\mu}_S-x}g_1(w)\,dw $, a confidence distribution for $\mu$ induced by $(\hat{\mu}_S-\mu)$ given $\mu=\mu_0$, or
		$H_2(\hat{\sigma}_S^2,x) = 1 - \int_{0}^{\hat{\sigma}_S^2/ x} g_2(w)dw$, 
		a confidence distribution for $\sigma^2$ induced by $\hat{\sigma}_S^2/ \sigma^2$ given $\sigma = \sigma_0$.
	\end{corollary}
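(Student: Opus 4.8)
The plan is to obtain every part of Corollary~\ref{cor:pivot} as an application of Theorem~\ref{thm:pivot}. In each family I would (i) exhibit the obvious pivotal quantity $T$, (ii) check that it is an exact approximate pivot (so $\delta'' = 0$) and differentiable in the summary statistic, (iii) verify the invariance requirement~(\ref{eq:req}) for the prescribed choice of $r_{n}(\theta)$, and then (iv) invoke Theorem~\ref{thm:pivot} to get Condition~\ref{cond:ACC_interval} almost surely with $V = W = T$; the displayed $o(1)$ identities are then just Condition~\ref{cond:ACC_interval} read off at the appropriate one-sided (in Part~3, rectangular) Borel sets. Step~(iii) is where all the work is, and it amounts to recognising that each prescribed $r_{n}$ is the left-Haar density of the group generating the family.

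Concretely: in Part~1 take $T(\mu,S_{n}) = \hat\mu_S - \mu$, which has density $g_1$ free of $\mu$; the equation $t = T(\mu,s)$ is solved by $s_{t,\mu}$ with $\hat\mu(s_{t,\mu}) = t + \mu$, and the substitution $\mu \mapsto w$, $w$ the argument of $K_\veps$, is a unit-Jacobian translation, so with $r_{n}(\mu) \propto 1$ the integral in~(\ref{eq:req}) becomes $\int K_\veps(w)\,dw$, a constant free of $t$. In Part~2 take $T(\sigma,S_{n}) = \hat\sigma_S/\sigma$ (density $g_2$, free of $\sigma$); solving $t = T(\sigma,s)$ and substituting $w = t\sigma$ one gets $d\sigma/\sigma = dw/w$, so the weight $r_{n}(\sigma) \propto 1/\sigma$ turns the integral in~(\ref{eq:req}) into $\int w^{-1} K_\veps(w - \hat\sigma_{\rm obs})\,dw$, again free of $t$ (and finite a.s. once $\veps$ is small enough that $K_\veps(\cdot - \hat\sigma_{\rm obs})$ is supported away from the origin). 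In Part~3 take the two-dimensional pivot $T(\mu,\sigma,S_{n}) = \big((\hat\mu_S - \mu)/\sigma,\ \hat\sigma_S/\sigma\big)$, whose law is $g_1 \otimes g_2$ by the assumed independence; the affine reparametrization of $(\mu,\sigma)$ that straightens both constraints has Jacobian which, combined with $r_{n}(\mu,\sigma) \propto 1/\sigma$, removes all dependence of the integral in~(\ref{eq:req}) on $(t_1,t_2)$. In each case $\delta = \max\{\delta'',\delta'_\veps\} \to 0$, so Theorem~\ref{thm:pivot} gives the stated almost-sure $o(1)$ conclusions. The final sentence of the corollary is then immediate: $H_1(\hat\mu_S,\cdot)$ and $H_2(\hat\sigma_S^2,\cdot)$ are exactly the distributions that the pivots $\hat\mu_S - \mu$ and $\hat\sigma_S^2/\sigma^2$ induce on the parameter by inversion, so they are genuine sample-dependent distribution functions delivering confidence sets of every level, by the pivotal property exactly as in Remark~\ref{remk1}; and by Parts~1--3 they coincide with the $\veps \to 0$ limit of $Q_\veps(\cdot \mid s_{\rm obs})$.

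The main obstacle is the verification of~(\ref{eq:req}): one must identify, in each case, the change of variables that simultaneously straightens the defining equation $t = T(\theta,s)$ and, together with $r_{n}$, collapses $\int r_{n}(\theta) K_\veps(s_{t,\theta} - s_{\rm obs})\,d\theta$ to a quantity free of $t$, and then confirm that this quantity is well defined --- most delicate in the scale case, where the transformed integrand is $w^{-1} K_\veps(w - \hat\sigma_{\rm obs})$ and the kernel support must be kept off the origin, and in the joint case, where the two-dimensional Jacobian must be matched exactly against the $1/\sigma$ weight. Everything downstream of~(\ref{eq:req}) --- translating Condition~\ref{cond:ACC_interval} into the displayed cumulative-distribution-function statements, and the attendant reconciliation of signs and coordinates --- is routine.
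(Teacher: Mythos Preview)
Your proposal is correct and follows essentially the same route as the paper: verify the invariance requirement~(\ref{eq:req}) via the Haar-type change of variables appropriate to each family, then invoke Theorem~\ref{thm:pivot}. The paper writes out only Part~2 explicitly (with the equivalent substitution $u = t\sigma - s_{\rm obs}$ in place of your $w = t\sigma$) and declares Parts~1 and~3 similar; your more explicit treatment of all three parts, and your side remark on keeping the kernel support away from the origin in the scale case, are consistent with the paper's argument.
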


	Note that Theorem \ref{thm:pivot} and Corollary \ref{cor:pivot} cover some finite sample examples that do not require $n\rightarrow\infty$, one of which is illustrated in Figure~\ref{fig:ABC}. Specifically, Corollary \ref{cor:pivot} Part 1 suggests that the ABC posteriors obtained in the Cauchy example in Figure~\ref{fig:ABC}, using either the sample mean or sample median as the summary statistic, are both confidence distributions. Thus, they both are `distribution estimators' that can be utilized to make inference. Both are not efficient, and the one by the sample median is more efficient than the one by the sample mean (in terms of having shorted confidence intervals or a higher power level-$\alpha$ test). This development represents a departure from the typical asymptotic arguments and permits the use of Algorithm \ref{alg:rejACC} in forming confidence intervals/regions with guaranteed frequentist coverages even when $n$ is finite. 
	% Specifically, when we utilize an exact pivot, the only source of approximation in the inference resulting from Algorithm \ref{alg:rejACC} is due to the computational requirements on $\veps$. 

The next section considers the case in which the tolerance level $\veps$ does depend on the sample size $n$. We will now denote $\veps$ by $\veps_n$ and study the large sample performance of the proposed approximate confidence distribution computing method.

%-------------------------------------------------------------------------------------------%
\section{Frequentist Coverage of Algorithm \ref{alg:rejACC} for Large Samples} 
\label{sec:largeSamp}
\subsection{Bernstein-von Mises theorem for Algorithm \ref{alg:rejACC}}
For Algorithm \ref{alg:rejABC}, Condition \ref{cond:ACC_interval} holds as $n\rightarrow\infty$ by the Bernstein-von Mises type convergence of $\pi_{\veps}(\theta\mid s_{\rm obs})$ \cite[]{Li2016} and selecting $\veps_n$ decreasing to zero. Roughly speaking, the distribution of a properly scaled draw from $\Pi_{\veps}(\theta \mid s_{\rm obs})$ and the distribution of the corresponding expectation (before the data is observed) are asymptotically the same. Therefore, the development in Section~\ref{sec:thms}, a confidence region with asymptotically correct coverage can be constructed using a sample from Algorithm \ref{alg:rejABC}. 

Here we show that Condition \ref{cond:ACC_interval} also holds for the more general Algorithm \ref{alg:rejACC} where $r_{n}(\theta)$ may depend upon the data. The results are based on the same set of conditions as those in \cite{Li2016}. The key condition is a central limit theorem of the summary statistic: for all $\theta$ in a neighborhood of $\theta_0$,
\[
a_{n}\{\bS_{n}- \eta(\btheta)\}\rightarrow N\{0,A(\btheta)\},
\]
in distribution as $n\rightarrow\infty$, together with requirement on the identifiability of $\theta_0$ through $\eta(\theta)$ and regulatory requirements of $A(\theta)$. This condition is denoted by Condition \ref{sum_conv} in the supplementary materials. For convenience, the set of conditions in \cite{Li2016} is given in the supplementary materials. Additionally, some regulatory conditions for $r_{n}(\theta)$ are listed below. 

\begin{condition} \label{par_true}
There exists some $\delta_0 > 0$ such that $\mathcal{P}_0 = \{\theta: \| \theta-\theta_0\|  < \delta_0  \} \subset \mathcal{P},$ $r_{n}(\theta) \in C^2(\mathcal{P}_0)$, and $r_{n}(\theta_0)>0$.
\end{condition}

\begin{condition} \label{initial_upper}
There exists a sequence $\{\tau_{n}\}$ and $\delta>0$, such that $\tau_{n}=o(a_n)$ and $\sup_{\btheta\in {\cal P}_{0}}\tau_{n}^{-p}r_{n}(\btheta)=O_{p}(1)$.
\end{condition}

\begin{condition} \label{initial_lower}
%It holds that %$r_{n}(\btheta)\in C^{1}(\mathcal{P}_{0})$ %%this is redundant 
%$\tau_{n}^{-p}r_{n}(\btheta_{0})=\Theta_{p}(1)$.
There exists constants $m$, $M$ such that $0 < m <\mid \tau_{n}^{-p}r_{n}(\btheta_{0})\mid < M < \infty$.
\end{condition}

\begin{condition} \label{initial_gradient}
It holds that $\sup_{\btheta\in\mathbb{R}^{p}}\tau_{n}^{-1} D\{\tau_{n}^{-p}r_{n}(\btheta)\}=O_{p}(1)$.
\end{condition}

%In the above, the notation for a series $x_n = \Theta(a_n)$ means that there exists some constants $m$ and $M$ such that $0 < m <\mid x_n/a_n \mid < M < \infty$. 
Condition \ref{initial_upper} and \ref{initial_lower} above essentially requires $r_{n}(\theta)$ to be more dispersed than the {\it s-}likelihood for within a compact set by requiring that $r_{n}(\theta)$ converges to a point mass more slowly than $f_{n}(\theta \mid s_{\rm obs})$. Condition \ref{initial_gradient} requires the gradient of the standardized $r_{n}(\theta)$ to converge with rate $\tau_n$. These are relatively weak conditions and can be satisfied by, e.g., $r_{n}(\btheta)$ satisfying local asymptotic normality. We have the following theorem with the proof provided in the appendix.
 Note that, in the theorem, $\theta_{\veps}(s_{\rm obs})$  is an estimate for $\theta$, whereas $\theta_{\veps}(S_{n})$ is an estimator; when clear, we shorten the notation of both to $\theta_{\veps}$.

\begin{theorem} \label{thm:ACC_limit_small_bandwidth}
	Assume $r_{n}(\btheta)$ satisfies Condition \ref{par_true}--\ref{initial_gradient} and \ref{sum_conv}--\ref{cond:likelihood_moments} in the supplementary material. If $\veps_{n}=o(a_{n}^{-1})$ as $n\rightarrow\infty$, then Condition \ref{cond:ACC_interval} is satisfied with $V(\btheta_{\rm ACC},\bsob)=a_{n}\{\btheta_{\rm ACC}-\btheta_{\veps}(s_{\rm obs})\}$ and $W(\btheta_0,S_{n})=a_{n}\{\btheta_{\veps}(S_{n})-\btheta_{0}\}$, where % $\btheta_{\veps} = 
$\btheta_{\veps}(s) =  \int \btheta  \, d Q_{\veps}(\theta\mid s)$.
\end{theorem}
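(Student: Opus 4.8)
\textbf{Proof proposal for Theorem \ref{thm:ACC_limit_small_bandwidth}.}

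The plan is to reduce Condition \ref{cond:ACC_interval} to a pair of Bernstein--von Mises--type statements: one for the simulation measure $\text{pr}^*(\cdot\mid s_{\rm obs})$ governing $\btheta_{\rm ACC}$, and one for the sampling measure $\text{pr}(\cdot\mid\btheta_0)$ governing the estimator $\btheta_\veps(S_n)$. First I would record the structure of $q_\veps(\theta\mid s_{\rm obs})$ in \eqref{ACC_approx_posterior}: up to a normalizing constant it is $r_n(\theta)$ times the smoothed $s$-likelihood $\int f_n(s\mid\theta)K_{\veps_n}(s-s_{\rm obs})\,ds$. Because $\veps_n=o(a_n^{-1})$, the kernel smoothing is negligible at the $a_n$ scale, so on the local scale $\vartheta = a_n\{\theta-\btheta_\veps(s_{\rm obs})\}$ the density $q_{\veps}$ behaves like a localized version of $r_n(\theta)f_n(s_{\rm obs}\mid\theta)$. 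The conditions \ref{sum_conv}--\ref{cond:likelihood_moments} imported from \cite{Li2016} give a local asymptotic normal expansion of the $s$-likelihood around its mode, with curvature governed by $A(\btheta_0)$ and the Jacobian of $\eta$; Conditions \ref{par_true}--\ref{initial_gradient} guarantee that $r_n(\theta)$, after the $\tau_n^{-p}$ standardization, is locally flat at the $a_n$ scale (since $\tau_n=o(a_n)$) with a strictly positive, bounded value at $\theta_0$ and a gradient of order $\tau_n=o(a_n)$. Hence the $r_n$ factor contributes only a $1+o_p(1)$ multiplicative perturbation on any fixed compact set in the $\vartheta$ coordinate, and the limiting shape is the same Gaussian one would get in the $\veps_n\to 0$ ABC analysis of \cite{Li2016} but now centered at the data-dependent location $\btheta_\veps(s_{\rm obs})$.

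Next I would argue convergence of $V(\btheta_{\rm ACC},\bsob)=a_n\{\btheta_{\rm ACC}-\btheta_\veps(s_{\rm obs})\}$ in total variation to a mean-zero Gaussian law $N\{0,\Sigma\}$ under $\text{pr}^*(\cdot\mid s_{\rm obs})$, uniformly over Borel sets. The centering at the posterior mean $\btheta_\veps(s_{\rm obs})=\int\theta\,dQ_{\veps}(\theta\mid s)$ is exactly what kills the first-order (linear-in-$\vartheta$) term, so the argument is a Laplace/Edgeworth-type expansion of $q_\veps$ combined with a tail bound showing the contribution from $\|\theta-\btheta_\veps\|$ outside a shrinking-but-$a_n$-large neighborhood is negligible; the tail bound is where Conditions \ref{initial_upper} (the $O_p(1)$ control of $\tau_n^{-p}r_n$ on $\mathcal P_0$) and the moment/deviation conditions on the $s$-likelihood do the work, controlling the behavior of $r_n$ outside $\mathcal P_0$ as in \cite{Li2016}. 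I would then invoke the Bernstein--von Mises conclusion of \cite{Li2016} (Theorem on $\pi_{\veps}$), observing that replacing $\pi(\theta)$ by $r_n(\theta)$ changes nothing in the proof because Conditions \ref{par_true}--\ref{initial_gradient} are precisely the substitutes for the fixed-prior conditions used there --- this is the ``generalization of \cite{Li2017}'' referred to in the introduction.

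For the sampling side, $W(\btheta_0,S_n)=a_n\{\btheta_\veps(S_n)-\btheta_0\}$, I would show $\btheta_\veps(S_n)$ is asymptotically linear: from the local expansion just described, $\btheta_\veps(S_n)=\eta^{-1}(\bS_n)+o_p(a_n^{-1})$ (or more precisely the appropriate generalized-least-squares functional of $\bS_n$), and then the CLT $a_n\{\bS_n-\eta(\btheta_0)\}\to N\{0,A(\btheta_0)\}$ from Condition \ref{sum_conv} together with the delta method gives $a_n\{\btheta_\veps(S_n)-\btheta_0\}\to N\{0,\Sigma\}$ with the \emph{same} $\Sigma$ --- the match of the two covariances is forced because both are determined by the same local quadratic form (curvature of the $s$-likelihood) and the same $A(\btheta_0)$. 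Combining the two one-sided convergences, both limits being the identical Gaussian, yields $\sup_{A}\|\text{pr}^*\{V\in A\mid s_{\rm obs}\}-\text{pr}\{W\in A\mid\btheta_0\}\|\to 0$; tracking the rates of the Edgeworth remainder and the kernel-smoothing error $\veps_n a_n=o(1)$ identifies the $o_p(\delta_\veps)$ bound, and the almost-sure version follows whenever the underlying CLT and the $O_p(1)$ statements in Conditions \ref{initial_upper}--\ref{initial_gradient} are strengthened to hold almost surely. The main obstacle I anticipate is the uniform (total-variation) control of the simulation-side law at the $a_n$ scale when $r_n$ is genuinely data-dependent: one must show the random normalizing constant and the random centering $\btheta_\veps(s_{\rm obs})$ do not interact badly with the $r_n$ factor, i.e. that the $o_p$ terms coming from $r_n$'s gradient and from the kernel smoothing are uniformly small over the relevant Borel sets; this is exactly the place where Conditions \ref{initial_upper}--\ref{initial_gradient} must be used carefully rather than just cited, and where the bulk of the appendix argument will lie.
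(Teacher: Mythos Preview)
Your proposal is correct and follows essentially the same route as the paper: reduce Condition~\ref{cond:ACC_interval} to the pair \eqref{thm2_uncertainty}--\eqref{thm2_ptestimate}, then establish each by generalizing the Bernstein--von Mises machinery of \cite{Li2016,Li2017} from a fixed prior $\pi(\theta)$ to the data-dependent $r_n(\theta)$, using Conditions~\ref{par_true}--\ref{initial_gradient} to make $\tau_n^{-p}r_n(\theta)$ play the role that $\pi(\theta)$ plays in those papers. The paper organizes the bookkeeping slightly differently---it introduces an intermediate approximation $\tilde{Q}_\veps$ in which $f_n$ is replaced by its Gaussian limit and $r_n$ by its truncation $r_\delta$, then proves $\tilde{Q}_\veps\to N\{0,I(\theta_0)^{-1}\}$ (your ``locally flat'' argument, made precise via two auxiliary ratio functionals $\phi_1,\phi_2$ and a Taylor expansion driven by Condition~\ref{initial_gradient}) and separately $Q_\veps\approx\tilde{Q}_\veps$---but the substance and the use of the conditions are exactly as you describe.
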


Theorem \ref{thm:ACC_limit_small_bandwidth} says when $\veps_{n}=o(a_{n}^{-1})$, the coverage of $\Gamma_{1-\alpha}(s_{\rm obs})$ is asymptotically correct as $m_{\rm ACC}\rightarrow\infty$ and $n\rightarrow\infty$, where $m_{\rm ACC}$ is the number of accepted particles in Algorithm 2. In practice, $\theta_{\veps}(s_{\rm obs})$, needed for constructing $\Gamma_{1-\alpha}$, does not have a closed form in most cases, and is estimated by the sample of $\theta_{\rm ACC}$. 

In Theorem \ref{thm:ACC_limit_small_bandwidth}, Condition \ref{cond:ACC_interval} is implied by the following convergence results, 
\begin{equation}
\sup_{A\in\mathfrak{B}^{p}}\left| \int_{\{\btheta: \, a_{n}(\btheta-\btheta_{\veps})\in A\}} d Q_{\veps}(\btheta \mid  \bS_n = \bs_{{\rm obs}})-\int_{A}N\{t;0,I(\btheta_{0})^{-1}\}\,dt\right|\rightarrow0, \label{thm2_uncertainty}
\end{equation}
in probability, and 
\begin{equation}
a_{n}(\btheta_{\veps}-\btheta_{0}) \rightarrow N\{0,I(\btheta_{0})^{-1}\}, \label{thm2_ptestimate}
\end{equation}
in distribution, as $n\rightarrow\infty$, where $I(\theta)=D\eta(\theta)^TA^{-1}(\theta)D\eta(\theta)$. These results generalize the limit distributions of $\Pi_{\veps}$ in \cite{Li2017} for the case of $\veps_n=o(a_n^{-1})$, since the prior distribution $\pi(\theta)$ satisfies Condition \ref{par_true}--\ref{initial_gradient}. We show that, in the sense of large-sample behavior, inference based on $Q_{\veps}$ is validated whether or not information from the data is used in constructing $r_{n}(\theta)$.

\subsection{Comparison between Algorithm \ref{alg:rejABC} and Algorithm \ref{alg:rejACC}}

Since $\Pi_{\veps}$ and $Q_{\veps}$ share the same limit distributions according to \eqref{thm2_uncertainty} and \eqref{thm2_ptestimate}, when the same tolerance level is used, confidence regions $\Gamma_{1-\alpha}(\bsob)$ constructed using the sample from $\Pi_{\veps}$ and $Q_{\veps}$ have the same asymptotic efficiency. Therefore it is computationally more efficient to use Algorithm \ref{alg:rejACC} with $r_{n}(\theta)$ depending on data, since any $r_{n}(\theta)$ with $\tau_n\rightarrow\infty$ is closer to the output distribution than $\pi(\theta)$ thus providing a higher acceptance probability for the same $\veps$. 

When $r_{n}(\theta)$ is available, an alternative to Algorithm \ref{alg:rejABC} is its importance sampling variant which proposes from $r_{n}(\theta)$ \cite[]{Fearnhead2012}, as specified in the following.
\begin{algo}{(Importance sampling approximate Bayesian computing)} \label{alg:ISABC}
	\begin{tabbing}
		\quad 1. Simulate $\btheta_{1},\ldots,\btheta_{N}\sim r_{n}(\btheta)$. \\
		\quad 2. For each $i=1,\ldots,N$, simulate $\bx^{(i)}=\{x_{1}^{(i)},\ldots,x_{n}^{(i)}\}$ from $M_{\btheta}$.\\ %\sim f_{n}(x\mid \btheta_{i})$;\\
		\quad 3. For each $i=1,\ldots,N$, accept $\btheta_{i}$ with probability $K_{\veps}(s^{(i)}-s_{\rm obs})$, where $s_{\rm obs}= $ \\ \qquad $S_{n}(\bx_{\rm obs})$ and $ s^{(i)}=S_{n}(\bx^{(i)})$, and assign importance weights $w(\theta_i)=\pi(\theta_i)/r_{n}(\theta_i)$.
	\end{tabbing}
\end{algo}  
Though Algorithm~\ref{alg:ISABC} is an improvement over Algorithm~\ref{alg:rejABC}, Algorithm \ref{alg:rejACC} still has a computational advantage over Algorithm \ref{alg:ISABC}, because $w(\theta)$ is unbounded as $n\rightarrow\infty$ while the sample weights in Algorithm \ref{alg:rejACC} are unity. 
\cite{Li2016} mention that certain techniques can be applied to control the skewed importance weight in Algorithm \ref{alg:ISABC}, but Algorithm \ref{alg:rejACC} does not have the same issue and therefore does not require such controls. 

\cite{Li2016} point out in Algorithms~\ref{alg:rejABC} and \ref{alg:ISABC}, that although using $\veps_n=o(a_n^{-1})$ gives valid inference, this leads to the degeneracy of Monte Carlo efficiency as $n\rightarrow\infty$, since the acceptance probability of any proposal distribution degenerates to zero for such a small tolerance level. This means that if the dataset is informative, most of the simulated datasets in Algorithms~\ref{alg:rejABC} and \ref{alg:ISABC},
 will be wasted. If $\veps_n$ is outside this regime, \cite{Li2016} show that $\Pi_{\veps}$ over-inflates the target posterior uncertainty and is not calibrated, i.e its uncertainty can not correctly quantify the uncertainty of the target posterior mean.
 % \Minge{$\theta_{\veps}$ [ST- this is not defined, is this even necessary?]},
 %therefore Condition~\ref{cond:ACC_interval} does not hold. 
 A similar phenomena occurs in Algorithm~\ref{alg:rejACC} when too large $\veps_n$ is used. Instead of giving a formal statement, we illustrate this in the following basic Gaussian example.

\begin{example}
	Consider a univariate normal model with mean $\theta$ and unit variance, and observations that are independent identically distributed from the model with $\theta=\theta_0$. Assume a standard normal density for the prior density of $\theta$, and use the normal density with mean $\mu_n$ and variance $b_n^{-2}$ for $r_{n}(\theta)$, where $\mu_n$ and $b_n$ are some sequences satisfying $b_n(\mu_n-\theta_0)=O(1)$ and $b_n=o(\sqrt{n})$ as $n\rightarrow\infty$. The choice of $\mu_n$ and $b_n$ makes $r_{n}(\theta)$ a reasonable proposal density, since it covers the true parameter $\theta_0$ and is more dispersed than the 
	{\it s-}likelihood where the sample mean is the summary statistic in both Algorithm \ref{alg:rejABC} and \ref{alg:rejACC}. The Gaussian kernel with variance $\veps_n^2$ is used for the acceptance/rejection.
	
	For this model, limit distributions of $V(\theta_{\rm ACC},s_{\rm obs})$ and $W(\theta_0,S_{n})$ in Theorem \ref{thm:ACC_limit_small_bandwidth} for different regimes of $\veps$ can be obtained analytically, since $q_{\veps}(\theta \mid s_{\rm obs})$ has the closed form $N(\theta;\theta_{\veps},\sigma_{\veps}^{2})$ where 
	\[
	\theta_{\veps}=\frac{\bsob+b_{n}^{2}(1/n+\veps^{2})\mu_{n}}{1+b_{n}^{2}(1/n+\veps^{2})},\ \sigma_{\veps}^{2}=\frac{1/n+\veps^{2}}{1+b_{n}^{2}(1/n+\veps^{2})}.
	\]
	In order for Condition \ref{cond:ACC_interval} to hold, $V(\theta_{\rm ACC},s_{\rm obs})$, which has the density $N(\cdot\,; 0,n\sigma_{\veps}^2)$,  and $W(\theta_0,S_{n})$, which is equal to $\sqrt{n}(\theta_{\veps}-\theta_0)$, should have the same asymptotic distributions. 
	
	By decomposing $W(\theta_0,S_{n})$ into $\Delta_{1}\sqrt{n}(S_{n}-\theta_{0})+\Delta_{2}b_{n}(\mu_{n}-\theta_{0})$ where 
	\[
	\Delta_{1}=\frac{1}{1+b_{n}^{2}(1/n+\veps^{2})},\ \Delta_{2}=\frac{\sqrt{n}b_{n}(1/n+\veps^{2})}{1+b_{n}^{2}(1/n+\veps^{2})},
	\]
	it can be seen that the expectation of $W(\theta_0,S_{n})$ is $o(1)$ only when $\veps_n=o(b_{n}^{-1/2}n^{-1/4})$. On the other hand, the variance of $W(\theta_0,S_{n})$ and $n\sigma_{\veps}^2$ having the same limit requires $n\sigma_{\veps}^2-\Delta_{1}^2=o(1)$ which holds only when $\veps_n=o(n^{-1/2})$ or $\veps_n^{-1}=o(b_n^2n^{-1/2})$. Because $b_n=o(\sqrt{n})$, both $\veps_n=o(b_{n}^{-1/2}n^{-1/4})$ and $\veps_n^{-1}=o(b_n^2n^{-1/2})$ can not hold simultaneously. Therefore Condition \ref{cond:ACC_interval} is satisfied only when $\veps_n=o(n^{-1/2})$. 
\end{example}

One remedy to reduce the overinflated uncertainty in $\Pi_{\veps}(\theta\mid\bsob)$ from Algorithms~\ref{alg:rejABC} and \ref{alg:ISABC}
is to post-process its sample by the regression adjustment \cite[]{beaumont2002}. Likewise, this adjustment can be applied to Algorithm \ref{alg:rejACC}. In the next subsection, we compare these regression adjusted approximate computing methods. 
%\Minge{Add comment about how this similarly applies to Algorithm \ref{alg:rejACC}] }

\subsection{Comparison between Algorithm \ref{alg:rejABC} and Algorithm \ref{alg:rejACC} with regression adjustment} \label{regression_discussion}

For Algorithms~\ref{alg:rejABC} and \ref{alg:ISABC}, it is known that the distribution of the regression adjusted sample is able to correctly quantify the posterior uncertainty and yield an accurate point estimate with $\veps_n$ decaying in the rate of $o(a_n^{-3/5})$, which is slower than $o(a_n^{-1/2})$ \cite[]{Li2017}. Here, we suggest applying the same regression adjustment to Algorithm \ref{alg:rejACC} to produce valid inference on the sample of Algorithm \ref{alg:rejACC} with a larger $\veps_n$. 

Let $q_{\veps}(\theta, s)$ be the joint density of accepted $\theta$ and its associated summary statistic in Algorithm \ref{alg:rejACC}, i.e. 
\begin{align} 
&q_{\veps}(\theta, s) =  \frac{r_{n}(\theta)f_{n}(s\mid\theta)K_{\veps}(s - s_{\rm obs})}{\int_{\mathbb{R}^p\times\mathbb{R}^{d}}r_{n}(\theta)
	f_{n}(s\mid\theta)K_{\veps}(s - s_{\rm obs})\,d\theta d s}, \label{ACC_density}
\end{align}
where $\theta\in\mathbb{R}^p$ and $s\in\mathbb{R}^d$.
Denote a sample from $q_{\veps}(\theta,s)$ by $\{(\btheta_{i},\bs^{(i)})\}_{i=1,\ldots,N}$. A new sample can be obtained as $\{\btheta_{i}-\hat{\beta}_{\veps}(\bs^{(i)}-\bs_{{\rm obs}})\}_{i=1,\ldots,N}$
where $\hat{\beta}_{\veps}$ is the least square estimate of
the coefficient matrix in the linear model
\[
\btheta_{i}=\alpha+\beta(\bs^{(i)}-\bs_{{\rm obs}})+e_{i},\ i=1,\ldots,N,
\]
where $e_{i}$ are independent identically distributed errors, $\alpha\in\mathbb{R}^p$ and $\beta\in\mathbb{R}^{p\times d}$. Let $\btheta_{\rm ACC}^{*}=\btheta-\beta_{\veps}(\bs-\bs_{{\rm obs}})$, where $\beta_{\veps}$ is from the minimizer %\Minge{[Comment: In the paper $\btheta$ and $\bs$ are vectors. I think we should reflect that for the notations in the regression and its coefficients.]}%
\[
(\alpha_{\veps},\beta_{\veps})={\rm argmin}_{\alpha\in\mathbb{R}^p,\beta\in\mathbb{R}^{d\times p}}E_{\veps}\left\{\|\btheta-\alpha-\beta(\bs-\bs_{{\rm obs}})\|^{2}\mid \bs_{{\rm obs}}\right\}
\]
for expectation under the joint distribution $q_{\veps}(\btheta,\bs)$. The new sample can be seen as a draw from the distribution of $\btheta_{\rm ACC}^{*}$
where $(\btheta, \bs)\sim q_{\veps}(\btheta,\bs)$, but with $\beta_{\veps}$ replaced by its estimator. Let $\btheta_{\veps}^{*}$
be the expectation of $\btheta_{\rm ACC}^{*}$.
%  \Minge{Denote by $Q_{\veps}^*(\cdot |  \bS_n = \bs_{{\rm obs}})$ the distribution of $\btheta_{\rm ACC}^{*}$ given $\bS_n = \bs_{{\rm obs}}$.}

The following theorem states that the regression adjusted  $Q_{\veps}$ has the same favored property as the adjusted $\Pi_{\veps}$. Here, the regression adjusted $Q_{\veps}$, say $Q_{\veps}^*(\cdot |  \bS_n = \bs_{{\rm obs}})$, is the distribution of $\btheta_{\rm ACC}^{*}$ given $\bS_n = \bs_{{\rm obs}}$.
\begin{theorem}
	\label{thm:ACC_limit_large_bandwidth}
	Assume the conditions of Theorem \ref{thm:ACC_limit_small_bandwidth} and Condition \ref{cond:likelihood_moments} of the supplementary materials.
	If $\veps_{n}=o(a_{n}^{-3/5})$ as $n\rightarrow\infty$, Condition \ref{cond:ACC_interval} is satisfied with $V(\btheta_{\rm ACC}^{*},s_{\rm obs})=a_{n}(\btheta_{\rm ACC}^{*}-\btheta_{\veps}^*)$ and $W(\btheta_0,S_{n})=a_{n}(\btheta_{\veps}^*-\btheta_{0})$. 
\end{theorem}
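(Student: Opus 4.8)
The plan is to verify Condition~\ref{cond:ACC_interval} for the stated $V$ and $W$ by proving the two convergence statements that play the roles of \eqref{thm2_uncertainty} and \eqref{thm2_ptestimate}: that $a_{n}(\btheta_{\rm ACC}^{*}-\btheta_{\veps}^{*})\mid\bS_{n}=\bsob$ converges in probability to the law $N\{\cdot\,;0,I(\btheta_{0})^{-1}\}$, and that $a_{n}(\btheta_{\veps}^{*}-\btheta_{0})$ converges in distribution to the same law. Because the post-processing map $\btheta\mapsto\btheta-\beta(\bs-\bsob)$ is linear in $(\btheta,\bs)$, both statements reduce to controlling the joint behaviour, under $q_{\veps}(\btheta,\bs)$, of the pair $\{a_{n}(\btheta-\btheta_{\veps}),\,\bs-\bsob\}$ together with the population coefficient $\beta_{\veps}$ and its least-squares estimate $\hat{\beta}_{\veps}$.

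First I would import, from the conditions of \cite{Li2016} and Condition~\ref{cond:likelihood_moments}, the stochastic expansion of $q_{\veps}(\btheta,\bs)$ valid for $\veps_{n}=o(a_{n}^{-3/5})$: after centering and rescaling, the accepted $(\btheta,\bs)$ is approximately jointly Gaussian, but with a location shift of order $a_{n}\veps_{n}^{2}$ arising from the curvature of $\eta(\cdot)$ and the second and third derivatives of the s-likelihood --- this is precisely the term that invalidates the naive Theorem~\ref{thm:ACC_limit_small_bandwidth} conclusion once $\veps_{n}$ is merely $o(a_{n}^{-3/5})$ rather than $o(a_{n}^{-1})$. Substituting this expansion into the defining minimisation for $(\alpha_{\veps},\beta_{\veps})$, I would read off that $\beta_{\veps}$ converges to $I(\btheta_{0})^{-1}D\eta(\btheta_{0})^{\top}A(\btheta_{0})^{-1}$ and, crucially, that subtracting $\beta_{\veps}(\bs-\bsob)$ cancels the $O(a_{n}\veps_{n}^{2})$ bias in $a_{n}(\btheta-\btheta_{\veps})$, leaving a remainder that is $o(1)$ precisely in the window $\veps_{n}=o(a_{n}^{-3/5})$. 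Conditions~\ref{par_true}--\ref{initial_gradient} on $r_{n}$ are used here to guarantee that $r_{n}$ enters the expansion only through $r_{n}(\btheta_{0})$ and its gradient at $\btheta_{0}$, the latter of order $\tau_{n}^{-1}=o(a_{n}^{-1})$ and hence negligible against the bias term, so that the cancellation is not disturbed by the data-dependent proposal. The companion statement for $a_{n}(\btheta_{\veps}^{*}-\btheta_{0})$ then follows from the central limit theorem for $\bS_{n}$ combined with the same expansion, exactly as \eqref{thm2_ptestimate} is obtained in Theorem~\ref{thm:ACC_limit_small_bandwidth}.

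The remaining step is to replace the population coefficient $\beta_{\veps}$ by the least-squares estimate $\hat{\beta}_{\veps}$ built from the $N$ accepted draws. Conditionally on $\bsob$, $\hat{\beta}_{\veps}-\beta_{\veps}$ is an ordinary least-squares error whose size is governed by the conditional covariance of $(\btheta,\bs)$ under $q_{\veps}$ and by $N$; since accepted draws satisfy $\|\bs^{(i)}-\bsob\|=O_{p}(\veps_{n})$ and $\btheta_{i}-\btheta_{\veps}=O_{p}(a_{n}^{-1})$, one finds that $a_{n}(\hat{\beta}_{\veps}-\beta_{\veps})(\bs^{(i)}-\bsob)=O_{p}(N^{-1/2})$, which is absorbed into the Monte-Carlo precision $\delta'$ of Lemma~\ref{main1}; the gap between $\btheta_{\veps}^{*}$ and its population counterpart is handled by the same bound. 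Assembling the two convergence statements then gives Condition~\ref{cond:ACC_interval}, almost surely, with $\delta$ the larger of the Monte-Carlo rate and the residual bias rate, both $o(1)$ in the stated window.

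I expect the main obstacle to be the uniform control of the $q_{\veps}$ expansion over the relevant shrinking neighbourhood and, within it, the verification that the leading bias is \emph{exactly} annihilated at the population level with no surviving $r_{n}$-dependent term: this is where Condition~\ref{cond:likelihood_moments} on the moments and derivatives of the s-likelihood does the real work, mirroring its role in \cite{Li2017}, and it is also the step that pins down why $\veps_{n}=o(a_{n}^{-3/5})$ --- and not a larger tolerance --- is the sharp threshold for the adjusted method. A secondary technical point is that the least-squares error $\hat{\beta}_{\veps}-\beta_{\veps}$, which grows like $(a_{n}\veps_{n})^{-1}$ in the increasingly ill-conditioned design as $\veps_{n}\to0$, is nonetheless harmless after multiplication by $a_{n}(\bs^{(i)}-\bsob)=O_{p}(a_{n}\veps_{n})$; this cancellation should be made explicit rather than left implicit.
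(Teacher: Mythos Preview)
Your overall strategy matches the paper's: the proof there simply generalizes the sequence of lemmas from \cite{Li2017} by replacing $\pi(\theta_0+a_{n,\veps}^{-1}t)$ with $\gamma_n(t)=\tau_n^{-p}r_\delta(\theta_0+a_{n,\veps}^{-1}t)$, checks that Conditions~\ref{par_true}--\ref{initial_gradient} make $\gamma_n$ behave like a fixed prior in every place it is used, and then cites the proof of Theorem~1 in \cite{Li2017} verbatim. Two of your technical claims, however, are mis-stated and would derail the argument if used as written.

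First, you write ``the latter of order $\tau_n^{-1}=o(a_n^{-1})$''. This reverses Condition~\ref{initial_upper}, which says $\tau_n=o(a_n)$, i.e.\ $a_n^{-1}=o(\tau_n^{-1})$. The $r_n$-gradient contribution is \emph{not} small because $\tau_n^{-1}$ is small; it is small because after the rescaling $t=a_{n,\veps}(\theta-\theta_0)$ one has $D_t\gamma_n(t)=O_p(\tau_n/a_{n,\veps})$, and in the relevant regime this produces a term of order $k_n^{-2}\tau_n/a_{n,\veps}=o(k_n^{-1})$ (see the bound after \eqref{eq5}). It is $\tau_n=o(a_n)$, not its reverse, that is used.

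Second, your heuristic that the pre-adjustment bias is $O(a_n\veps_n^2)$ and that regression removes it does not explain the $-3/5$ threshold; an $O(a_n\veps_n^2)$ bias would give $-1/2$. In the paper the remainder in the moment-ratio expansion is $O_P(a_{n,\veps}^{-1})+O_P(a_n^2\veps_n^4)$ (Lemma~\ref{Alemma2}(iii)); in the analysis of $a_n\veps_n(\beta_\veps-\beta_0)$ (Lemma~\ref{Alemma5}) this remainder is multiplied by $k_n=a_n\veps_n$, yielding the controlling term $a_n^3\veps_n^5$, which is $o(1)$ exactly when $\veps_n=o(a_n^{-3/5})$. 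The cancellation that matters is that of the linear-in-$\nu$ drift, encoded as $a_n\veps_n(\beta_\veps-\beta_0)=o_p(1)$, which then feeds into Lemma~\ref{Alemma3}(ii) for $\ttheta_\veps^*$.

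Finally, the theorem is stated for $\theta_{\rm ACC}^*=\theta-\beta_\veps(s-s_{\rm obs})$ with the \emph{population} coefficient $\beta_\veps$; your third step about $\hat\beta_\veps-\beta_\veps$ is an implementation issue outside the formal statement and is not part of the paper's proof.
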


In the above, Condition \ref{cond:ACC_interval} is implied by the following convergence results which generalize the results in \cite{Li2017},
\[
\sup_{A\in\mathfrak{B}^{p}}\left| \int_{\{\btheta: \, a_{n}(\btheta -\btheta_{\veps}^{*}) \in A\}} d Q_{\veps}^*(\btheta \mid  \bS_n = \bs_{{\rm obs}}) -\int_{A}N\{t;0,I(\btheta_{0})^{-1}\}\,dt\right|\rightarrow0,
\]
in probability, and 
\[
a_{n}(\btheta_{\veps}^{*}-\btheta_{0}) \rightarrow N\{0,I(\btheta_{0})^{-1}\}, 
\]
in distribution, as $n\rightarrow\infty$. The limit distributions above are the same as those in \eqref{thm2_uncertainty} and \eqref{thm2_ptestimate}, therefore $\Gamma_{1-\alpha}(\bsob)$ constructed using $\btheta_{\rm ACC}^{*}$ can achieve the same efficiency as those using $\btheta_{\rm ACC}$ while permitting much larger tolerance levels. Asymptotically, inference based on the regression adjusted $Q_{\veps}^*$ is not affected by an $r_{n}(\theta)$ that depends on the data,  again illustrating the computational advantage of Algorithm \ref{alg:rejACC}.

\subsection{Guidelines for Selecting $\r$ in Algorithm \ref{alg:rejACC}} \label{sec:rn}
The generality of approximate confidence distribution computing is that it can produce justifiable inferential results with weak conditions on a possibly data-dependent function $r_{n}(\theta)$. % from which we can generate initial $\btheta$ values. 
In general, one should be careful in choosing $r_{n}(\theta)$ to ensure its growth with respect to the sample size is slower than the growth of the {\it s-}likelihood, according to Condition \ref{initial_upper}. A generic algorithm to construct $r_{n}(\theta)$ based on sub-setting the data is proposed below. Assume that a point estimator $\widehat{\theta}(z)$ of $\theta$ can be computed for a dataset $z$ of any size. 
%\Minge{[Comment: Do we have any particular reason to use $z$ to denote the dataset? Can we still use $x$? If we like to use $z$, state its connection to $x$.]}

\begin{algo}{(Minibatch scheme)} \label{alg:MB_rn}
	\begin{tabbing}
		\qquad \enspace 1. Choose $k$ subsets of the observations, each with size $n^{\nu}$ for some $0<\nu<1$. \\
		\qquad \enspace 2. For each subset $z_{i}$ of $x_{\rm obs}$, compute the point estimate $\widehat{\theta}_{i}=\widehat{\theta}(z_{i})$, for $i=1,\ldots,k$. \\
		\qquad \enspace 3. Let 
		$r_{n}(\theta)=(1/ kh) \sum_{i=1}^{k}K\left\{h^{-1}\|\theta-\widehat{\theta}_{i}\|\right\},$
		where $h>0$ is the bandwidth of the\\ \qquad \enspace \quad 
		kernel density  estimate using $\{\widehat{\theta}_{1},\ldots,\widehat{\theta}_{k}\}$ and kernel function $K$.
	\end{tabbing}
\end{algo}

By choosing $\nu<3/5$, we ensure that Conditions \ref{initial_upper}--\ref{initial_gradient} are met. Furthermore, if $\widehat{\theta}(z)$ converges with a rate not faster than that of the summary statistic, then the tolerance level, $\veps_{n}$, selected by accepting a reasonable proportion of simulations is sufficiently small, provided the rate of $S_n$ is a power function of $n$. Based on our experience, if $n$ is large one may simply choose $\nu =1/2$ to partition the data. For small $n$, say $n<100$, it is better to select $\nu >1/2$ and overlap the subsets so that each subset contains a reasonable number of observations. 

The choice of $\widehat{\theta}$ does not have to be very accurate, since it is only used to construct the initial  estimate, $r_{n}(\theta)$. For problems of intractable likelihoods, possible choices of $\widehat{\theta}$ include the point maximizing an easy-to-obtain approximate likelihood or the point minimizing the average distance between the simulated $s$ and $\bsob$ \cite[]{meeds2015optimization}. However, a poor choice, for instance, a $\widehat{\theta}$ with a large bias, might cause bias in the inference if the mass of $Q_{\veps}(\theta\mid\bsob)$ is not well covered by the simulated parameter values. For a subset, $z_i$, of the data, $x_{\rm obs}$, we suggest choosing the point estimate to be the
s-likelihood-based expectation over the subset, 
i.e. $E\{\theta\mid S_{n}(z_i)\} \propto \int \theta f_n\{S_{n}(z_i) \mid \theta\}d\theta$. 
%\Minge{[comment: formally define $E$ here]}. 
This choice of $\widehat{\theta}$ has two benefits. First, when the summary statistic satisfies Condition \ref{sum_conv}, $E\{\theta\mid S_{n}(z_i)\}$ is asymptotically unbiased. Second, $E\{\theta\mid S_{n}(z_i)\}$ converges with the same rate as $S_{n}$, which is desirable as discussed above.  

For each subset $z_i$ of $x_{\rm obs}$, $E\{\theta\mid S_n(z_i)\}$ can be approximated using the population Monte Carlo variant of Algorithm \ref{alg:rejABC} \cite[]{Beaumont2009,del2012adaptive}. This variant extends the importance sampling step of Algorithm \ref{alg:ISABC} to a sequence of sampling importance resampling operations, in order to iteratively update the approximate posterior distribution starting from the prior distribution. 
%\Minge{[Comment: perhaps add one or two sentence more here? I do not know how this MC approximation is done just based on the description here.]} 
For an initial choice of $\widehat{\theta}$, say $\widehat{\bar \theta}$, let $\bar r_{n}(\theta)$ be the proposal distribution constructed by Algorithm \ref{alg:MB_rn} together with $\widehat{\bar \theta}$. Here the user can now propose from $\bar r_{n}(\theta)$ in the first iteration of the algorithm rather than proposing from the prior distribution, helping to reduce the associated computational cost. This approximation is straightforward to execute in parallel for multiple subsets and can be applied to Algorithm \ref{alg:rejACC} as well. We call this scheme the \textit{refined-minibatch scheme}, since it updates the $r_{n}(\theta)$ obtained from the minibatch scheme (i.e. Algorithm \ref{alg:MB_rn}) by improving the quality of $\widehat{\theta}$. From our experience, the additional computational cost of the refined version is relatively small compared to the other parts of Algorithm \ref{alg:rejABC} and \ref{alg:rejACC} because a small particle size and several iterations are usually enough to achieve convergence of the population Monte Carlo algorithm with the proposed techniques. A full study on the choice of $\widehat{\btheta}$ is beyond the scope of this paper. 

\begin{remark}\label{remk2}
There is a trade-off in Algorithm \ref{alg:rejACC} between faster computations and guaranteed frequentist inference. When the growth of $r_{n}(\theta)$ is at a similar rate as the {\it s-}likelihood while the sample size $n \to \infty$, the computing time may be reduced but Algorithm \ref{alg:rejACC} may also risk violating Conditions \ref{initial_upper}--\ref{initial_gradient}. If these assumptions are violated, the resulting simulations do not necessarily form a confidence distribution and consequently, inference based on Algorithm \ref{alg:rejACC} may not be valid in terms of producing confidence sets with guaranteed coverage. However, if Conditions \ref{initial_upper}--\ref{sum_conv} do hold and the observed data is large enough, Theorem \ref{thm:ACC_limit_small_bandwidth} shows that regardless of the choice of $r_{n}(\theta)$, Algorithm \ref{alg:rejACC} always produces the same confidence distribution. 
\end{remark}

%--------------------------------------------------------------------------------------------%

\section{Empirical Examples} \label{sec:ex}
\subsection{Cauchy data} \label{sec:Cauchy}
In Figure \ref{fig:ABC} we saw how the lack of a sufficient statistic could drastically change inference resulting from approximate Bayesian computing. In particular, we saw that when applying Algorithm \ref{alg:rejABC}, the approximate Bayesian computational posterior is quite different from the target posterior, $p(\theta \mid x) \propto \prod_{i=1}^{n} 1/ [ 1 + \{(x_i - \theta)/0.55 \}^2 ]$ for both $S_n = \bar{x}$ and $S_n = Median(x)$. 

Now, as a continuation of the example in Figure~\ref{fig:ABC}, suppose we observe random data, $(x_1,\dots,x_n)$, from a $Cauchy(\theta, \tau)$ distribution with sample size $n=400$. Suppose the (unknown) data-generating parameter value is $(\theta_0,\tau_0)=(10,0.55)$. 
Using the coverage of the data-generating parameter as our metric, we compare the performance of the resulting $95\%$ confidence intervals/regions from Algorithm \ref{alg:rejACC}, denoted r-ACC, to the credible intervals/regions from Algorithm \ref{alg:ISABC}, denoted IS-ABC. For the credible intervals/regions of IS-ABC, we also compute the corresponding Bayesian coverage probabilities. 

The reason we choose to compare the rejection sampling version of approximate confidence distribution computing to the importance sampling version of approximate Bayesian computing is illustrated in Table \ref{EX1_rej}. In this table,  we fix $\veps_{n}$ (thus varying the number of retained $\theta$ values, $N$, in each run). Both Algorithm \ref{alg:rejABC} (r-ABC) and Algorithm \ref{alg:rejACC} (r-ACC) suffer from over-coverage, but the acceptance rates for r-ACC are much better and are comparable to Algorithm \ref{alg:ISABC} (IS-ABC) without the regression adjustment.
\begin{table}
	\caption{Comparison of r-ABC, IS-ABC, and r-ACC without the regression adjustment for inference on $\theta$ using the median as the summary statistic and assuming a flat prior on $\theta$. We fix $\veps_{n}$ and compare the median acceptance proportions of each algorithm using a Monte Carlo sample size of $10^6$. Coverage is computed over 300 runs. IS-ABC and r-ACC perform similarly.}{
		\begin{tabular}{ccccccc}
			    &\multicolumn{2}{c}{r-ABC}       &\multicolumn{2}{c}{IS-ABC} &\multicolumn{2}{c}{r-ACC}    \\
			$\veps_{n}$	&Coverage &Acceptance &Coverage &Acceptance &Coverage &Acceptance\\
			& &proportion  & &proportion & &proportion\\
			$0.1 $       & $ 0.973$ & $ 0.0001$ & $1$ & $ 0.001$& $ 1$ & $ 0.001$ \\
			$0.01 $     & $ 1$ & $ 0.001$ & $ 1$& $ 0.008$ & $ 1$ & $ 0.008$  \\
			$0.001$      & $ 1$ & $ 0.008$ & $1$& $ 0.079$& $ 1$ & $0.079 $  \\
	\end{tabular} }
	\label{EX1_rej}
\end{table} 

For reference, all experiment settings mentioned below are summarized in Table \ref{EX1_setting}. The prior distribution used in each of the Bayesian methods is the Jeffrey's prior for the location-scale family. We also compute the median width of the intervals from each experiment. Coverage proportions closer to the $95\%$ nominal level and having smaller width, indicative of higher efficiency, are preferred. Only those results using the regression adjusted sample are reported because in all cases, intervals constructed by the unadjusted samples are much wider and over-cover the true parameter values for almost all acceptance proportions as demonstrated in Table~\ref{EX1_rej}. This aligns with the discussion in Section \ref{regression_discussion}. Under each of these settings, $r_{n}(\theta)$ for r-ACC is constructed using Algorithm \ref{alg:MB_rn} with $\nu =1/2$.% and the sample mean as the point estimate. 

\begin{table}
	\caption{Experiment settings of Example 1. Improper priors are considered for (i)--(iv), and $t_{4}(\mu,\sigma)$ denotes the Student's t density with degree of freedom four, location $\mu$ and scale $\sigma$.  The summary statistic ${\rm MAD}(x)$ is the sample median absolute deviation.}{
		\begin{tabular}{cccc}
			& Unknown parameter & Prior density & Summary statistic \\
			(i) & $\theta$ & $1$ & $\mbox{median}(x)$\\
			(ii) & $\tau$ & $\tau^{-1}I_{\tau>0}$ & $\mbox{MAD}(x)$\\
			(iii) & $(\theta,\tau)$ & $\tau^{-1}I_{\tau>0}$ & $\{\mbox{median}(x),\mbox{MAD}(x)\}$\\
			(iv) & $\theta$ & $1$ & $\overline{x}$\\
			(v) & $\theta$ & $t_{4}(\theta_{0},1)$ & $\overline{x}$\\
			(vi) & $\theta$ & $t_{4}(\theta_{0},3)$ & $\overline{x}$\\
			(vii) & $\theta$ & $t_{4}(\theta_{0}+3,3)$ & $\overline{x}$
	\end{tabular}}
	\label{EX1_setting}
\end{table}

First, we consider inference for one or both of the unknown parameters in settings (i)-(iii) in Table \ref{EX1_setting}. We choose the summary statistics as the sample median and sample median absolute deviation for the location and scale parameters, respectively. These summary statistics are asymptotically normal and unbiased, and satisfy Condition \ref{sum_conv}; thus Theorem \ref{thm:ACC_limit_large_bandwidth} guarantees at least nominal coverage for the intervals/regions of r-ACC, as observed in Table \ref{EX1_results}.A. 

For the first three inference problems in Table \ref{EX1_results}, approximate confidence distribution computing (i.e. Algorithm \ref{alg:rejACC}) and importance sampling approximate Bayesian computing (i.e. Algorithm \ref{alg:ISABC}) perform very similarly. This is not surprising, since the data size is large enough that the asymptotic behaviors of all estimates are similar. As discussed in Section \ref{regression_discussion}, $Q_{\veps}(\cdot \mid s_{obs})$ and $\Pi_{\veps}(\cdot \mid s_{obs})$ share the same limiting normal distribution and thus the credible intervals/region of the approximate posterior from IS-ABC are similar to the confidence intervals/region of the confidence distribution from r-ACC.

For the last four inference problems in Table~\ref{EX1_setting}, we wish to conduct inference on the location parameter only but we choose a less informative summary statistic, the sample mean. This summary statistic follows a Cauchy$(\theta,\tau)$ distribution and thus does not satisfy Condition \ref{sum_conv}. However, by  Theorem \ref{thm:pivot}, we are still able to produce confidence intervals with nominal coverage using r-ACC. 
In Table \ref{EX1_results}, we compare the performance of the confidence intervals from Algorithm \ref{alg:rejACC}, using the minibatch scheme to define $r_{n}(\theta)$, to the credible intervals of Algorithm \ref{alg:ISABC}, using four different choices for $\pi(\cdot)$. For IS-ABC we use an uninformative prior in setting (iv), two informative priors in settings (v) and (vi) and a misspecified prior in setting (vii), to study the case where we do not meet the conditions for a Bernstein von-Mises type of theorem. For each experiment, even though the summary statistic is not asymptotically normal, the frequentist coverage using Algorithm \ref{alg:rejACC} is closer to the $95\%$ nominal level than the coverage of the credible intervals from Algorithm \ref{alg:ISABC}, especially when the prior is misspecified as in setting (vii). Furthermore, the approximate confidence distribution intervals are more efficient than the credible intervals resulting from approximate Bayesian computing, the former having widths about half the widths of the latter except in case (v) where the prior is highly informative. In case (v), although both confidence intervals have similar width, the latter shows more  overcoverage.

\begin{table}
\caption{Coverage proportions and the median width/volume of confidence or credible intervals/regions, calculated using $300$ datasets under settings of Table~\ref{EX1_setting}. For credible intervals, both the frequentist coverage proportions and the Bayesian coverage probabilities are reported, the latter are given in the parenthesis. Each dataset contains $400$ observations, and in each algorithm run, a Monte Carlo sample of size $10^5$ is simulated. The nominal level is $95\%$.}
\begin{tabnote} \normalsize
		(A) Using an informative summary statistics for $\theta$ and $\tau$.
\end{tabnote}
		\begin{tabular}{ccccccc}
			&  & \multicolumn{2}{c}{r-ACC} &  & \multicolumn{2}{c}{IS-ABC} \\
			Setting & Acceptance  & Coverage & Width/ &  & Coverage & Width/\\
			          &proportion     &                  & Volume &  &               &Volume    \\
			\multirow{2}{*}{(i) $\theta$/ Median} & $0.005$ & $0.947$ & $0.162$  & \multirow{2}{*}{} & $0.950$ ($0.955$) & $0.169$ \\
			& $0.1$ & $0.947$ & $0.165$ &  & $0.950$ ($0.957$) & $0.17$ \\
			& $0.4$ & $0.947$ & $0.166$ &  & $0.950$ ($0.958$) & $0.17$ \\
			&  &  &  &  &  & \\
			\multirow{2}{*}{(ii) $\tau$/ MAD} & $0.005$ & $0.950$ & $0.163$ & \multirow{2}{*}{} & $0.947$ ($0.955$) & $0.169$    \\
			& $0.1$ & $0.937$ & $0.165$ &  & $0.950$ ($0.958$) & $0.170$ \\
			& $0.4$ & $0.943$ & $0.164$ &  & $0.950$ ($0.957$) & $0.171$ \\
			&  &  &  &  &  & \\
			\multirow{2}{*}{(iii) $(\theta,\tau)$ / (Median,MAD)} & $0.005$ & $0.913$ &                   $0.059$            & & $0.917$& $0.059$  \\
			& $0.1$ & $0.933$  & $0.100$   &  & $0.92$ & $0.100$ \\
			& $0.4$ & $0.94$ & $0.141$  &  & $0.927$  & $0.141$\\
		\end{tabular}
%ACC 0.059 0.09951333 0.14095330
%ABC 0.059  0.100   0.141  
		\begin{tabnote} \normalsize
			(B) Using an un-informative summary statistic for $\theta$, i.e. $S_{n}= \bar{x}$.
		\end{tabnote}
		\begin{tabular}{ccccccc}
			&  & \multicolumn{2}{c}{r-ACC} &  & \multicolumn{2}{c}{IS-ABC}\\
			Setting & Acceptance proportion & Coverage & Width &  & Coverage & Width\\
			&  &  &  &  &  & \\
			\multirow{3}{*}{(iv) $1_{\theta\in\mathbb{R}}$} & $0.005$ & $0.970$ & $2.56$ & \multirow{3}{*}{} & $0.983$ ($1$) & $4.65$\\
			& $0.1$ & $0.973$ & $2.56$ &  & $0.973$ ($1$) & $5.39$\\
			& $0.4$ & $0.963$ & $2.65$ &  & $0.967$ ($1$) & $5.58$\\
			&  &  &  &  &  & \\
			\multirow{3}{*}{(v) $t_{4}(\theta_{0},1)$} & $0.005$ & $0.970$ & $2.56$ & \multirow{3}{*}{} & $1$ ($1$) & $2.69$\\
			& $0.1$ & $0.973$ & $2.56$ &  & $1$ ($1$) & $2.65$\\
			& $0.4$ & $0.963$ & $2.65$ &  & $1$ ($1$) & $2.76$\\
			&  &  &  &  &  & \\
			\multirow{3}{*}{(vi) $t_{4}(\theta_{0},3)$} & $0.005$ & $0.970$ & $2.56$ & \multirow{3}{*}{} & $1$ ($1$) & \multirow{1}{*}{$3.93$}\\
			& $0.1$ & $0.973$ & $2.56$ &  & $1$ ($1$) & \multirow{1}{*}{$4.32$}\\
			& $0.4$ & $0.963$ & $2.65$ &  & $1$ ($1$) & $4.42$\\
			&  &  &  &  &  & \\
			\multirow{3}{*}{(vii) $t_{4}(\theta_{0}+3,3)$} & $0.005$ & $0.970$ & $2.56$ &  & $0.93$ ($1$) & $4.40$\\
			& $0.1$ & $0.973$ & $2.56$ &  & $0.89$ ($1$) & $5.33$\\
			& $0.4$ & $0.963$ & $2.65$ &  & $0.89$ ($1$) & $5.61$\\
		\end{tabular}
	\label{EX1_results}
\end{table}

\subsection{Ricker model}
A Ricker map is a non-linear dynamical system, often used in Ecology, that describes how a population changes over time. The population $N_t$ is noisily observed and is described by the following model,
\begin{align*}
& y_t \sim \mbox{Pois}(\phi N_t),\\
& N_t = r N_{t-1}e^{-N_{t-1}+e_t}, e_t\sim N(0,\sigma^2),
\end{align*}
where $t=1,\ldots,T$. Parameters $r$, $\phi$ and $\sigma$ are positive constants, interpreted as the intrinsic growth rate of the population, a scale parameter and the environmental noise. This model is statistically challenging since its likelihood function is intractable when $\sigma$ is non-zero and highly irregular in certain regions of the parameter space. \cite{wood2010statistical} suggests a summary statistic-based inference, instead of likelihood-based inference, to overcome the noise-driven nature of the model. \cite{Fearnhead2012} applies Algorithm \ref{alg:ISABC} with the regression adjustment on the above model. In this section, we apply Algorithm \ref{alg:rejACC} with the regression adjustment and compare its performance with that of regression-adjusted Algorithm \ref{alg:ISABC}.

We consider inference on the unknown parameter $\theta=(r,\phi,\sigma)$. 
A total of four different methods are compared. 
(i) Algorithm \ref{alg:rejACC} with the regression adjustment; (ii) Algorithm \ref{alg:ISABC} with the regression adjustment; 
both using Algorithm \ref{alg:MB_rn} to choose $r_{n}(\theta)$. (iii) Algorithm \ref{alg:rejACC} with the regression adjustment; (iv) Algorithm \ref{alg:ISABC} with the regression adjustment; 
both using Algorithm \ref{alg:MB_rn} with the refinement to choose $r_{n}(\theta)$. 
The main computational cost of all four algorithms is associated with the calculation of the point estimate in Algorithm \ref{alg:MB_rn}, for which we select the maximum synthetic likelihood estimator as defined in \cite{wood2010statistical}. Because each point estimate requires the simulation of a Markov chain Monte Carlo sample for the synthetic likelihood, each of the four algorithms spend over $50\%$ of CPU time on obtaining $r_{n}(\theta)$. Relative to this cost, the additional cost of the population Monte Carlo algorithm in the refined-minibatch scheme is negligible when using $10^4$ particles and $10$ iterations run in parallel. In this example, the parametric bootstrap method is not feasible due to the large number of point estimates it would need to calculate.

Following the settings used in \cite{wood2010statistical}, our dataset contains observations from $t=51$ to $100$, generated using parameter value $\theta=(e^{3.8},0.3,10)$, and using the same summary statistic therein. We assume $\theta$ follows an improper uniform prior distribution over all positive values. In Algorithm \ref{alg:MB_rn}, each minibatch has size $10$ and a total number of $40$ batches are used. They are chosen with overlaps in order to ensure a reasonable number of point estimates are available in the current small data size setting. Results are given in Table \ref{EX2_results}. Because the regression adjustment methods are better in all cases, to save time and space we only report here results for regression adjustment methods. The simulation results without the minibatch refinement, show that IS-ABC has somewhat better coverage than r-ACC since the point estimates (and thus $r_n(\cdot)$) are biased in the small data size setting. However, with the refined-minibatch scheme,  the width of the confidence intervals for r-ACC are smaller than those in IS-ABC in all cases, although both methods are over-coverage (here the target is $0.95$). This result illustrates the benefit of improving $r_{n}(\theta)$ through the population Monte Carlo procedure on problems with poor initial choice of $r_{n}(\theta)$. In the Cauchy example above, using the refined-minibatch scheme would improve upon the results however the improvement would be minimal and not as strong as in the Ricker example.

\begin{table}
	\caption{Coverage proportions and the median width of confidence/coverage intervals calculated using $150$ datasets for the four different methods of the Ricker model in Example 2 with $\delta=3/5$ for r-ACC and a flat prior for IS-ABC. Each dataset contains $50$ observations, and in each algorithm run, a Monte Carlo sample of size $10^6$ is simulated. The nominal level is $95\%$.}	
		\begin{tabnote} \normalsize
		(A) Using Algorithm \ref{alg:MB_rn} to construct $r_{n}(\theta)$
		\end{tabnote}
	\begin{tabular}{ccccccc}
			&  & \multicolumn{2}{c}{r-ACC} &  & \multicolumn{2}{c}{IS-ABC}\\
			& Acceptance proportion & Coverage & Width &  & Coverage & Width\\
			\multirow{3}{*}{$\log R$} & $0.005$ & $0.91$ & $0.59$ & \multirow{3}{*}{} & $0.91$ & $0.72$\\
			& $0.1$ & $0.91$ & $0.59$ &  & $0.99$ & $0.89$\\
			& $0.4$ & $0.9$ & $0.61$ &  & $0.99$ & $0.99$\\
			&  &  &  &  &  & \\
			\multirow{3}{*}{$\log\sigma$} & $0.005$ & $0.96$ & $2.46$ & \multirow{3}{*}{} & $0.95$ & $2.59$\\
			& $0.1$ & $0.95$ & $2.78$ &  & $0.96$ & $2.90$\\
			& $0.4$ & $0.94$ & $2.9$ &  & $0.97$  & $2.89$\\
			&  &  &  &  &  & \\
			\multirow{3}{*}{$\log\phi$} & $0.005$ & $0.89$ & $0.21$ & \multirow{3}{*}{} & $0.92$  & $0.24$\\
			& $0.1$ & $0.91$ & $0.21$ &  & $0.94$ & $0.30$\\
			& $0.4$ & $0.91$ & $0.23$ &  & $0.97$ & $0.33$\\
%			&  &  &  &  &  & \\
%			\multirow{3}{*}{($\log R$, $\log\sigma$)} & $0.005$ & $0.76$ &  &  & $0.95$ & \\
%			& $0.1$ & $0.77$ &  &  & $0.98$ & \\
%			& $0.4$ & $0.75$ &  &  & $0.99$ & \\
%			&  &  &  &  &  & \\
%			\multirow{3}{*}{($\log R$, $\log\phi$)} & $0.005$ & $0.92$ &  &  & $0.90$ & \\
%			& $0.1$ & $0.93$ &  &  & $0.91$  & \\
%			& $0.4$ & $0.92$ &  &  & $0.95$  & \\
		\end{tabular}
	\begin{tabnote} \normalsize
		(B) Using the refined version of Algorithm \ref{alg:MB_rn} to construct $r_{n}(\theta)$.
	\end{tabnote}
		\begin{tabular}{ccccccc}
			&  & \multicolumn{2}{c}{r-ACC} &  & \multicolumn{2}{c}{IS-ABC}\\
			& Acceptance proportion & Coverage & Width &  & Coverage & Width\\
			\multirow{3}{*}{$\log R$} & $0.005$ & $0.96$ & $0.85$ & \multirow{3}{*}{} & $0.97$ & $0.95$\\
			& $0.1$ & $0.99$ & $0.97$ &  & $0.99$ & $1.24$\\
			& $0.4$ & $1.00$ & $1.17$ &  & $0.99$ & $1.96$\\
			&  &  &  &  &  & \\
			\multirow{3}{*}{$\log\sigma$} & $0.005$ & $0.96$ & $1.3$ & \multirow{3}{*}{} & $0.97$ & $1.63$\\
			& $0.1$ & $0.97$ & $1.37$ &  & $0.99$ & $1.92$\\
			& $0.4$ & $1.00$ & $1.51$ &  & $0.99$  & $2.29$\\
			&  &  &  &  &  & \\
			\multirow{3}{*}{$\log\phi$} & $0.005$ & $0.96$ & $0.28$ & \multirow{3}{*}{} & $0.97$  & $0.31$\\
			& $0.1$ & $0.99$ & $0.35$ &  & $0.99$ & $0.43$\\
			& $0.4$ & $0.98$ & $0.55$ &  & $1.00$ & $0.86$\\
%			&  &  &  &  &  & \\
%			\multirow{3}{*}{($\log R$, $\log\sigma$)} & $0.005$ & $0.96$ &  &  & $0.98$ & \\
%			& $0.1$ & $0.97$ &  &  & $0.95$ & \\
%			& $0.4$ & $0.99$ &  &  & $0.99$ & \\
%			&  &  &  &  &  & \\
%			\multirow{3}{*}{($\log R$, $\log\phi$)} & $0.005$ & $0.99$ &  &  & $0.99$ & \\
%			& $0.1$ & $1.00$ &  &  & $1.00$  & \\
%			& $0.4$ & $0.99$ &  &  & $1.00$  & \\
		\end{tabular}
	\label{EX2_results}
\end{table}

\section{Discussion}   
\label{sec:discuss}
In this paper, we re-frame the well-studied popular approximate Bayesian computing method within a frequentist context and justify its performance by standards set on the frequency coverage rate. In doing so, we develop a new computational technique called {\it approximate confidence distribution computing}, a likelihood-free method that does not depend on any Bayesian assumptions such as prior information. Rather than compare the output to a target posterior distribution, the new method quantifies the uncertainty in estimation by drawing upon a direct connection to a confidence distribution. This connection guarantees that confidence intervals/regions based on approximate confidence distribution computing methods attain the frequentist coverage property even in cases where one has a finite sample size and the cases when the summary statistic used in the computing is not sufficient. 
%capture the truth about the parameters of interest at the nominal level asymptotically. 
Thus we provide theoretical support for inference from approximate confidence distribution methods which include, but are not limited to, the special case where we do have prior information (i.e. approximate Bayesian computing). 
Furthermore, in the case where the selected summary statistic is sufficient, inference based on the results of Algorithm \ref{alg:rejACC} is equivalent to maximum likelihood inference. 
%\Minge{[Comment: mentioned this in the main text too]}. 
In addition to providing sound theoretical results for inference, the framework of approximate confidence distribution computing sets the user up for better computational performance by allowing the data to drive the algorithm through the choice of $r_{n}(\theta)$. The potential computational advantage of our method has been illustrated through simulation examples.  

%When the chosen

Different choices of summary statistics often lead to different approximate Bayesian computed posteriors  $\pi_{\veps}(\theta\mid s_{\rm obs})$ in Algorithms~\ref{alg:rejABC} and \ref{alg:ISABC} and different approximate confidence distribution  $q_{\veps}(\theta\mid s_{\rm obs})$ in
Algorithm~\ref{alg:rejACC}. 
We find the philosophical interpretation of the results admitted through approximate confidence distribution computing to be more natural than the Bayesian interpretation of approximate Bayesian computed posteriors. Within a frequentist setting, it makes sense to view the many different potential confidence distributions produced by our method resulting from different choices of summary statistics as various choices of (distribution) estimators. However, within the Bayesian framework, there is no clear way to choose from among the different approximate posteriors due to various choices of summary statistics. In particular, there is an ambiguity in defining the probability measure on the joint space $(\P, {\cal X})$ when choosing among different approximate Bayesian computed posteriors. Rather than engaging in a pursuit to define a moving target such as this, our method maintains a clear frequentist interpretation thereby offering a consistently cohesive interpretation of likelihood-free methods.

In Section~\ref{sec:rn}, one may wonder if an estimate, $\widehat{\theta},$ can be computed, then why not apply the parametric bootstrap method to construct confidence regions for $\theta$ as opposed to using Algorithm \ref{alg:rejACC}? Although no likelihood evaluation is needed, this bootstrap method has two drawbacks. First, the parametric bootstrap method is heavily affected by the quality of $\widehat{\theta}$. For example, a bootstrapped confidence interval is based on quantiles of $\widehat{\theta}$ from simulated datasets. A poor estimator $\widehat{\theta}$ typically leads to poor performing confidence sets. In contrast, in Section~\ref{sec:rn}, 
% Algorithm \ref{alg:rejACC}, 
$\widehat{\theta}$ is only used to construct the initial function estimate which is then updated by the data. Second, when it is more expensive to obtain $\widehat{\theta}$ than the summary statistic, the parametric bootstrap method is computationally more costly than Algorithm~\ref{alg:rejACC}, since $\widehat{\theta}$ needs to be calculated for each pseudo dataset. Example 4.2 in Section~\ref{sec:ex} provided an example of this type of scenario.

The function $r_{n}(\theta)$ serves as the role of an initial `distributional estimate'.
% and Algorithm \ref{alg:rejACC} can be viewed as an updating mechanism to obtain a better 'distributional estimate' that incorporates data information 
Even in the instance where $r_{n}(\theta)$ does not yield reasonable acceptance probabilities for Algorithm \ref{alg:rejACC},
% \Minge{[Comment: revise this paragraph in a more positive narrative]}
many of the established techniques used in approximate Bayesian computing 
%to improve computational efficiency
can be adapted naturally to Algorithm \ref{alg:rejACC} to improve computational performance.
% by replacing the prior distribution therein with $r_{n}(\theta)$, 
For example, the likelihood-free Markov chain Monte Carlo \cite[]{marjoram2003markov} and the dimension-reduction methods on the summary statistics \cite[]{Fearnhead2012}, among others, can improve Algorithm \ref{alg:rejACC} without sacrificing the inferential guarantees explored in this paper. Furthermore, these variants of Algorithm \ref{alg:rejACC} will be more efficient than the corresponding variants of Algorithm \ref{alg:rejABC}, since $r_{n}(\theta)$ is less dispersed than the prior distribution. 

%---------------------------------------------------------------------------------------------------------------------------------		
	\section*{Acknowledgment}
	% Acknowledgments should appear after the body of the paper but before any appendices and be as brief as possible subject to politeness. Information, such as contract numbers, of no interest to readers, must
%	be excluded.
	% The authors gratefully acknowledge the support from the US National Science Foundation 
	% through grants \# DMS 151348, 1737857, 1812048. 
	The research is supported in part by research grants from the US National Science Foundation (DMS151348, 1737857, 1812048). 
	The first author also acknowledges the generous graduate support from Rutgers University.
	
	\section*{Supplementary material}

	Further instructions will be given when a paper is accepted.

	\appendix
	\section*{Appendix 1}
		\subsection*{Example of a confidence distribution}
		Consider the following example taken from \cite{Singh2007}. Suppose $X_1,\dots,X_n$ is a sample from $N(\mu, \sigma^2)$ where both $\mu$ and $\sigma^2$ are unknown. A confidence distribution for parameter $\mu$ is the function $H_n(y) = F_{t_{(n-1)}}\left\{(y-\bar{X})/(s_{n}/\sqrt{n})  \right\}$ where $F_{t_{(n-1)}}(\cdot)$ is the cumulative distribution function of a Student's t-random variable with $n-1$ degrees of freedom and $\bar{X}$ and $s_n^2$ are the sample mean and variance, respectively. Here  $H_n(y)$ is a cumulative distribution function in the parameter space of $\mu$ from which we can construct confidence intervals of $\mu$ at all levels.  For example, for any $\alpha \in (0,1)$, one sided confidence intervals for $\mu$ are $(\infty, H_n^{-1}(\alpha)]$ and $[H_n^{-1}(\alpha), \infty).$
		Similarly, a confidence distribution for parameter $\sigma^2$ is the function $H_n(\sigma^2) = 1 - F_{\chi^2_{n-1}} \left[\{(n-1)s_n^2\}/(\sigma^2) \right],$ where $ F_{\chi^2_{n-1}}(\cdot)$ is the distribution function of a Chi-squared random variable with $n-1$ degrees of freedom. 
		Again, $H_n(\sigma^2)$ is a cumulative distribution function in the parameter space of $\sigma^2$ from which we can construct confidence intervals of $\sigma$ at all levels.  
	
		\subsection*{Lemma~\ref{ABClemma}} 
		\begin{proof} The density of $\pi_{\veps}$ can be expressed by
		\begin{align*}
		 \pi_{\veps}(\theta|s_{\rm obs}) &\propto  \int_{\mathbb{R}^{d}}\pi(\theta)f_n(s\mid\theta) K_{\veps}(s-s_{\rm obs})ds \\
		 &= \pi(\theta)\int \big\{ f_n(s_{\rm obs}\mid\theta) + Df_n(\bar{s}\mid\theta)^{T}(\bar{s}-s) \\
		 &\hspace{1.5cm}+ (1/2)(\bar{s} - s)^{T}Hf_n(\bar{s}\mid\theta)(\bar{s}-s)  \big\} K_{\veps}(s-s_{\rm obs})ds\\
		 &\propto \pi(\theta)f_n(s_{\rm obs}|\theta)+ O(\veps^2),
		\end{align*}
	
		where $ D f_n(\cdot \mid\theta)$ and $H f_n(\cdot \mid\theta)$ are the vector of first derivatives and matrix of second derivatives of $ f_n(\cdot \mid\theta)$, respectively, and $\bar s$ is a value/vector between $s_{\rm obs}$ and $s_{\rm obs} +  u \veps$.  
		The equality above holds due to a Taylor expansion of $f_n(\cdot \mid\theta)$ with respect to $s_{\rm obs}$ and the final proportion holds using the substitution  $ u = (s - s_{\rm obs})$ and that $\int_{\mathbb{R}^{d}} K_{\veps}( u)\,d u = 1$ and $\int_{\mathbb{R}^{d}} u K_{\veps}( u)\,d u = 0$.  
		\end{proof}
	
			\subsection*{Remark~\ref{remk1} in Section 2} 
			\begin{proof} By its definition, $H_n(\cdot) = H(\cdot, s_{\rm obs})$ is a sample-dependent cumulative distribution function on the parameter space. 
			We also have $H_n(\theta_0)  = H(\theta_0, s_{\rm obs}) 
			= \text{pr}^*(2\htheta - \theta \leq \theta_0 \mid S_n= s_{\rm obs} )
			= \text{pr}^*(\theta - \htheta \geq \htheta- \theta_0 \mid   S_n=s_{\rm obs} )
			= 1 - G(\htheta - \theta_0)$.  Since $G(t) = \text{pr}(\htheta - \theta \leq t \mid \theta = \theta_0)$, we have $G(\htheta - \theta_0) \sim Unif(0,1)$ under the probability measure of the random sample population. Thus,  as a function of the random $ S_n$, $H_n(\theta_0) = H_n(\theta_0, S_n) \sim Unif(0,1)$. By the univariate confidence distribution definition, $H_n(\cdot)$ is a confidence distribution function.  
			
			Furthermore, $H_n(\cdot)$ can provide us confidence intervals of any level. In particular, for any $\alpha \in (0,1)$, $
			\text{pr}\{\theta \leq H_n^{-1}(1-\alpha) \mid \theta = \theta_0 \}= \text{pr}\{H_n(\theta) \leq 1- \alpha \mid \theta = \theta_0\} =1- \alpha$. Thus, $(- \infty,  H_n^{-1}(1-\alpha)]$ is a $(1-\alpha)$-level confidence interval. Note that, $H_n(2\htheta- \theta_{\alpha}) = \text{pr}^*(2\theta_{ACC} - \theta \leq 2\theta- \theta_{\alpha}\mid S_n =  s_{\rm obs} )  = 1 - \text{pr}^*(\theta < \theta_{\alpha}\mid   S_n = s_{\rm obs} ) = 1 - \alpha$. So, $H_n^{-1}(1-\alpha) = 2\htheta- \theta_{\alpha}$. Therefore, $(- \infty,  2\htheta- \theta_{\alpha}]$ is also a $(1-\alpha)$-level confidence interval for $\theta$.
			\end{proof}
		
		\subsection*{Lemma~\ref{main1}} 
		\begin{proof}
		First note that 
		\bes 
		&&\mid \text{pr}\{\theta \in \Gamma_{1 - \alpha}(S_{n}) \mid \theta = \theta_0 \}  - (1 - \alpha)   \mid  =   \mid \text{pr}\{W( \theta, S_n) \in A_{1 - \alpha} \mid  \theta = \theta_0 \}  -  (1 - \alpha)  \mid  \\
		&\leq&   \mid  \text{pr}^*\{V(\theta, S_n) \in A_{1 - \alpha} \mid  S_n=s_{\rm obs}\} - (1 - \alpha)\mid  \\
		&&\hspace{1cm} +  \mid  \text{pr}\{W( \theta, S_n) \in A_{1 - \alpha} \mid  \theta = \theta_0 \} 
		 - \text{pr}^*\{V(\theta, S_n) \in A_{1 - \alpha} \mid  S_n = s_{\rm obs} \}
		\mid 
		\ees 
		and by the definition of $A_{1 - \alpha}$ in (4), $\mid \text{pr}^*\{V(\theta, S_n) \in A_{1 - \alpha} \mid  S_n = s_{\rm obs}\} - (1 - \alpha)\mid   = o(\delta')$, almost surely for a pre-selected precision number, $\delta'>0$. Therefore, by Condition \ref{cond:ACC_interval}, we have
		$\mid  \text{pr}\{\theta \in \Gamma_{1 - \alpha}(S_{n}) \mid \theta = \theta_0 \} - (1 - \alpha)\mid = \delta$
		% \leq o_p(\delta) + o_p(1) = o_p(1)$ and hence $ \mid  \text{pr}\{\theta_0 \in \Gamma_{1 - \alpha}(S_n) \} - (1 - \alpha)\mid  = o_p(\delta)$ for 
		where $\delta = \max\{\delta_{\veps},\delta'\}$. Furthermore, if Condition \ref{cond:ACC_interval} holds almost surely, then $\mid \text{pr}\{\theta \in \Gamma_{1 - \alpha}(S_{n}) \mid \theta = \theta_0 \} - (1 - \alpha)\mid  = o(\delta)$, almost surely. 
	\end{proof}

	\subsection*{Theorem 1}
	\begin{proof}
		% Without loss of generality, we assume $t$ and $S_n$ have the same dimension.
		% Denote by $g(t)$, the density of the pivot statistic $T = T(S_n, \theta)$. 
		% Since the distribution for the pivot is free of parameter, the conditional distribution of  $T =T(S_n, \btheta)$, for any given $\btheta$, is also $g(t)$. 
		Setting $W(\btheta, S_{n}) = T( \btheta, S_n)$ and by (\ref{eq:apivot}), we immediately have
		\begin{equation}\label{eq:AW}
		\text{pr}\{W(\btheta, S_{n}) \in A \mid \theta = \btheta_0\} =  \int_{t \in A}
		g(t) d t \,  \{1 + o(\delta^{''})\}, 
		\end{equation}
		for any Borel set $A \subset \mathbb{R}^{d}$.

		Let $f(s|\btheta)$ be the conditional density of $S_n$, given $\btheta$.
		Note that $t$ and $S_n$ have the same dimension.
		For a given $\btheta$ and with the variable transformation $T = T( \btheta, S_n)$, the density functions $g(t)$ and $f(s_{t, \theta}|\btheta)$ are connected by a Jacobi matrix: 
		$f(s_{t, \theta}|\btheta) |T^{(1)}( \btheta, s_{t, \theta})|^{-1} = g(t) \{1 + o(\delta^{''})\}$, where $T^{(1)}( \btheta, s) = \frac{\partial}{\partial s} T(\btheta, s)$ and $s_{t, \theta}$ is the solution of $t = T( \btheta, s)$. 
		
		In Algorithm \ref{alg:rejACC}, we simulate $\btheta' \sim r_{n}(\btheta)$ and $s' = S_n(x')$ with $x' | \btheta = \btheta' \sim M_{\theta'}$. 
		% Denote by $t' = T(s', \theta')$.  
		Furthermore, we only keep those pairs $(\btheta', s')$ with the kernel probability $K_{\veps}(s'-s_{\rm obs})$. Thus, the joint density function of a copy of $(\btheta', s')$ that are simulated and kept by Algorithm \ref{alg:rejACC}, conditional on observing $S_n  = s_{\rm obs}$, is
		\begin{align*}
		(\btheta', s') | S_n  = s_{\rm obs} \propto  r_{n}(\theta') f_n(s'\mid\theta') K_{\veps}(s'-s_{\rm obs}).
		% \\ & =   r_{n}(\theta') g(t')  |T^{(1)}(s', \btheta')|K\{\varepsilon^{-1}(s'-s_{\rm obs})\}' 
		\end{align*}
		Now, let $T' = T(\theta', s')$.  Perform a variable transformation from $(\btheta', s')$ to $(\btheta', T')$ with the Jacobi term $|T^{(1)}( \btheta', s_{T', \theta'})|^{-1}$, where $s_{T',\theta'}$ is a solution to $T' = T(\theta',s)$. Then, the joint conditional density of $(\btheta', T')$, conditional on $S_n  = s_{\rm obs}$,  is
		\begin{align*}
		(\btheta', T') | S_n  = s_{\rm obs} & \propto  r_{n}(\theta') f_n(s_{T',\theta'}\mid\theta') |T^{(1)}( \btheta', s_{T', \theta'})|^{-1} K_{\veps}(s_{T', \theta'}-s_{\rm obs}).
		\\ & =   r_{n}(\theta') g(T')  K_{\veps}(s_{T', \theta'}-s_{\rm obs})\{1 + o(\delta^{''})\}.
		\end{align*}
		Therefore, $T' = T(\theta', s')$, the approximate pivot statistic generated from Algorithm \ref{alg:rejACC}, with distribution conditional on $S_n  = s_{\rm obs}$: 
		\begin{align*}
		T' | S_n  = s_{\rm obs} & \propto g(t')  \{1 + o(\delta^{''})\} \int  r_{n}(\theta')  K_{\veps}(s_{t', \theta'}-s_{\rm obs}) d \btheta'
		\end{align*}
		If requirement (\ref{eq:req}) is satisfied,  
		%and we set $V(s',  \btheta') = T' = T(s', \theta')$,  
		then we have 
		$$
		T' | S_n  = s_{\rm obs} \sim g(T') \{1 + o(\delta^{''})\} \{1 + o(\delta^{'}_\veps)\}.
		$$ 
		Set $V( \btheta',s') = T' = T( \theta', s')$ and denote by $\btheta_{\rm ACC}$ the $\btheta'$ accepted by the ACC algorithm. We have
		$$
		\text{pr}^*\{V(\btheta_{\rm ACC}, S_n) \in A \mid  S_n= s_{\rm obs} \} 
		= \int_{t \in A} g(t)  dt \{1 + o(\delta^{''})\} \{1 + o(\delta^{'}_\veps)\}
		$$
		Thus, together with (\ref{eq:AW}), Condition \ref{cond:ACC_interval} is satisfied for $\delta_\veps = \max\{\delta^{''}, \delta^{'}_{\veps}\}$.
		Furthermore, by Lemma 2, the rest of the statements in the theorem also hold.  
	\end{proof}

	\subsection*{Corollary 1}
	\begin{proof}
		%We just verify requirement (\ref{eq:req}), that $ \int  r_{n}(\sigma) K\left\{\veps^{-1} \left(s_{T, \sigma}-s_{\rm obs}  \right)  \right\}d \sigma$ is free of  $T$. 
		Here we prove requirement (\ref{eq:req}) for Part 2, data from a scale family. The proofs for Part 1 (location family) and Part 3 (location and scale family) are similar and thus omitted.
		
		In particular, in a scale family suppose $S_n$ has the density $(1/\sigma) g_2(S_n / \sigma)$. Then $T = T(\sigma, S_n) = S_n/ \sigma \sim g_2(t)$ is a pivot. So, for any given $(t,\sigma)$ pair we have $s_{t,\sigma} = t \sigma$. Thus, with variable transformation $u =  t\sigma - s_{\rm obs}$ 
		%and assuming that $K(\cdot)$ is symmetric,  
		we have
		\bes 
		\int  r_{n}(\sigma) K_{ \veps}\left(s_{T, \sigma}-s_{\rm obs}\right)  d \sigma
		&=& \int  \frac{1}{\sigma} K_{ \veps}\left(s_{T, \sigma}-s_{\rm obs}\right) d \sigma \\
		&=& \int \frac{1}{u + s_{\rm obs}} K_{\veps} \left(u + s_{\rm obs} - s_{\rm obs} \right) du 
		%= s_{\rm obs}^{-1} + o(\veps), 
		\ees 
		which is free of $t$. Therefore, the requirement (\ref{eq:req}) is satisfied in this case.
		Furthermore, the function $H_2(\hat{\sigma}^2_{S}, x) = 1 - \int_{0}^{\hat{\sigma}^{2}_{S}/x}g_2(w)dw$ is a confidence distribution for $\sigma^2$ % induced by $\hat{\sigma}^{2}_{S}/\sigma^2$,  
		since (1) given $S$, $H_2(\hat{\sigma}^2_{S}, x) $ is a distribution function on the parameter space $(0, \infty)$ and (2) given $x = \sigma^2_0$, % the true parameter value, 
		$H_2(\hat{\sigma}^2_{S}, x) \sim U(0,1)$.
	\end{proof}

\subsection*{Additional Conditions and notations for the remaining proofs}

\noindent Let $N(\bx;\bmu,\Sigma)$ be the normal density at $\bx$
with mean $\bmu$ and variance $\Sigma$, and $\ftil_{n}(\bs\mid\btheta)=N\{\bs;\bs(\btheta),A(\btheta)/a_{n}^{2}\}$, the asymptotic distribution of the summary statistic.
We define $a_{n,\veps}=a_{n}$ if $\lim_{n\rightarrow\infty}a_{n}\veps_{n}<\infty$
and $a_{n,\veps}=\veps_{n}^{-1}$ otherwise, and $\ensuremath{c_{\veps}=\lim_{n\rightarrow\infty}a_{n}\veps_{n}}$,
both of which summarize how $\veps_{n}$ decreases relative to the
converging rate, $a_{n}$, of $S_{n}$ in Condition \ref{sum_conv} below. Define
the standardized random variables $W_{n}(S_{n})=a_{n}A(\theta)^{-1/2}\{S_{n}-\eta(\theta)\}$
and $W_{\rm obs}=a_{n}A(\theta)^{-1/2}\{s_{\rm obs}-\eta(\theta)\}$
according to Condition \ref{sum_conv} below. Let $f_{W_{n}}(w\mid\theta)$ and
$\ftil_{W_{n}}(w\mid\theta)$ be the density for $W_{n}(S_{n})$ when
$S_{n}\sim f_{n}(\cdot\mid\theta)$ and $\ftil_n(\cdot\mid\theta)$ respectively.
Let $B_{\delta}=\{\theta\mid\|\theta-\theta_{0}\|\leq\delta\}$ for
$\delta>0$. Define the initial density truncated in $B_{\delta}$,
i.e. $r_{n}(\theta)\mathbb{I}_{\theta\in B_{\delta}}/\int_{B_{\delta}}r_{n}(\theta)\,d\theta$,
by $r_{\delta}(\theta)$. Let $t(\theta)=a_{n,\veps}(\theta-\theta_{0})$
and $v(s)=\veps_{n}^{-1}(s - s_{\rm obs})$. For any $A\in\mathscr{B}^{p}$
where $\mathscr{B}^{p}$ is the Borel sigma-field on $\mathbb{R}^{p}$,
let $t(A)$ be the set $\{\phi:\phi=t(\theta)\text{ for some }\theta\in A\}$.
For a non-negative function $h(x)$, integrable in $\mathbb{R}^{l}$,
denote the normalized function $h(x)/\int_{\mathbb{R}^{l}}h(x)\,dx$
by $h(x)^{({\rm norm})}$. For a function $h(x)$, denote its gradient
by $D_{x}h(x)$, and for simplicity, omit $\theta$ from $D_{\theta}$. For a sequence $x_n$, we use the notation $x_n = \Theta(a_n)$ to mean that there exist some constants $m$ and $M$ such that $0<m<\mid x_n/a_n \mid<M<\infty$.

\begin{condition} \label{sum_conv}
There exists a sequence $a_{n}$, satisfying $a_{n}\rightarrow\infty$
as $n\rightarrow\infty$, a $d$-dimensional vector $\eta(\btheta)$
and a $d\times d$ matrix $A(\btheta)$, such that for $S_{n}\sim f_{n}(\cdot\mid\theta)$
and all $\btheta\in\mathcal{P}_{0}$, 
\[
a_{n}\{\bS_{n}-\eta(\btheta)\}\rightarrow N\{0,A(\btheta)\},\mbox{ as \ensuremath{n\rightarrow\infty}},
\]
in distribution. We also assume that $\bs_{\rm obs}\rightarrow\eta(\btheta_{0})$
in probability. Furthermore, it holds that (i) $\eta(\btheta)$
and $A(\btheta)\in C^{1}(\mathcal{P}_{0})$, and $A(\btheta)$ is
positive definite for any $\btheta$; (ii) for any $\delta>0$ there
exists a $\delta'>0$ such that $\|\eta(\btheta)-\eta(\btheta_{0})\|>\delta'$
for all $\btheta$ satisfying $\|\btheta-\btheta_{0}\|>\delta$; and
(iii) $I(\theta)\triangleq\left\{ \frac{\partial}{\partial\theta}\eta(\theta)\right\} ^{T}A^{-1}(\theta)\left\{ \frac{\partial}{\partial\theta}\eta(\theta)\right\} $
has full rank at $\btheta=\btheta_{0}$.
\end{condition}

\begin{condition} \label{kernel_prop}
The kernel satisfies (i) $\int vK_{\veps}(v)dv=0$; (ii)$\prod_{k=1}^{l}v_{i_{k}}K_{\veps}(v)dv<\infty$
for any coordinates $(v_{i_{1}},\dots,v_{i_{l}})$ of $v$ and $l\leq p+6$;
(iii)$K_{\veps}(v)\propto K_{\veps}(\|v\|_{\Lambda}^{2})$ where $\|v\|_{\Lambda}^{2}=v^{T}\Lambda v$
and $\Lambda$ is a positive-definite matrix, and $K(v)$ is a decreasing
function of $\|v\|_{\Lambda}$; (iv) $K_{\veps}(v)=O(\exp\{-c_{1}\|v\|^{\alpha_{1}}\})$
for some $\alpha_{1}>0$ and $c_{1}>0$ as $\|v\|\rightarrow\infty$. 
\end{condition}

\begin{condition} \label{sum_approx}
There exists $\alpha_{n}$ satisfying $\alpha_{n}/a_{n}^{2/5}\rightarrow\infty$
and a density $r_{max}(w)$ satisfying Condition \ref{kernel_prop}(ii)--(iii) where $K_{\veps}(v)$
is replaced with $r_{max}(w)$, such that $\sup_{\theta\in B_{\delta}}\alpha_{n}\mid f_{W_{n}}(w\mid\theta)-\ftil_{W_{n}}(w\mid\theta)\mid\leq c_{3}r_{max}(w)$
for some positive constant $c_{3}$. 
\end{condition}

\begin{condition} \label{sum_approx_tail}
The following statements hold: (i) $r_{max}(w)$ satisfies
Condition \ref{kernel_prop}(iv); and (ii) $\sup_{\theta\in B_{\delta}^{C}}\ftil_{W_{n}}(w\mid\theta)=O(e^{-c_{2}\|w\|^{\alpha_{2}}})$
as $\|w\|\rightarrow\infty$ for some positive constants $c_{2}$
and $\alpha_{2}$, and $A(\theta)$ is bounded in ${\cal P}$. 
\end{condition}

\begin{condition} \label{cond:likelihood_moments}
The first two moments, $\int_{\mathbb{R}^{d}}s\ftil_{n}(s\mid\theta)ds$
and $\int_{\mathbb{R}^d}s^{T}s\ftil_{n}(s\mid\theta)ds$, exist. 
\end{condition}

\subsection*{Proof of Theorem 2}

\noindent Let $\tilde{Q}(\theta\in A\mid s)=\int_{A}r_{\delta}(\theta)\ftil_{n}(s\mid\theta)\,d\theta/\int_{\mathbb{R}^{p}}r_{\delta}(\theta)\ftil_{n}(s\mid\theta)\,d\theta$. 

\begin{lemma}\label{Alemma1} Assume Condition \ref{par_true}--\ref{sum_approx}. If $\veps_{n}=O(a_{n}^{-1})$, for any fixed $\nu\in\mathbb{R}^{d}$
and small enough $\delta$, 
\[
\sup_{A\in\mathfrak{B}^{p}}\left|\tilde{Q}\{a_{n}(\theta-\theta_{0})\in A\mid s_{\rm obs}+\veps_{n}\nu\}-\int_{A}N[t;\beta_{0}\{A(\theta_{0})^{1/2}W_{\rm obs}+c_{\veps}\nu\},I(\theta_{0})^{-1}]dt\right|\rightarrow0,
\]
in probability as $n\rightarrow\infty$, where $\beta_{0}=I(\theta_{0})^{-1}D\eta(\theta_{0})^{T}A(\theta_{0}^{-1})$.
\end{lemma}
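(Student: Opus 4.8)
The plan is to compute the posterior-like distribution $\tilde Q$ explicitly, exploiting the fact that $\ftil_n(s\mid\theta)=N\{s;\eta(\theta),A(\theta)/a_n^2\}$ is Gaussian in $s$ and approximately Gaussian in $t(\theta)=a_n(\theta-\theta_0)$ near $\theta_0$. First I would substitute $s=s_{\rm obs}+\veps_n\nu$ and rewrite $\ftil_n(s_{\rm obs}+\veps_n\nu\mid\theta)$ using the definition $W_{\rm obs}=a_n A(\theta_0)^{-1/2}\{s_{\rm obs}-\eta(\theta_0)\}$, so that the exponent becomes a quadratic form in $a_n\{\eta(\theta)-\eta(\theta_0)\}$, $A(\theta_0)^{1/2}W_{\rm obs}$, and $a_n\veps_n\nu$. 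Under $\veps_n=O(a_n^{-1})$ we have $a_n\veps_n\to c_\veps<\infty$, so the term $a_n\veps_n\nu$ contributes a bounded shift. Next I would localize: since $r_\delta$ is the truncation of $r_n$ to $B_\delta$, and by Condition~\ref{par_true} $r_n\in C^2(\mathcal P_0)$ with $r_n(\theta_0)>0$, the factor $r_\delta(\theta)$ is, after the change of variables $\theta=\theta_0+t/a_n$, equal to $r_n(\theta_0)\{1+O(a_n^{-1}\|t\|)\}$ uniformly on compacta; the normalization of $r_\delta$ cancels in the ratio defining $\tilde Q$, so the exact value of $r_n(\theta_0)$ is irrelevant.

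The core step is a local expansion of $\eta$. Writing $\eta(\theta_0+t/a_n)=\eta(\theta_0)+a_n^{-1}D\eta(\theta_0)t+O(a_n^{-2}\|t\|^2)$ (valid since $\eta\in C^1$, with the quadratic remainder handled by restricting to $B_\delta$ and using the exponential tail bounds from Conditions~\ref{kernel_prop}(iv) and~\ref{sum_approx_tail} to control the contribution of large $\|t\|$), the exponent in $\ftil_n$ becomes
\[
-\tfrac12\{A(\theta_0)^{1/2}W_{\rm obs}+c_\veps\nu - D\eta(\theta_0)t\}^T A(\theta_0)^{-1}\{A(\theta_0)^{1/2}W_{\rm obs}+c_\veps\nu - D\eta(\theta_0)t\}+o_p(1),
\]
where I have also used $A(\theta)=A(\theta_0)+O(a_n^{-1}\|t\|)$ from Condition~\ref{sum_conv}(i). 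Completing the square in $t$ with respect to the matrix $D\eta(\theta_0)^TA(\theta_0)^{-1}D\eta(\theta_0)=I(\theta_0)$ yields a Gaussian density in $t$ with precision $I(\theta_0)$ and mean $I(\theta_0)^{-1}D\eta(\theta_0)^TA(\theta_0)^{-1}\{A(\theta_0)^{1/2}W_{\rm obs}+c_\veps\nu\}=\beta_0\{A(\theta_0)^{1/2}W_{\rm obs}+c_\veps\nu\}$, exactly the claimed limit. The ratio form of $\tilde Q$ means the $\theta$-independent Gaussian prefactors cancel, leaving precisely $N[t;\beta_0\{A(\theta_0)^{1/2}W_{\rm obs}+c_\veps\nu\},I(\theta_0)^{-1}]$ in the limit.

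The main obstacle is making the uniform-over-$A\in\mathfrak{B}^p$ convergence rigorous rather than just pointwise, and in particular controlling the tails: the integrand is evaluated over all of $\mathbb{R}^p$ (after rescaling) but the quadratic approximation of $\eta$ is only good on $B_\delta$. I would handle this by splitting the integral at $\|\theta-\theta_0\|=\delta$, bounding the outer piece using the exponential decay of $\ftil_{W_n}$ off $B_\delta$ (Condition~\ref{sum_approx_tail}(ii)) together with $a_n\veps_n$ bounded, and then invoking a Scheffé-type argument: pointwise convergence of the (proper) densities plus convergence of the normalizing constants upgrades to convergence in total variation, which is exactly the supremum over Borel sets. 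The probabilistic "$o_p(1)$" enters only through $W_{\rm obs}$, which is $O_p(1)$ by Condition~\ref{sum_conv}, and through $s_{\rm obs}\to\eta(\theta_0)$; all other estimates are deterministic given the conditions. This lemma is the $\tilde Q$-analogue (with data-dependent $r_\delta$ replacing a fixed prior) of the corresponding step in \cite{Li2017}, so the structure of that argument can be followed, with Conditions~\ref{par_true}--\ref{initial_gradient} supplying exactly the control on $r_n$ needed where boundedness of the prior was previously used.
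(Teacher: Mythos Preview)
Your overall strategy---direct Laplace/Gaussian expansion of $\ftil_n$ around $\theta_0$, completion of the square to read off the limiting normal, then a Scheff\'e argument to upgrade to total variation---is sound and would work, but it differs from the paper's route and your handling of $r_n$ has a real gap.

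The gap is your claim that $r_\delta(\theta_0+t/a_n)=r_n(\theta_0)\{1+O(a_n^{-1}\|t\|)\}$ ``by Condition~\ref{par_true}, $r_n\in C^2(\mathcal P_0)$''. Smoothness of $r_n$ for each fixed $n$ gives you nothing uniform in $n$: $r_n$ is data-dependent and concentrates at rate $\tau_n$, so the mean-value bound on $|r_n(\theta_0+t/a_n)-r_n(\theta_0)|$ involves $\sup\|D r_n\|$, which is of order $\tau_n^{p+1}$, not $O(1)$. The correct statement, which needs Conditions~\ref{initial_upper}--\ref{initial_gradient} (not Condition~\ref{par_true}), is
\[
\tau_n^{-p}r_\delta(\theta_0+t/a_n)=\tau_n^{-p}r_\delta(\theta_0)\{1+O_p(\tau_n a_n^{-1}\|t\|)\},
\]
and then $\tau_n=o(a_n)$ from Condition~\ref{initial_upper} makes the error $o_p(1)$ for bounded $\|t\|$. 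This is exactly the point where the data-dependent $r_n$ differs from a fixed prior, and your write-up papers over it; your closing remark that Conditions~\ref{par_true}--\ref{initial_gradient} supply the needed control is correct in spirit but contradicts the explicit rate you wrote earlier. You also need this $\tau_n/a_n$ rate, together with $\int\|t\|\,\ftil_n\,dt/\int\ftil_n\,dt=O_p(1)$, to control the $r_n$-variation inside the normalizing integral, not just pointwise.

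The paper takes a different, more modular route: rather than redo the full Gaussian calculation, it invokes the existing result for a fixed prior $\pi$ (Lemma~1 of \cite{Li2017}) and reduces the problem to showing
\[
\sup_{A}\Bigl|\tilde Q\{t(\theta)\in A\mid s_{\rm obs}+\veps_n\nu\}-\tPi\{t(\theta)\in A\mid s_{\rm obs}+\veps_n\nu\}\Bigr|=o_p(1).
\]
This is done via two auxiliary quantities $\phi_1\{\tau;n\}$ and $\phi_2\{\tau;n\}$ that measure, respectively, the relative variation of a weight function $\tau$ against the $\ftil_n$-measure and the ratio of $\ftil_n$-integrals with and without $\tau$; one checks $\phi_1=o_p(1)$ and $\phi_2=O_p(1)$ for any $\tau$ in a class containing both $\pi$ and $\tau_n^{-p}r_\delta$. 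The bound on $\phi_1$ is precisely where Conditions~\ref{initial_upper}--\ref{initial_gradient} and the moment bound $\int\|t\|\,\ftil_n/\int\ftil_n=O_p(1)$ enter. Your direct approach can be made to work once the $r_n$ control is fixed, and it is more self-contained; the paper's reduction avoids repeating the Gaussian calculation and isolates exactly what is new relative to the fixed-prior case.
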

\begin{proof}
With Lemma 1 from \cite{Li2017}, it is sufficient to show that 
\[
\sup_{A\in\mathfrak{B}^{p}}\mid\tilde{Q}\{t(\theta)\in A\mid s_{\rm obs}+\veps_{n}\nu\}-\tPi\{t(\theta)\in A\mid s_{\rm obs}+\veps_{n}\nu\}\mid=o_{P}(1),
\]
where $\tPi$ denotes $\tilde{Q}$ using $r_{n}(\theta)$ rather than a prior
$\pi(\theta)$ with a density satisfying Condition \ref{par_true}. With the transformation $t=t(\theta)$
and $v=v(s)$, the left hand side of the above equation can be written
as 
\begin{eqnarray}
&&\sup_{A\in\mathfrak{B}^{p}}\mid\frac{\int_{A}r_{\delta}(\theta+a_{n}^{-1}t)\ftil_{n}(s_{{\rm obs}}+\veps_{n}\nu\mid\theta+a_{n}^{-1}t)dt}{\int_{\mathbb{R}^{p}}r_{\delta}(\theta+a_{n}^{-1}t)\ftil_{n}(s_{{\rm obs}}+\veps_{n}\nu\mid\theta+a_{n}^{-1}t)dt}-\hfill \\ \notag
&&\hspace{3cm}\frac{\int_{A}\pi(\theta+a_{n}^{-1}t)\ftil_{n}(s_{{\rm obs}}+\veps_{n}\nu\mid\theta+a_{n}^{-1}t)dt}{\int_{\mathbb{R}^{p}}\pi(\theta+a_{n}^{-1}t)\ftil_{n}(s_{{\rm obs}}+\veps_{n}\nu\mid\theta+a_{n}^{-1}t)dt}\mid.\label{eq1}
\end{eqnarray}
 For a function $\tau:\mathbb{R}^{p}\rightarrow\mathbb{R},$ define
the following auxiliary functions,
\begin{eqnarray*}
\phi_{1}\{\tau(\theta);n\} & = & \frac{\int_{t(B_{\delta})}|\tau(\theta+a_{n}^{-1}t)-\tau(\theta)|\ftil_{n}(s_{{\rm obs}}+\veps_{n}\nu\mid\theta+a_{n}^{-1}t)\,dt}{\int_{t(B_{\delta})}\tau(\theta+a_{n}^{-1}t)\ftil_{n}(s_{{\rm obs}}+\veps_{n}\nu\mid\theta+a_{n}^{-1}t)\,dt},\\
\phi_{2}\{\tau(\theta);n\} & = & \frac{\tau(\theta)\int_{t(B_{\delta})}\ftil_{n}(s_{{\rm obs}}+\veps_{n}\nu\mid\theta+a_{n}^{-1}t)dt}{\int_{t(B_{\delta})}\tau(\theta+a_{n}^{-1}t)\ftil_{n}(s_{{\rm obs}}+\veps_{n}\nu\mid\theta_{0}+a_{n}^{-1}t)dt}.
\end{eqnarray*}
Then by adding and subtracting $\phi_{2}\{\tau_{n}^{-p}r_{\delta}(\theta);n\}\phi_{2}\{\pi(\theta);n\}$
in the absolute sign of \eqref{eq1}, \eqref{eq1} can be bounded
by 
\begin{eqnarray*}
\phi_{1}\{\tau_{n}^{-p}r_{\delta}(\theta);n\}+\phi_{1}\{\pi(\theta);n\}\phi_{2}\{\tau_{n}^{-p}r_{\delta}(\theta);n\}+\phi_{1}\{\tau_{n}^{-p}r_{\delta}(\theta);n\}\phi_{2}\{\pi(\theta);n\}+\phi_{1}\{\pi(\theta);n\}.
\end{eqnarray*}
Consider a class of function $\tau(\theta)$ satisfying the following
conditions: 

There exists a series $\{k_{n}\}$, such that $\sup_{\theta\in\mathcal{P}_{0}}\|k_{n}^{-1}D\tau(\theta)\|<\infty$
and $k_{n}=o(a_{n});$ 

$\tau(\theta_{0})>0$ and $\tau(\theta)\in C^{1}(B_{\delta}).$ 

By Conditions \ref{par_true}--\ref{initial_gradient}, $\tau_{n}^{-p}r_{\delta}(\theta)$
and $\pi(\theta)$ belong to the above class. Then if $\phi_{1}\{\tau(\theta);n\}$
is $o_{p}(1)$ and $\phi_{2}\{\tau(\theta);n\}$ is $O_{p}(1)$, \eqref{eq1}
is $o_{p}(1)$ and the lemma holds. 

First, from $(ii)$, there exists an open set $\omega\subset B_{\delta}$
such that $\inf_{\theta\in\omega}\tau(\theta)>c_{1}$, for a constant
$c_{1}>0$. Then for $\phi_{2}\{\tau(\theta);n\}$, it is bounded
by 
\[
\frac{\tau(\theta)}{c_{1}\int_{t(\omega)}\ftil_{n}(s_{{\rm obs}}+\veps_{n}\nu\mid\theta_{0}+a_{n}^{-1}t)^{(norm)}dt}.
\]
From equation (7) in the supplementary material of \cite{Li2016},
$\ftil_{n}(s_{{\rm obs}}+\veps_{n}\nu\mid\theta+a_{n}^{-1}t)$
can be written in the following form, 
\begin{eqnarray}
a_{n}^{d}\ftil_{n}(s_{{\rm obs}}+\veps_{n}\nu\mid\theta+a_{n}^{-1}t)=\frac{1}{\|B_{n}(t)\|^{1/2}}N[C_{n}(t)\{A_{n}(t)t-b_{n}\nu-c_{2}\};\theta,I_{d}],\label{eq2}
\end{eqnarray}
where $A_{n}(t)$ is a series of $d\times p$ matrix functions, $\{B_{n}(t)\}$
and $\{C_{n}(t)\}$ are a series of $d\times d$ matrix functions,
$b_{n}$ converges to a non-negative constant and $c_{2}$ is a constant,
and the minimum of absolute eigenvalues of $A_{n}(t)$ and eigenvalues
of $B_{n}(t)$ and $C_{n}(t)$ are all bounded and away from $0$.
Then for fixed $\nu$, by continuous mapping, \eqref{eq2} is away
from zero with probability one. Therefore $\phi_{2}\{\tau(\theta);n\}=O_{P}(1)$.

Second, by Taylor expansion, $\tau(\theta+a_{n}^{-1}t)=\tau(\theta)+a_{n}^{-1}D\tau(\theta+e_{t}t)t$,
where $\|e_{t}\|\leq a_{n}^{-1}$. Then 
\begin{eqnarray}
\phi_{1}\{\tau(\theta);n\} & = & \frac{k_{n}\phi_{2}\{\tau(\theta);n\}}{a_{n}\tau(\theta)}\frac{\int_{t(B_{\delta})}|k_{n}^{-1}D\tau(\theta+e_{t}t)t|\ftil_{n}(s_{{\rm obs}}+\veps_{n}\nu\mid\theta+a_{n}^{-1}t)\,dt}{\int_{t(B_{\delta})}\ftil_{n}(s_{{\rm obs}}+\veps_{n}\nu\mid\theta+a_{n}^{-1}t)\,dt}\nonumber \\
 & \leq & \frac{k_{n}\phi_{2}\{\tau(\theta);n\}}{a_{n}\tau(\theta)}\sup_{\theta\in B_{\delta}}\|k_{n}^{-1}D\tau(\theta)\|\frac{\int_{t(B_{\delta})}\|t\|a_{n}^{d}\ftil_{n}(s_{{\rm obs}}+\veps_{n}\nu\mid\theta+a_{n}^{-1}t)dt}{\int_{t(B_{\delta})}a_{n}^{d}\ftil_{n}(s_{{\rm obs}}+\veps_{n}\nu\mid\theta+a_{n}^{-1}t)\,dt},\label{eq3}
\end{eqnarray}
where the inequality holds by the triangle inequality. By the expression
\eqref{eq2} and Lemma 7 in the supplementary material of \cite{Li2016},
the right hand side of \eqref{eq3} is $O_{P}(1)$. Then together
with $\phi_{2}\{\tau(\theta);n\}=\Theta_{P}(1)$, $\phi_{1}\{\tau(\theta);n\}=o_{P}(1).$
Therefore the Lemma holds.
\end{proof}
\noindent Define the joint density of $(\theta,s)$ in Algorithm 2
and its approximation, where the s-likelihood is replaced by its Gaussian
limit and $r_{n}(\theta)$ by its truncation, by $q_{\veps}(\theta,s)$
and $\tilde{q}_{\veps}(\theta,s)$. It is easy to see that,
\begin{align*}
q_{\veps}(\theta,s) & =\frac{r_{n}(\theta)f_{n}(s|\theta)K_{\veps_{n}}(s-s_{\rm obs})}{\int_{\mathbb{R}^{p}\times\mathbb{R}^{d}}r_{n}(\theta)f_{n}(s|\theta)K_{\veps_{n}}(s-s_{\rm obs})\,d\theta ds},\\
\tilde{q}_{\veps}(\theta,s) & =\frac{r_{\delta}(\theta)\ftil_{n}(s|\theta)K_{\veps_{n}}(s-s_{\rm obs})}{\int_{\mathbb{R}^{p}\times\mathbb{R}^{d}}r_{\delta}(\theta)\ftil_{n}(s|\theta)K_{\veps_{n}}(s-s_{\rm obs})\,d\theta ds}.
\end{align*}
Let $\tilde{Q}_{\veps}(\theta\in A\mid s_{\rm obs})$ be the approximate confidence distribution function, $\int_{A}\int_{\mathbb{R}^{d}}\tilde{q}_{\veps}(\theta,s)\,dsd\theta$.
With the transformation $t=t(\theta)$ and $v=v(s)$, let $\tilde{q}_{\veps,t\nu}(t,v)=\tau_{n}^{-p}r_{\delta}(\theta+a_{n,\veps}^{-1}t)\ftil_{n}(s_{\rm obs}+\veps_{n}\nu\mid\theta+a_{n,\veps}^{-1}t)K_{\veps}(\nu)$
be the transformed and unnormalized $\tilde{q}_{\veps}(\theta,s)$, and
$\tilde{q}_{A,tv}(h)=\int_{A}\int_{\mathbb{R}^{d}}h(t,v)\tilde{q}_{\veps,t\nu}(t,v)\,dvdt$
for any function $h(\cdot,\cdot)$ in $\mathbb{R}^{p}\times\mathbb{R}^{d}$.
Denote the factor of $\tilde{q}_{\veps,t\nu}(t,v)$, $\tau_{n}^{-p}r_{\delta}(\theta+a_{n,\veps}^{-1}t)$,
by $\gamma_{n}(t)$. Let $\gamma=\lim_{n\rightarrow\infty}\tau_{n}^{-p}r_{\delta}(\theta)$
and $\gamma(t)=\lim_{n\rightarrow\infty}\tau_{n}^{-p}r_{\delta}(\theta+\tau_{n}^{-1}t)$,
the limits of $\gamma_{n}(t)$ when $a_{n,\veps}=a_{n}$ and
$a_{n,\veps}=\tau_{n}$ respectively. By Condition \ref{initial_upper} and \ref{initial_lower},
$\gamma(t)$ exists and $\gamma$ is non-zero with positive probability.
Here several functions of $t$ and $v$ defined in \cite[proofs for Section 3.1]{Li2017}
and relate to the limit of $\tilde{q}_{\veps,t\nu}(t,v)$ are used, including
$g(v;A,B,c)$, $g_{n}(t,v)$, $G_{n}(v)$ and $g_{n}'(t,v)$. Furthermore
several functions defined by integration as following are used: for
any $A\in\mathfrak{B}^{p}$, let $g_{A,r}(h)=\int_{\mathbb{R}^{d}}\int_{t(A)}h(t,v)\gamma_{n}(t)g_{n}(t,v)\,dtdv$,
$G_{n,r}(v)=\int_{t(B_{\delta})}\gamma_{n}(t)g_{n}(t,v)\,dt$, $q_{A}(h)=\int_{A}\int_{\mathbb{R}^{d}}h(\theta,s)r_{n}(\theta)f_{n}(s\mid\theta)K_{\veps}(s-s_{\rm obs})\veps_{n}^{-d}\,dsd\theta$
and $\tilde{q}_{A}(h)=\int_{A}\int_{\mathbb{R}^{d}}h(\theta,s)r_{\delta}(\theta)\ftil_{n}(s\mid\theta)K_{\veps}(s-s_{\rm obs})\veps_{n}^{-d}\,dsd\theta$,
which generalize those defined in \cite[proofs for Section 3.1]{Li2017}
for the case $r_{n}(\theta)=\pi(\theta)$. 

\begin{lemma}\label{Alemma2} Assume Condition \ref{par_true}--\ref{kernel_prop}. If $\veps_{n}=o(a_{n}^{-1/2})$, then 
\begin{enumerate}
\item[(i)] $\int_{\mathbb{R}^{d}}\int_{t(B_{\delta})}|\tilde{q}_{\veps,t\nu}(t,\nu)-\gamma_{n}(t)g_{n}(t,\nu)|\,dtd\nu=o_{p}(1)$;
\item[(ii)] $g_{B_{\delta},r}(1)=\Theta_{P}(1);$ 
\item[(iii)] $\tilde{q}_{B_{\delta},tv}(t^{k_{1}}v^{k_{2}})/\tilde{q}_{B_{\delta},tv}(1)=g_{B_{\delta},r}(t^{k_{1}}v^{k_{2}})/g_{B_{\delta},r}(1)+O_{P}(a_{n,\veps}^{-1})+O_{P}(a_{n}^{2}\veps_{n}^{4})$
for $k_{1}$ and $k_{2}$ \textcolor{red}{??} 
\item[(iv)] $\tilde{q}_{B_{\delta}}(1)=\tau_{n}^{p}a_{n,\veps}^{d-p}\left\{ \int_{t(B_{\delta})}\int_{\mathbb{R}^{d}}\gamma_{n}(t)g_{n}(t,\nu)d\tau d\nu+O_{P}(a_{n,\veps}^{-1})+O_{P}(a_{n}^{2}\veps_{n}^{4})\right\} $. 
\end{enumerate}
\end{lemma}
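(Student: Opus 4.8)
The plan is to derive all four parts from the explicit Gaussian representation \eqref{eq2} of $a_n^d\ftil_n(s_{\rm obs}+\veps_n\nu\mid\theta_0+a_{n,\veps}^{-1}t)$ (and its $a_{n,\veps}$-scaled analogue), established in the supplementary material of \cite{Li2016}, combined with the control of the standardized initial density $\gamma_n(t)=\tau_n^{-p}r_\delta(\theta_0+a_{n,\veps}^{-1}t)$ supplied by Conditions \ref{par_true}--\ref{initial_gradient}. The argument runs in parallel with the prior-based development in \cite[proofs for Section 3.1]{Li2017}, where the fixed prior $\pi(\theta_0+a_{n,\veps}^{-1}t)\to\pi(\theta_0)$ played the role now taken by $\gamma_n(t)$; the extra work is to verify that every estimate there that used the positivity, $C^1$ smoothness and boundedness of $\pi$ still goes through for $\gamma_n$, which may be data dependent and, in the regime $a_{n,\veps}=\veps_n^{-1}$, need not be asymptotically flat in $t$. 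I would keep the two regimes $\lim_n a_n\veps_n<\infty$ (so $a_{n,\veps}=a_n$ and $\gamma_n(t)\to\gamma$, constant in $t$) and $\lim_n a_n\veps_n=\infty$ (so $a_{n,\veps}=\veps_n^{-1}$ and $\gamma_n(t)\to\gamma(t)$) separate throughout; in both, Conditions \ref{initial_upper} and \ref{initial_lower} ensure the limit exists and that $\gamma_n$ near $t=0$ is bounded above and below, eventually with probability one.

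For part (i) I would bound the integrand $|\tilde q_{\veps,t\nu}(t,\nu)-\gamma_n(t)g_n(t,\nu)|$ using \eqref{eq2}: the discrepancy between $\tilde q_{\veps,t\nu}$ and $\gamma_n g_n$ arises only from the convergence of the bounded matrix-valued coefficients $A_n(t),B_n(t),C_n(t)$ and the scalars $b_n,c_2$ in \eqref{eq2}, so the integrand tends to $0$ pointwise in $(t,\nu)$ and is dominated by $\gamma_n(t)$ times a fixed Gaussian-type envelope whose tails in $t$ and $\nu$ are controlled by the Gaussianity of $\ftil_n$ and by Conditions \ref{kernel_prop}(iii)--(iv). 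Since $\sup_t\gamma_n(t)=O_P(1)$ by Condition \ref{initial_upper}, this envelope is integrable over $t(B_\delta)\times\mathbb{R}^d$, and dominated convergence yields the claimed $o_p(1)$ bound. Part (ii) follows by sandwiching: the upper bound uses $\sup_t\gamma_n(t)=O_P(1)$ and the integrability of $g_n$; for the lower bound I would restrict the integral to a fixed compact ball about the origin, on which $\gamma_n(t)\geq c>0$ eventually by Condition \ref{initial_lower}, and note that the $\nu$-integral of $g_n$ retains a non-negligible amount of mass on that ball because its $t$-location is $O_P(1)$, driven by $W_{\rm obs}=O_P(1)$ from Condition \ref{sum_conv}. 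Hence $g_{B_\delta,r}(1)=\Theta_{P}(1)$, and by part (i) also $\tilde q_{B_\delta,tv}(1)=\Theta_{P}(1)$.

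For the quantitative statements (iii) and (iv) I would expand $\tilde q_{\veps,t\nu}(t,\nu)$ against $\gamma_n(t)g_n(t,\nu)$ using the first-order Taylor expansions of $\eta$ and $A$ at $\theta_0$ (Condition \ref{sum_conv}), which contribute coefficient errors of order $a_{n,\veps}^{-1}$, together with the kernel smoothing bias entering \eqref{eq2}: its leading, linear term cancels by the mean-zero property of $K$ (Condition \ref{kernel_prop}(i)), leaving a quadratic contribution of order $\veps_n^2$ in the summary-statistic scale, that is $a_n\veps_n^2$ after standardization, which enters the density ratio and, once integrated against the symmetric Gaussian limit, contributes $O_P(a_n^2\veps_n^4)$. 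Multiplying by $t^{k_1}v^{k_2}$ and integrating term by term is legitimate for those $k_1,k_2$ admissible under the kernel moment Condition \ref{kernel_prop}(ii) (the Gaussian $\ftil_n$ having all moments), which is the range left unspecified in the statement; dividing by $\tilde q_{B_\delta,tv}(1)=\Theta_{P}(1)$ then gives (iii). Part (iv) is obtained by changing variables $t=t(\theta)$, $v=v(s)$ in $\tilde q_{B_\delta}(1)$: the Jacobian, the $\veps_n^{-d}$ weight and the $\tau_n^{-p}$ normalization absorbed into $\gamma_n$ together produce the prefactor $\tau_n^p a_{n,\veps}^{d-p}$, and the remaining integral equals $\int_{t(B_\delta)}\int_{\mathbb{R}^d}\gamma_n(t)g_n(t,\nu)\,dt d\nu$ up to the same $O_P(a_{n,\veps}^{-1})+O_P(a_n^2\veps_n^4)$ errors; here the standing hypothesis $\veps_n=o(a_n^{-1/2})$ is exactly what makes $a_n^2\veps_n^4=(a_n\veps_n^2)^2=o(1)$, so every error term is negligible.

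I expect the main obstacle to be the weight $\gamma_n(t)$ in the regime $a_{n,\veps}=\veps_n^{-1}$, where it does not flatten out: it must be carried, together with its pointwise limit $\gamma(t)$, through the domination argument of (i), the lower bound of (ii) and the Taylor expansion of (iii) uniformly in $t$ over $t(B_\delta)$, using the gradient control of Condition \ref{initial_gradient} to ensure that the variation of $\gamma_n$ across $t(B_\delta)$ does not degrade the $O_P(a_{n,\veps}^{-1})$ rate, and the two-sided bounds of Conditions \ref{initial_upper}--\ref{initial_lower} to keep the numerator and the normalizing constant at exact order. Apart from this, the proof is essentially a transcription of the corresponding lemma in \cite[proofs for Section 3.1]{Li2017} with $\pi$ replaced by $\gamma_n$.
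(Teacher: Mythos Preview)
Your proposal is correct and follows essentially the same route as the paper: both treat the lemma as a transcription of Lemma~2 in \cite{Li2017} and Lemma~5 in \cite{Li2016} with the fixed prior factor $\pi(\theta_0+a_{n,\veps}^{-1}t)$ replaced by $\gamma_n(t)$, using Condition~\ref{initial_upper} for the uniform upper bound $\sup_{t\in t(B_\delta)}\gamma_n(t)=O_P(1)$ and Condition~\ref{initial_lower} on a smaller ball $B_{\delta'}$ for the lower bound in part~(ii). One minor inaccuracy: your attribution of the $O_P(a_n^2\veps_n^4)$ remainder to a kernel-smoothing bias with cancellation from $\int vK(v)\,dv=0$ is not quite the mechanism here, since $K(\nu)$ enters $\tilde q_{\veps,t\nu}$ only as a fixed multiplicative weight; the $a_n^2\veps_n^4$ term instead arises from the higher-order Taylor expansion of the Gaussian $\ftil_n$ (specifically of $\eta$ and $A$) in the proof of Lemma~5 of \cite{Li2016}, which you are in any case planning to follow.
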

\begin{proof}
These results generalize Lemma 2 in \cite{Li2017} and Lemma 5 in
\cite{Li2016}. In Lemma 2 of \cite{Li2017} where $\gamma_{n}(t)=\pi(\theta+a_{n,\veps}^{-1}t)$,
$(i)$ holds by expanding $\tilde{q}_{\veps,t\nu}(t,v)$ according to
the proof of Lemma 5 of \cite{Li2016}. Here the lines can be followed
similarly by changing the terms involving $\pi(\theta)$ in equations
(10) and (11) in the supplements of \cite{Li2016}. Equation (10)
is replaced by 
\[
\frac{\gamma_{n}(t)}{\mid A(\theta+a_{n,\veps}^{-1}t)\mid^{1/2}}=\frac{\gamma_{n}(t)}{\mid A(\theta)\mid^{1/2}}+a_{n,\veps}^{-1}\gamma_{n}(t)D\frac{1}{\mid A(\theta+e_{t})\mid^{1/2}}t,
\]
where $\|e_{\tau}\|\leq\delta$, and this leads to replacing $\pi(\theta)\int_{\tau(B_{\delta})\times\mathbb{R}^{d}}g_{n}(t,\nu)dtd\nu$
in equation (11) by $\int_{\tau(B_{\delta})\times\mathbb{R}^{d}}\gamma_{n}(t)g_{n}(t,\nu)dtd\nu$.
These changes have no effect on the arguments therein since $\sup_{t\in t(B_{\delta})}\gamma_{n}(t)=O_{P}(1)$
by Condition \ref{initial_upper}. Therefore $(i)$ holds.

For (ii), By Condition \ref{initial_lower} and Lemma 2 of \cite{Li2017}, there exists a $\delta'<\delta$
such that $\inf_{t\in t(B_{\delta'})}\gamma_{n}(t)=\Theta_{p}(1)$
and $\int_{\mathbb{R}^{d}}\int_{t(B_{\delta'})}g_{n}(t,\nu)\,dtdv=\Theta_{p}(1)$.
Then since $g_{B_{\delta},r}(1)\geq\inf_{t\in t(B_{\delta'})}\gamma_{n}(t)\int_{\mathbb{R}^{d}}\int_{t(B_{\delta'})}g_{n}(t,\nu)\,dtd\nu$,
(ii) holds.

For $(iii)$, $\tilde{q}_{B_{\delta},tv}(t)/\tilde{q}_{B_{\delta},tv}(1)$
can be expanded by following the arguments in the proof of Lemma 5
of \cite{Li2016}. For $\tilde{q}_{B_{\delta},tv}(t^{k_{1}}v^{k_{2}})/\tilde{q}_{B_{\delta},tv}(1)$,
it can be expanded similarly as in the proof of Lemma 4 of \cite{Li2017}.

For $(iv)$, $\gamma_{n}(t)$ plays the same role as $\pi(\theta)$
in the proof of Lemma 5 in \cite{Li2016}, and the arguments therein
can be followed exactly. The term $\tau_{n}^{p}$ is from the definition
of $\gamma_{n}(t)$ that $r_{n}(\theta+a_{n,\veps}^{-1}t)=\tau_{n}^{p}\gamma_{n}(t)$.
\end{proof}
Define the expectation of $\theta$ with distribution $\tilde{Q}_{\veps}(\theta\in A\mid s_{\rm obs})$
as $\ttheta_{\veps}$ , and that of $\theta_{{\rm ACC}}^{*}$
with density $\tilde{q}_{\veps}(\theta,s)$ as $\ttheta_{\veps}^{*}$.
Let $E_{G,r}(\cdot)$ be the expectation with the density $G_{n}(v)^{({\rm norm})}$,
and $E_{G,r}\{h(v)\}$ can be written as $g_{B_{\delta},r}\{h(v)\}/g_{B_{\delta},r}(1)$.
Let $\psi(\nu)=k_{n}^{-1}\beta_{0}\{A(\theta_{0})^{1/2}W_{\rm obs}+a_{n}\veps_{n}\nu\}$,
where $k_{n}=1$, if $c_{\veps}<\infty$, and $a_{n}\veps_{n}$,
if $c_{\veps}=\infty$. 

\begin{lemma}\label{Alemma3} Assume Condition \ref{par_true}--\ref{initial_gradient} and \ref{kernel_prop}. Then if $\veps_{n}=o(a_{n}^{-1/2})$, 
\begin{enumerate}
\item[(i)] $\ttheta_{\veps}=\theta_{0}+a_{n}^{-1}\beta_{0}A(\theta_{0})^{1/2}W_{\rm obs}+\veps_{n}\beta_{0}E_{G_{n},r}(\nu)+r_{1}$,
where $r_{1}=o_{P}(a_{n}^{-1})$; 
\item[(ii)] $\ttheta_{\veps}^{*}=\theta_{0}+a_{n}^{-1}\beta_{0}A(\theta_{0})^{1/2}w_{\rm obs}+\veps_{n}(\beta_{0}-\beta_{\veps})E_{G_{n},r}(\nu)+r_{2}$,
where $r_{2}=o_{P}(a_{n}^{-1})$. 
\end{enumerate}
\end{lemma}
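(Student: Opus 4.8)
The plan is to reduce both expansions to first-moment computations for the transformed, unnormalized density $\tilde q_{\veps,t\nu}$ and then invoke Lemma~\ref{Alemma2}. Throughout, the mass that $\tilde Q_{\veps}(\cdot\mid s_{\rm obs})$ and $\tilde q_{\veps}$ assign outside $B_{\delta}$ is exponentially negligible by Conditions~\ref{sum_approx}--\ref{sum_approx_tail} together with Lemma~\ref{Alemma2}(i)--(ii), so every moment may be computed over $t(B_{\delta})$ up to an $o_{P}(a_{n}^{-1})$ error.

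For part~(i), first apply the change of variables $t=t(\theta)=a_{n,\veps}(\theta-\theta_{0})$ and $v=v(s)$, which gives $\ttheta_{\veps}=\theta_{0}+a_{n,\veps}^{-1}\,\tilde q_{B_{\delta},tv}(t)/\tilde q_{B_{\delta},tv}(1)+o_{P}(a_{n}^{-1})$. Next, by Lemma~\ref{Alemma2}(iii) with $k_{1}=1$, $k_{2}=0$,
\[
\frac{\tilde q_{B_{\delta},tv}(t)}{\tilde q_{B_{\delta},tv}(1)}=\frac{g_{B_{\delta},r}(t)}{g_{B_{\delta},r}(1)}+O_{P}(a_{n,\veps}^{-1})+O_{P}(a_{n}^{2}\veps_{n}^{4}).
\]
It then remains to evaluate $g_{B_{\delta},r}(t)/g_{B_{\delta},r}(1)$, and for this I would transcribe the Gaussian calculation from the proofs for Section~3.1 of \cite{Li2017}: conditionally on $v$, $g_{n}(t,v)$ concentrates (up to negligible terms) around $\psi(\nu)$, while $G_{n,r}(v)=\int_{t(B_{\delta})}\gamma_{n}(t)g_{n}(t,v)\,dt$ is its unnormalized $v$-marginal, so that integrating $t$ out conditionally and then against $G_{n,r}(v)^{({\rm norm})}$ yields $g_{B_{\delta},r}(t)/g_{B_{\delta},r}(1)=k_{n}^{-1}\beta_{0}\{A(\theta_{0})^{1/2}W_{\rm obs}+a_{n}\veps_{n}E_{G_{n},r}(\nu)\}+o_{P}(1)$; the extra weight $\gamma_{n}(t)$ in place of $\pi(\theta+a_{n,\veps}^{-1}t)$ changes nothing here since $\sup_{t\in t(B_{\delta})}\gamma_{n}(t)=O_{P}(1)$ and $\gamma_{n}(t)\to\gamma(t)$ by Conditions~\ref{initial_upper}--\ref{initial_gradient}. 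Multiplying by $a_{n,\veps}^{-1}$ collapses the two cases $a_{n,\veps}=a_{n}$ (so $k_{n}=1$) and $a_{n,\veps}=\veps_{n}^{-1}$ (so $k_{n}=a_{n}\veps_{n}$) to the single expression $a_{n}^{-1}\beta_{0}A(\theta_{0})^{1/2}W_{\rm obs}+\veps_{n}\beta_{0}E_{G_{n},r}(\nu)$, and the leftover $a_{n,\veps}^{-1}\{O_{P}(a_{n,\veps}^{-1})+O_{P}(a_{n}^{2}\veps_{n}^{4})\}$ is $o_{P}(a_{n}^{-1})$ under $\veps_{n}=o(a_{n}^{-1/2})$; this is $r_{1}$, proving~(i).

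For part~(ii), since $\theta_{\rm ACC}^{*}=\theta-\beta_{\veps}(s-s_{\rm obs})$ with $\beta_{\veps}$ deterministic given the data and $\beta_{\veps}=O_{P}(1)$ (being a ratio of moments that are $O_{P}(1)$ over $\Theta_{P}(1)$ by Lemma~\ref{Alemma2}), we have $\ttheta_{\veps}^{*}=\ttheta_{\veps}-\beta_{\veps}\,E_{\tilde q_{\veps}}(s-s_{\rm obs})$. Writing $s-s_{\rm obs}=\veps_{n}v$ and applying Lemma~\ref{Alemma2}(iii) with $k_{1}=0$, $k_{2}=1$ gives $E_{\tilde q_{\veps}}(v)=g_{B_{\delta},r}(v)/g_{B_{\delta},r}(1)+O_{P}(a_{n,\veps}^{-1})+O_{P}(a_{n}^{2}\veps_{n}^{4})=E_{G_{n},r}(\nu)+o_{P}(1)$, the last step again using $\int_{t(B_{\delta})}\gamma_{n}(t)g_{n}(t,v)\,dt=G_{n,r}(v)$. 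Hence $\beta_{\veps}E_{\tilde q_{\veps}}(s-s_{\rm obs})=\veps_{n}\beta_{\veps}E_{G_{n},r}(\nu)+o_{P}(a_{n}^{-1})$, and subtracting this from the expansion in~(i) gives $\ttheta_{\veps}^{*}=\theta_{0}+a_{n}^{-1}\beta_{0}A(\theta_{0})^{1/2}W_{\rm obs}+\veps_{n}(\beta_{0}-\beta_{\veps})E_{G_{n},r}(\nu)+r_{2}$ with $r_{2}=o_{P}(a_{n}^{-1})$.

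The main obstacle I anticipate is the bookkeeping of the remainder terms: one must check that each $O_{P}(\cdot)$ coming out of Lemma~\ref{Alemma2}(iii), once multiplied by $a_{n,\veps}^{-1}$ or by $\veps_{n}$, is genuinely $o_{P}(a_{n}^{-1})$ in \emph{both} regimes $a_{n,\veps}=a_{n}$ and $a_{n,\veps}=\veps_{n}^{-1}$ --- this is precisely the place where $\veps_{n}=o(a_{n}^{-1/2})$ is indispensable --- and one must carry the conditional-Gaussian evaluation of $g_{B_{\delta},r}(t)/g_{B_{\delta},r}(1)$ and $g_{B_{\delta},r}(v)/g_{B_{\delta},r}(1)$ over from \cite{Li2017} with $\gamma_{n}(t)$ in the role previously played by $\pi$, verifying that Conditions~\ref{initial_upper}--\ref{initial_gradient} suffice at every step where properties of the prior were invoked there.
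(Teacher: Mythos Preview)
Your approach is essentially the same as the paper's: both reduce $\ttheta_{\veps}$ and $\ttheta_{\veps}^{*}$ to ratios $\tilde q_{B_{\delta},tv}(t^{k_1}v^{k_2})/\tilde q_{B_{\delta},tv}(1)$ via the change of variables, invoke Lemma~\ref{Alemma2}(iii) to pass to $g_{B_{\delta},r}(t^{k_1}v^{k_2})/g_{B_{\delta},r}(1)$, and then evaluate the latter using the Gaussian factorization $g_{n}(t,v)=N\{t;\psi(v),k_{n}^{-2}I(\theta_{0})^{-1}\}G_{n}(v)$. Part~(ii) is handled identically, by subtracting $\veps_{n}\beta_{\veps}E_{G_{n},r}(\nu)$ from the expansion in~(i).

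One point deserves tightening. Your sentence that ``the extra weight $\gamma_{n}(t)$ in place of $\pi(\theta+a_{n,\veps}^{-1}t)$ changes nothing here since $\sup_{t\in t(B_{\delta})}\gamma_{n}(t)=O_{P}(1)$'' is too quick: boundedness alone does not give $\int t\,\gamma_{n}(t)\,N\{t;\psi(v),\cdot\}\,dt\approx\psi(v)\int\gamma_{n}(t)\,N\{t;\psi(v),\cdot\}\,dt$. The paper makes this step precise by shifting $t'=t-\psi(v)$, Taylor-expanding $\gamma_{n}\{\psi(v)+t'\}=\gamma_{n}\{\psi(v)\}+D_{t}\gamma_{n}\{\psi(v)+e_{t}\}t'$, and then using Condition~\ref{initial_gradient} to bound the second-order piece by $k_{n}^{-2}\sup_{t}\|D_{t}\gamma_{n}(t)\|\cdot O_{P}(1)=O_{P}(k_{n}^{-2}\tau_{n}/a_{n,\veps})=o_{P}(k_{n}^{-1})$, while the first-order piece is $o_{P}(k_{n}^{-1})$ because $t(B_{\delta})-\psi(v)\to\mathbb{R}^{p}$ makes the odd integral vanish. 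You correctly flag exactly this as the ``main obstacle'' in your last paragraph; just replace the earlier handwave by this argument rather than a bare appeal to \cite{Li2017}, since in that reference $\pi$ is $n$-free and the gradient step is trivial.

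A minor remark: your opening sentence about mass outside $B_{\delta}$ being exponentially negligible (and the appeal to Conditions~\ref{sum_approx}--\ref{sum_approx_tail}) is unnecessary here, because $\tilde q_{\veps}$ is \emph{defined} with the truncated $r_{\delta}$, so by construction all integrals already live on $t(B_{\delta})$.
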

\begin{proof}
These results generalize Lemma 3(c) and Lemma 5(c) in \cite{Li2017}.
With the transformation $t=t(\theta)$, by Lemma 2, if $\veps_{n}=o(a_{n}^{-1/2})$,
\begin{eqnarray}
\begin{cases}
\ttheta_{\veps}=\theta_{0}+a_{n,\veps}^{-1}\tilde{q}_{B_{\delta},t\nu}(t)/\tilde{q}_{B_{\delta},t\nu}(1)=\theta_{0}+a_{n,\veps}^{-1}g_{B_{\delta},r}(t)/g_{B_{\delta},r}(1)+o_{p}(a_{n}^{-1}),\\
\ttheta_{\veps}^{x}=\theta_{0}+a_{n,\veps}^{-1}\tilde{q}_{B_{\delta},t\nu}(t)/\tilde{q}_{B_{\delta},t\nu}(1)-\veps_{n}\beta_{\veps}\tilde{q}_{B_{\delta},t\nu}(\nu)/\tilde{q}_{B_{\delta},t\nu}(1)\\
\hspace{5mm} =\theta_{0}+a_{n,\veps}^{-1}g_{B_{\delta},r}(t)/g_{B_{\delta},r}(1)-\veps_{n}\beta_{\veps}E_{a_{n},r}(\nu)+o_{p}(a_{n}^{-1}),
\end{cases}\label{eq4}
\end{eqnarray}
where the remainder term comes from the fact that $(a_{n,\veps}^{-1}+\veps_{n})\left\{ O_{p}(a_{n,\veps}^{-1})+O_{p}(a_{n}^{2}\veps_{n}^{4})\right\} =o_{p}(a_{n}^{-1})$.

First the leading term of $g_{B_{\delta},r}(t\nu^{k})$ is derived
for $k=0$ or $1$. The case of $k=1$ will be used later. Let $t'=t-\psi(\nu)$,
then 
\begin{align*}
g_{B_{\delta},r}(t\nu^{k_{2}}) & =\int_{\mathbb{R}^{d}}\int_{t(B_{\delta})}\{t'+\psi(\nu)\}\nu^{k_{2}}\gamma_{n}(t)g_{n}(t,\nu)\,dtd\nu\\
 & =\int_{\mathbb{R}^{d}}\psi(\nu)\nu^{k_{2}}G_{n,r}(\nu)\,d\nu+\int_{\mathbb{R}^{d}}\int_{t(B_{\delta})}t'\nu^{k_{2}}\gamma_{n}(t)g_{n}(t,\nu)\,dtd\nu.
\end{align*}
By matrix algebra, it is straightforward to show that $g_{n}(t,v)=N\{t;\psi(v),k_{n}^{-2}I(\theta_{0})^{-1}\}G_{n}(v)$.
Then with the transformation $t'$, we have
\begin{align*}
 & g_{B_{\delta},r}(t\nu^{k_{2}})-\int_{\mathbb{R}^{d}}\psi(\nu)\nu^{k_{2}}G_{n,r}(\nu)\,d\nu\\
= & \int_{\mathbb{R}^{d}}\int_{t(B_{\delta})-\psi(\nu)}t'\nu^{k_{2}}\gamma_{n}\{\psi(\nu)+t'\}N\left\{ t';0,k_{n}^{-2}I(\theta_{0})^{-1}\right\} G_{n}(\nu)\,dt'd\nu.
\end{align*}
By applying the Taylor expansion on $\gamma_{n}\{\psi(\nu)+t'\}$,
the right hand side of the above equation is equal to 
\begin{eqnarray}
 &  & \int_{\mathbb{R}^{d}}\int_{t(B_{\delta})-\psi(\nu)}t'N\{t';0,k_{n}^{-2}I(\theta_{0})^{-1}\}\,dt'\cdot\gamma_{n}\{\psi(\nu)\}\nu^{k_{2}}G_{n}(\nu)\,d\nu\nonumber \\
 &  & +\int_{\mathbb{R}^{d}}\int_{t(B_{\delta})-\psi(\nu)}t'{}^{2}D_{t}\gamma_{n}\{\psi(\nu)+e_{t}\}N\{t';0,k_{n}^{-2}I(\theta_{0})^{-1}\}\,dt'\cdot\nu^{k_{2}}G_{n}(\nu)d\nu\nonumber \\
 & = & k_{n}^{-1}\int_{\mathbb{R}^{d}}\int_{Q_{v}}t''N\{t'';0,I(\theta_{0})^{-1}\}\,dt''\cdot\gamma_{n}\{\psi(\nu)\}\nu^{k_{2}}G_{n}(\nu)\,d\nu\nonumber \\
 &  & +k_{n}^{-2}\int_{\mathbb{R}^{d}}\int_{Q_{v}}t''^{2}D_{t}\gamma_{n}\{\psi(\nu)+e_{t}\}N\{t'';0,I(\theta_{0})^{-1}\}\,dt''\cdot\nu^{k_{2}}G_{n}(\nu)\,d\nu,\label{eq5}
\end{eqnarray}
where $Q_{v}=\left\{ a_{n}(\theta-\theta_{0})-k_{n}\psi(\nu)\mid\theta\in B_{\delta}\right\} $
and $t''=k_{n}t'$. Since $Q_{v}$ can be written as $\left\{ a_{n}(\theta-\theta_{0}-\veps_{n}\nu)-\beta_{0}A(\theta_{0})^{1/2}W_{\rm obs}\mid\theta\in B_{\delta}\right\} $,
it converges to $\mathbb{R}^{p}$ for any fixed $v$ with probability
one. Then $\int_{Q_{v}}t''N\{t'';0,\tau(\theta_{0})^{-1}\}\,dt''=o_{P}(1)$
for fixed $v$, and by the continuous mapping theorem and Condition \ref{initial_upper}, the
first term in the right hand side of \eqref{eq5} is of the order
$o_{p}(k_{n}^{-1})$. The second term is bounded by 
\[
k_{n}^{-2}\sup_{t\in\mathbb{R}}\|D_{t}\gamma_{n}(t)\|\int_{\mathbb{R}^{p}}\|t''\|^{2}N\{t'';0,I(\theta_{0}^{-1})\}\,dt''\int_{\mathbb{R}^{d}}\nu^{k_{2}}G_{n}(\nu)\,d\nu,
\]
which is of the order $O_{p}(k^{-2}\tau_{n}/a_{n,\veps})$ by Condition \ref{initial_gradient}.
Therefore 
\begin{align}
g_{B_{\delta},r}(t\nu^{k_{2}}) & =\int_{\mathbb{R}^{d}}\psi(\nu)\nu^{k_{2}}G_{n}(\nu)d\nu+o_{P}(k_{n}^{-1}).\label{eq6}
\end{align}
By algebra, $k_{n}=a_{n,\veps}^{-1}a_{n}$, and 
\begin{eqnarray}
 &  & \int_{\mathbb{R}^{d}}\psi(\nu)\nu^{k_{2}}G_{n}(\nu)d\nu\nonumber \\
 & = & a_{n,\veps}\beta_{0}\{a_{n}^{-1}A(\theta_{0})^{1/2}W_{\rm obs}\int_{\mathbb{R}^{d}}\nu^{k_{2}}G_{n,r}(\nu)\,d\nu+\veps_{n}\int_{\mathbb{R}^{d}}\nu^{k_{2}+1}G_{n,r}(\nu)\,d\nu\}.\label{eq7}
\end{eqnarray}
Then ($i$) and ($ii$) in the Lemma holds by plugging the expansion
of $g_{B_{\delta},r}(t)$ into \eqref{eq4}.
\end{proof}
\begin{lemma}\label{Alemma3.5} Assume Condition \ref{par_true}, \ref{initial_upper}, \ref{sum_conv}--\ref{sum_approx_tail}. Then as $n\rightarrow\infty$, 
\begin{enumerate}
\item[(i)] For any $\delta<\delta_{0}$, $r_{B_{\delta}^{c}}(1)$ and $\tilde{q}_{B_{\delta}^{c}}(1)$
are $o_{p}(\tau_{n}^{p})$. More specifically, they are of the order
$O_{p}\left(\tau_{n}^{p}e^{-a_{n,\veps}^{\alpha_{\delta}}c_{\delta}}\right)$
for some positive constants $c_{\delta}$ and $\alpha_{\delta}$ depending
on $\delta$.
\item[(ii)] $q_{B_{\delta}}(1)=\tilde{q}_{B_{\delta}}(1)\{1+O_{p}(\alpha_{n}^{-1})\}$
and $\sup_{A\subset B_{\delta}}|q_{A}(1)-\tilde{q}_{A}(1)|/\tilde{q}_{B_{\delta}}(1)=O_{p}(\alpha_{n}^{-1})$; 
\item[(iii)] if $\veps_{n}=o(a_{n}^{-1/2})$, then $\tilde{q}_{B_{\delta}}(1)$ and
$r_{B_{\delta}}(1)$ are $\Theta_{P}(\tau_{n}^{p}a_{n,\veps}^{d-p})$,
and thus $\tilde{q}_{\mathcal{P}_{0}}(1)$ and $q_{\mathcal{P}_{0}}(1)$
are $\Theta_{P}(\tau_{n}^{p}a_{n,\veps}^{d-p})$; 
\item[(iv)] if $\veps_{n}=o(a_{n}^{-1/2})$, $\theta_{\veps}=\ttheta_{\veps}+o_{p}(a_{n,\veps}^{-1}).$
If $\veps_{n}=o(a_{n}^{-3/5}),$ $\theta_{\veps}=\ttheta_{\veps}+o_{P}(a_{n}^{-1}).$
\end{enumerate}
\end{lemma}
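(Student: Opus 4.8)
The plan is to obtain the four parts as the $r_n$-analogues of the corresponding lemmas in \cite{Li2016} (notably Lemma~5 and the tail estimates) and \cite{Li2017} (notably Lemma~3(c) and Lemma~5(c)), re-running those arguments with the prior $\pi(\theta)$ replaced throughout by the standardized initial density $\tau_n^{-p}r_n(\theta)$, or by its $B_\delta$-truncation $r_\delta$, and invoking Conditions~\ref{initial_upper}--\ref{initial_gradient} wherever the cited proofs used boundedness, strict positivity, and continuous differentiability of $\pi$. The one structural change is that the deterministic constants controlling $\pi$ in those papers become $O_p(1)$ quantities here — and $\Theta_p(1)$ at $\theta_0$, by Conditions~\ref{initial_upper}--\ref{initial_lower}; since every estimate below is a ratio or product of boundedly many such quantities, their deterministic $O(\cdot)$/$\Theta(\cdot)$ bounds turn into $O_p(\cdot)$/$\Theta_p(\cdot)$, and the factor $\tau_n^p$ pulled out of $r_n(\theta)=\tau_n^p\{\tau_n^{-p}r_n(\theta)\}$ accounts for the $\tau_n^p$'s in the statement.

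For part (i), I would split the tail integral defining $r_{B_\delta^c}(1)$ and $\tilde q_{B_\delta^c}(1)$ (both of which carry the Gaussian s-likelihood $\ftil_n$ and the kernel factor) over $\{\delta\le\|\theta-\theta_0\|\le\delta_1\}$ and $\{\|\theta-\theta_0\|>\delta_1\}$ for a fixed $\delta_1<\delta_0$. On the bounded annulus $\tau_n^{-p}r_n(\theta)=O_p(1)$ by Condition~\ref{initial_upper}, so that contribution is $O_p(\tau_n^p)$ times the mass that $\ftil_n(s_{\rm obs}\mid\theta)=N\{s_{\rm obs};\eta(\theta),A(\theta)/a_n^2\}$ places near $s_{\rm obs}$ when $\theta$ is bounded away from $\theta_0$; by the identifiability clause Condition~\ref{sum_conv}(ii) together with the sub-Gaussian tails of Condition~\ref{sum_approx_tail}, this mass (and the accompanying kernel factor) is $O_p(e^{-a_{n,\veps}^{\alpha_\delta}c_\delta})$ for suitable $\alpha_\delta,c_\delta>0$. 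On the far region, Condition~\ref{sum_approx_tail}(ii) and boundedness of $A(\theta)$ give the same exponential control directly. This is the computation behind the corresponding tail estimates in \cite{Li2016, Li2017}.

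For parts (ii)--(iii): part (ii) — that replacing $f_n$ by $\ftil_n$ costs only a relative factor $1+O_p(\alpha_n^{-1})$, uniformly over $A\subset B_\delta$ — follows from the density bound $\alpha_n|f_{W_n}-\ftil_{W_n}|\le c_3 r_{max}$ of Condition~\ref{sum_approx}, exactly as in \cite{Li2016, Li2017}, since that step uses only that $\int_{B_\delta}\tau_n^{-p}r_n$ is $\Theta_p(1)$ (Conditions~\ref{initial_upper}--\ref{initial_lower}). For part (iii), I would invoke Lemma~\ref{Alemma2}(iv) to write $\tilde q_{B_\delta}(1)$ as $\tau_n^p a_{n,\veps}^{d-p}$ times $g_{B_\delta,r}(1)+o_p(1)$ and use Lemma~\ref{Alemma2}(ii), $g_{B_\delta,r}(1)=\Theta_p(1)$ — whose lower bound is precisely where Condition~\ref{initial_lower}, positivity of $\tau_n^{-p}r_n(\theta_0)$, enters; then part (i) lets me discard $\mathcal P_0\setminus B_\delta$ and part (ii) lets me pass from $\tilde q$ to $q$, yielding the stated $\Theta_p(\tau_n^p a_{n,\veps}^{d-p})$ orders for $\tilde q_{\mathcal P_0}(1)$, $q_{\mathcal P_0}(1)$, and $r_{B_\delta}(1)$.

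For part (iv), I would write $\theta_\veps-\ttheta_\veps$ as the difference of the two posterior means, each a ratio of a first-moment integral to its mass, and track three errors: (a) truncating $r_n$ to $B_\delta$, which by part (i) perturbs numerator and mass only by $O_p(\tau_n^p e^{-a_{n,\veps}^{\alpha_\delta}c_\delta})$, negligible against $a_n^{-1}$ since the masses are $\Theta_p(\tau_n^p a_{n,\veps}^{d-p})$ by part (iii); (b) replacing $f_n$ by $\ftil_n$, which by part (ii) multiplies the ratio by $1+O_p(\alpha_n^{-1})$; and (c) the resulting displacement of the mean, of order $\alpha_n^{-1}$ times $\ttheta_\veps-\theta_0=O_p(a_n^{-1})+O_p(\veps_n)$ from Lemma~\ref{Alemma3}(i). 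Since $\alpha_n/a_n^{2/5}\to\infty$ (Condition~\ref{sum_approx}), $\alpha_n^{-1}(a_n^{-1}+\veps_n)=o(a_{n,\veps}^{-1})$ when $\veps_n=o(a_n^{-1/2})$ and $=o(a_n^{-1})$ when $\veps_n=o(a_n^{-3/5})$ — the exponents $2/5+3/5=1$ being what makes the sharper bound hold — which is the dichotomy of Lemma~3(c)/Lemma~5(c) of \cite{Li2017}. I expect step (c) to be the main obstacle: one must carefully track how the density-approximation error $\alpha_n^{-1}$ interacts simultaneously with the size of $\veps_n$ and the size of $\ttheta_\veps-\theta_0$, while also checking that the $O_p(1)$ and $\Theta_p(1)$ random factors introduced by Conditions~\ref{initial_upper}--\ref{initial_gradient} propagate through the ratios without degrading these orders.
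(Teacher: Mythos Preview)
Your proposal is correct and follows essentially the same approach as the paper: both reduce to Lemma~7 of \cite{Li2017} by replacing $\pi(\theta)$ with $\tau_n^{-p}r_n(\theta)$ throughout and noting that the cited arguments used only boundedness of $\pi$, here supplied by Condition~\ref{initial_upper}. The paper's own proof is a one-line citation to that lemma, whereas you spell out the part-by-part mechanics and additionally invoke Conditions~\ref{initial_lower}--\ref{initial_gradient} (not listed among this lemma's hypotheses, though arguably needed for the $\Theta_p$ lower bound in part~(iii)).
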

\begin{proof}
This generalizes Lemma 7 in \cite{Li2017}. The arguments therein
can be followed exactly, by Condition \ref{initial_upper} and the fact that regarding $\pi(\theta)$,
only the condition $\sup_{\theta\in\mathbb{R}^{p}}\pi(\theta)<\infty$
is used.
\end{proof}
\begin{lemma}\label{Alemma4} Assume Condition \ref{par_true}, \ref{initial_upper}, \ref{sum_conv}--\ref{sum_approx_tail}. 
\begin{enumerate}
\item[(i)] For any $\delta<\delta_{0}$, $Q_{\veps}(\theta\in B_{\delta}^{c}\mid s_{\rm obs})$
and $\tilde{Q}_{\veps}(\theta\in B_{\delta}^{c}\mid s_{\rm obs})$ are $o_{p}(1)$; 
\item[(ii)] There exists some $\delta<\delta_{0}$ such that 
\[
\sup_{A\in\mathfrak{B}^{p}}|Q_{\veps}(\theta\in A\cap B_{\delta}\mid s_{\rm obs})-\tilde{Q}_{\veps}(\theta\in A\cap B_{\delta}\mid s_{\rm obs})|=o_{p}(1);
\]
\item[(iii)] $a_{n,\veps}(\theta_{\veps}-\ttheta_{\veps})=o_{p}(1)$ . 
\end{enumerate}
\end{lemma}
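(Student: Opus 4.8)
The plan is to obtain all three parts from the unnormalised estimates already assembled in Lemma~\ref{Alemma3.5}, so that the argument reduces to bookkeeping on ratios. Throughout I would write $Q_{\veps}(\theta\in A\mid s_{\rm obs})=q_{A}(1)/q_{\mathcal{P}}(1)$ and $\tilde Q_{\veps}(\theta\in A\mid s_{\rm obs})=\tilde q_{A}(1)/\tilde q_{\mathcal{P}}(1)$ in the integral-functional notation of the supplement, the $\veps_{n}^{-d}$ factors cancelling in these ratios, and I work in the standing regime $\veps_{n}=o(a_{n}^{-1/2})$. For part (i) I would bound the denominators below by Lemma~\ref{Alemma3.5}(iii), $q_{\mathcal{P}}(1)\ge q_{B_{\delta}}(1)=\Theta_{P}(\tau_{n}^{p}a_{n,\veps}^{d-p})$ and likewise $\tilde q_{\mathcal{P}}(1)=\Theta_{P}(\tau_{n}^{p}a_{n,\veps}^{d-p})$, and bound the numerators by Lemma~\ref{Alemma3.5}(i), which gives $q_{B_{\delta}^{c}}(1),\tilde q_{B_{\delta}^{c}}(1)=O_{p}(\tau_{n}^{p}e^{-c_{\delta}a_{n,\veps}^{\alpha_{\delta}}})$. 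Dividing, each normalised tail mass is $O_{p}(a_{n,\veps}^{p-d}e^{-c_{\delta}a_{n,\veps}^{\alpha_{\delta}}})=o_{p}(1)$, since $a_{n,\veps}\to\infty$ and the exponential factor dominates any polynomial.

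For part (ii), after fixing $\delta$ as supplied by Lemma~\ref{Alemma3.5}(ii)--(iii), I would use, uniformly in $A$, the decomposition
\[
\Bigl|\frac{q_{A\cap B_{\delta}}(1)}{q_{\mathcal{P}}(1)}-\frac{\tilde q_{A\cap B_{\delta}}(1)}{\tilde q_{\mathcal{P}}(1)}\Bigr|\le\frac{|q_{A\cap B_{\delta}}(1)-\tilde q_{A\cap B_{\delta}}(1)|}{q_{\mathcal{P}}(1)}+\tilde q_{A\cap B_{\delta}}(1)\,\frac{|q_{\mathcal{P}}(1)-\tilde q_{\mathcal{P}}(1)|}{q_{\mathcal{P}}(1)\,\tilde q_{\mathcal{P}}(1)}.
\]
The numerator of the first term is at most $\sup_{A\subset B_{\delta}}|q_{A}(1)-\tilde q_{A}(1)|=O_{p}(\alpha_{n}^{-1})\,\tilde q_{B_{\delta}}(1)$ by Lemma~\ref{Alemma3.5}(ii), while $q_{\mathcal{P}}(1)\ge q_{B_{\delta}}(1)=\tilde q_{B_{\delta}}(1)\{1+o_{p}(1)\}$ by parts (ii)--(iii), so that term is $O_{p}(\alpha_{n}^{-1})$. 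For the second term, $\tilde q_{A\cap B_{\delta}}(1)\le\tilde q_{B_{\delta}}(1)$, and $q_{\mathcal{P}}(1)-\tilde q_{\mathcal{P}}(1)=\{q_{B_{\delta}}(1)-\tilde q_{B_{\delta}}(1)\}+\{q_{B_{\delta}^{c}}(1)-\tilde q_{B_{\delta}^{c}}(1)\}=O_{p}(\alpha_{n}^{-1})\,\tilde q_{B_{\delta}}(1)+o_{p}(\tilde q_{B_{\delta}}(1))$ by Lemma~\ref{Alemma3.5}(ii) and part (i); since both normalisers are $\Theta_{P}(\tilde q_{B_{\delta}}(1))$, this term is also $O_{p}(\alpha_{n}^{-1})$. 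Taking the supremum over $A\in\mathfrak{B}^{p}$ and using $\alpha_{n}\to\infty$ (Condition~\ref{sum_approx}) gives the claim.

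For part (iii), the quickest route is to read $a_{n,\veps}(\theta_{\veps}-\ttheta_{\veps})$ as $a_{n,\veps}$ times the quantity already controlled by Lemma~\ref{Alemma3.5}(iv), namely $o_{p}(a_{n,\veps}^{-1})$ in the present regime, so the product is $o_{p}(1)$. One can also argue directly by splitting $\theta_{\veps}-\ttheta_{\veps}=\int_{B_{\delta}}\theta\,d(Q_{\veps}-\tilde Q_{\veps})+\int_{B_{\delta}^{c}}\theta\,dQ_{\veps}-\int_{B_{\delta}^{c}}\theta\,d\tilde Q_{\veps}$: the $B_{\delta}^{c}$ pieces remain negligible after multiplication by $a_{n,\veps}$ because the unnormalised tail masses decay like $e^{-c_{\delta}a_{n,\veps}^{\alpha_{\delta}}}$ by Lemma~\ref{Alemma3.5}(i), which absorbs both the $a_{n,\veps}$ factor and the linear growth of $\|\theta\|$, while the $B_{\delta}$ piece follows from the moment expansions of Lemma~\ref{Alemma3} together with the uniform closeness of truncated first moments built into Lemma~\ref{Alemma3.5}(ii),(iv).

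I expect the main obstacle to sit in the tail control underlying parts (i) and (iii): one must show the true $s$-likelihood, once convolved with $K_{\veps}$ and integrated against the truncated $r_{n}$, still carries exponentially small mass outside $B_{\delta}$ at the relevant scale, uniformly over $\theta\in B_{\delta}^{c}$, relying only on the local Gaussian comparison of Conditions~\ref{sum_approx}--\ref{sum_approx_tail} (domination of $f_{W_{n}}$ by the Gaussian tail plus $\alpha_{n}^{-1}r_{max}$) and the exponentially-tailed kernel of Condition~\ref{kernel_prop}(iv); this is exactly the content already absorbed into Lemma~\ref{Alemma3.5}(i), which generalises the corresponding step in \cite{Li2016,Li2017}. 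Once those tail estimates are secured, everything else is routine ratio manipulation, with the data-dependence of $r_{n}$ entering only through the uniform bound $\sup_{\theta\in\mathcal{P}_{0}}\tau_{n}^{-p}r_{n}(\theta)=O_{p}(1)$ of Condition~\ref{initial_upper}.
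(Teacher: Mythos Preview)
Your proposal is correct and follows essentially the same route as the paper. The paper's own proof is a one-paragraph pointer to Lemma~3 of \cite{Li2017}, noting only that the tail estimate $q_{B_{\delta}^{c}}\{h(\theta)\}=O_{p}(\tau_{n}^{p}e^{-c_{\delta}a_{n,\veps}^{\alpha_{\delta}}})$ now follows from Condition~\ref{initial_upper} in place of the prior bound $\sup_{\theta}\pi(\theta)<\infty$, and that Lemma~5 of \cite{Li2016} has already been generalised (as Lemma~\ref{Alemma2}); you have simply unpacked the same ratio bookkeeping explicitly, drawing the required pieces from Lemma~\ref{Alemma3.5} rather than citing \cite{Li2016,Li2017} directly. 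In particular, your observation that part~(iii) is nothing more than a restatement of Lemma~\ref{Alemma3.5}(iv) is exactly right. One minor clarification: since $r_{\delta}$ is by construction supported on $B_{\delta}$, one has $\tilde q_{B_{\delta}^{c}}(1)=0$ identically and $\tilde q_{\mathcal{P}}(1)=\tilde q_{B_{\delta}}(1)$, so the $\tilde Q_{\veps}$ half of part~(i) and the corresponding term in your decomposition for part~(ii) are trivial; this only simplifies your argument.
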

\begin{proof}
This lemma generalizes Lemma 3 of \cite{Li2017}. The proof of Lemma
3 in \cite{Li2017} only needs Lemma 3 and 5 in \cite{Li2016} to
hold. The result that $q_{B_{\delta}^{c}}\{h(\theta)\}=O_{p}(\tau_{n}^{p}e^{-a_{n,\veps}^{\alpha_{\delta}}c_{\delta}})$
for some positive constants $\alpha_{\delta}$ and $c_{\delta}$,
which generalizes the case of $r_{n}(\theta)=\pi(\theta)$ in Lemma
3 of \cite{Li2016}, holds by Condition \ref{initial_upper}, since for the latter, regarding
$\pi(\theta)$ it only uses the fact that $\sup_{\theta\in B_{\delta}^{c}}\pi(\theta)<\infty$.
Then the arguments in the proof of Lemma 3 in \cite{Li2016} can be
followed exactly, despite the term $\tau_{n}^{p}$ that is not included
in the order of $\pi_{B_{\delta}^{c}}\{h(\theta)\}$, since $Q_{\veps}(\theta\in A\mid s_{{\rm obs}})$
is the ratio $q_{A}(1)/q_{\mathbb{R}^{p}}(1)$. Since Lemma 5 in \cite{Li2016}
has been generalized by Lemma \eqref{Alemma2}, the arguments of the
proof of Lemma 3 in \cite{Li2017} can be followed exactly.
\end{proof}
\textbf{Proof of Theorem 2:} This result generalizes the case ($i$)
of Proposition 1 in \cite{Li2017}. With the above lemmas, lines for
proving case $(i)$ of Proposition 1 in \cite{Li2017} can be followed
exactly.

\subsection*{Proof of Theorem 3}

\begin{lemma}\label{Alemma5} Assume Condition \ref{par_true}--\ref{cond:likelihood_moments}. If $\veps_{n}=o_{p}(a_{n}^{-3/5})$, then $a_{n}\veps_{n}(\beta_{\veps}-\beta_{0})=o(1)$.
\end{lemma}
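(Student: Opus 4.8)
The plan is to evaluate $\beta_{\veps}$ in closed form as a ratio of moments of $q_{\veps}(\theta,s)$, carry those moments through the same chain of approximations used elsewhere in this appendix, read off the leading term as exactly $\beta_{0}$, and then check that the remainder, once multiplied by $a_{n}\veps_{n}$, is $o_{p}(1)$. First, since $(\alpha_{\veps},\beta_{\veps})$ is the population least-squares minimizer under $q_{\veps}(\theta,s)$, $\beta_{\veps}=\mathrm{Cov}_{\veps}(\theta,s)\{\mathrm{Var}_{\veps}(s)\}^{-1}$, with $\mathrm{Cov}_{\veps},\mathrm{Var}_{\veps}$ taken under $q_{\veps}(\cdot,\cdot)$ given $s_{\rm obs}$. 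Applying the appendix change of variables $t=t(\theta)$, $v=v(s)$, so that $\theta-\theta_{0}=a_{n,\veps}^{-1}t$ and $s-s_{\rm obs}=\veps_{n}v$, one gets $\mathrm{Cov}_{\veps}(\theta,s)=a_{n,\veps}^{-1}\veps_{n}\mathrm{Cov}_{\veps}(t,v)$ and $\mathrm{Var}_{\veps}(s)=\veps_{n}^{2}\mathrm{Var}_{\veps}(v)$, hence
\[
\beta_{\veps}=a_{n,\veps}^{-1}\veps_{n}^{-1}\,\mathrm{Cov}_{\veps}(t,v)\{\mathrm{Var}_{\veps}(v)\}^{-1}.
\]

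Next I would reduce the $(t,v)$-moments of $q_{\veps}$ to those of $\tilde q_{\veps}$ and then of $g_{B_{\delta},r}$: by Lemma~\ref{Alemma3.5} the mass outside $B_{\delta}$ is exponentially small and $q_{A}(h)$, $\tilde q_{A}(h)$ agree up to $O_{p}(\alpha_{n}^{-1})\tilde q_{B_{\delta}}(1)$ for the polynomials $h$ in $(t,v)$ of degree at most $2$ needed here, while by Lemma~\ref{Alemma2}(iii)--(iv) the normalized $\tilde q_{\veps}$-moments equal the corresponding $g_{B_{\delta},r}$-moments up to $O_{p}(a_{n,\veps}^{-1})+O_{p}(a_{n}^{2}\veps_{n}^{4})$. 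Using $g_{n}(t,v)=N\{t;\psi(v),k_{n}^{-2}I(\theta_{0})^{-1}\}G_{n}(v)$ with $\psi(v)=k_{n}^{-1}\beta_{0}\{A(\theta_{0})^{1/2}W_{\rm obs}+a_{n}\veps_{n}v\}$, a first-order Taylor expansion of $\gamma_{n}$ about $\psi(v)$ together with the uniform bound $\sup_{t}\|D_{t}\gamma_{n}(t)\|=O_{p}(a_{n,\veps}^{-1}\tau_{n})$ from Condition~\ref{initial_gradient} gives $E_{g}\{t\mid v\}=\psi(v)+O_{p}(k_{n}^{-2}a_{n,\veps}^{-1}\tau_{n})$ uniformly in $v$; since $\psi(\cdot)$ is affine in $v$ with slope $k_{n}^{-1}a_{n}\veps_{n}\beta_{0}$, bounding the covariance of the $O_{p}(\cdot)$ remainder with $v$ by its sup-norm times $E_{g}\|v-E_{g}v\|=O_{p}(1)$ yields $\mathrm{Cov}_{g}(t,v)=k_{n}^{-1}a_{n}\veps_{n}\beta_{0}\mathrm{Var}_{g}(v)+O_{p}(k_{n}^{-2}a_{n,\veps}^{-1}\tau_{n})$. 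Substituting back and using the identity $k_{n}=a_{n,\veps}^{-1}a_{n}$, the leading term of $\beta_{\veps}$ is $a_{n,\veps}^{-1}\veps_{n}^{-1}\cdot k_{n}^{-1}a_{n}\veps_{n}\beta_{0}=\beta_{0}$ exactly, so $\beta_{\veps}-\beta_{0}=a_{n,\veps}^{-1}\veps_{n}^{-1}E_{n}\{\mathrm{Var}_{\veps}(v)\}^{-1}$, where $E_{n}$ collects the error terms $O_{p}(k_{n}^{-2}a_{n,\veps}^{-1}\tau_{n})$, $O_{p}(a_{n,\veps}^{-1})$, $O_{p}(a_{n}^{2}\veps_{n}^{4})$, $O_{p}(\alpha_{n}^{-1})$ together with their products with $a_{n,\veps}\veps_{n}$ coming from expanding the inverse variance.

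Finally, $a_{n}\veps_{n}(\beta_{\veps}-\beta_{0})=k_{n}E_{n}\{\mathrm{Var}_{\veps}(v)\}^{-1}$, and since $\{\mathrm{Var}_{\veps}(v)\}^{-1}=O_{p}(1)$ by non-degeneracy of the $v$-marginal (lower bounds in Lemma~\ref{Alemma3.5} and Condition~\ref{kernel_prop}), it remains to check $k_{n}E_{n}=o_{p}(1)$ termwise: $k_{n}O_{p}(k_{n}^{-2}a_{n,\veps}^{-1}\tau_{n})=O_{p}(a_{n}^{-1}\tau_{n})=o_{p}(1)$ by Condition~\ref{initial_upper}; $k_{n}O_{p}(a_{n,\veps}^{-1})=O_{p}(a_{n}a_{n,\veps}^{-2})$ equals $a_{n}^{-1}$ or $a_{n}\veps_{n}^{2}$, both $\to 0$; $k_{n}O_{p}(\alpha_{n}^{-1})=O_{p}(a_{n}\veps_{n}\alpha_{n}^{-1})=o_{p}(1)$ since $\alpha_{n}/a_{n}^{2/5}\to\infty$ while $a_{n}\veps_{n}=o(a_{n}^{2/5})$; and the binding term $k_{n}O_{p}(a_{n}^{2}\veps_{n}^{4})$, which in the regime $a_{n,\veps}=\veps_{n}^{-1}$ equals $O_{p}(a_{n}^{3}\veps_{n}^{5})$ and is $o(1)$ exactly when $\veps_{n}=o(a_{n}^{-3/5})$ (the regime $a_{n,\veps}=a_{n}$ being less demanding). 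I expect this last point to be the main obstacle: the second-moment expansions carry an irreducible $O_{p}(a_{n}^{2}\veps_{n}^{4})$ error from Lemma~\ref{Alemma2}(iii)--(iv), and after rescaling by $k_{n}$ (which equals $a_{n}\veps_{n}$ in the demanding regime) this becomes $O_{p}(a_{n}^{3}\veps_{n}^{5})$, so controlling it is precisely what forces the tolerance rate asserted in the lemma, all other error sources being comfortably smaller.
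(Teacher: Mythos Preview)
Your proposal is correct and follows essentially the same route as the paper. Both arguments write $\beta_{\veps}=\mathrm{Cov}_{\veps}(\theta,s)\{\mathrm{Var}_{\veps}(s)\}^{-1}$, pass to $(t,v)$-coordinates, reduce the moments through the chain $q_{\veps}\to\tilde q_{\veps}\to g_{B_{\delta},r}$ via Lemmas~\ref{Alemma2} and~\ref{Alemma3.5}, and exploit the factorization $g_{n}(t,v)=N\{t;\psi(v),k_{n}^{-2}I(\theta_{0})^{-1}\}G_{n}(v)$. The only organizational difference is that the paper invokes the expansions \eqref{eq6}--\eqref{eq7} already established in the proof of Lemma~\ref{Alemma3} to evaluate $g_{B_{\delta},r}(t\nu^{k_{2}})$, whereas you recover the same information by computing $E_{g}\{t\mid v\}=\psi(v)+O_{p}(k_{n}^{-2}a_{n,\veps}^{-1}\tau_{n})$ and reading off the covariance; these are the same calculation in different notation. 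One small caveat: your claim that the conditional-mean remainder is $O_{p}(k_{n}^{-2}a_{n,\veps}^{-1}\tau_{n})$ ``uniformly in $v$'' is not literally needed (and may be delicate for large $\|v\|$); what you actually use, and what the paper proves directly, is the bound after integrating against the $v$-marginal $G_{n,r}(v)$, which suffices. Your termwise check that $k_{n}$ times each residual is $o_{p}(1)$, with the binding term $k_{n}O_{p}(a_{n}^{2}\veps_{n}^{4})=O_{p}(a_{n}^{3}\veps_{n}^{5})$ forcing $\veps_{n}=o(a_{n}^{-3/5})$, matches the paper exactly.
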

\begin{proof}
This generalizes Lemma 4 in \cite{Li2017} by replacing $\pi(\theta_{0}+a_{n,\veps}^{-1}t)$
therein with $\gamma_{n}(t)$. By Condition \ref{initial_upper} and the arguments in the proof
of Lemma 4 in \cite{Li2017}, it can be shown that 
\[
\frac{q_{\mathbb{R}^{p}}\{(\theta-\theta_{0})^{k_{1}}(s-s_{\rm obs})^{k_{2}}\}}{q_{\mathbb{R}^{p}}(1)}=a_{n,\veps}^{-k_{1}}\veps_{n}^{-k_{2}}\left\{ \frac{\tilde{q}_{B_{\delta},tv}(t^{k_{1}}\nu^{k_{2}})}{\tilde{q}_{B_{\delta},tv}(1)}+O_{p}(\alpha_{n}^{-1})\right\} .
\]
Then by Lemma 2 $(iii)$, the right hand side of the above is equal
to 
\[
a_{n,\veps}^{-k_{1}}\veps_{n}^{-k_{2}}\left\{ \frac{g_{B_{\delta},r}(t^{k_{1}}\nu^{k_{2}})}{g_{B_{\delta},r}(1)}+O_{p}(a_{n,\veps}^{-1})+O_{p}(a_{n}^{2}\veps_{n}^{4})+O_{p}(\alpha_{n}^{-1})\right\} .
\]
Since $\beta_{\veps}=\text{Cov}_{\veps}(\theta,S_{n})\text{Var}_{\veps}(S_{n})^{-1}$,
\textcolor{red}{{} }
\begin{align*}
a_{n}\veps_{n}(\beta_{\veps}-\beta_{0})= & k_{n}\left[\frac{g_{B_{\delta},r}(t\nu)}{g_{B_{\delta},r}(1)}-\frac{g_{B_{\delta},r}(t)g_{B_{\delta},r}(\nu)}{g_{B_{\delta},r}(1)^{2}}+o_{p}(k_{n}^{-1})\right]\cdot\\
 & \qquad\left[\frac{g_{B_{\delta},r}(\nu\nu^{T})}{g_{B_{\delta},r}(1)}-\frac{g_{B_{\delta},r}(\nu)g_{B_{\delta},r}(\nu)^{T}}{g_{B_{\delta},r}(1)^{2}}+o_{p}(k_{n}^{-1})\right]-a_{n}\veps_{n}\beta_{0},
\end{align*}
 where the equations that $a_{n,\veps}^{-1}k_{n}=o(1)$, $a_{n}^{2}\veps_{n}^{4}k_{n}=o(p)$,
and $\alpha_{n}^{-1}k_{n}=o(a_{n}^{-2/5}k_{n})=o(1)$ are used. By
algebra, the right hand side of the equation above can be rewritten
as 
\begin{eqnarray*}
 &  & \left\{ \frac{g_{B_{\delta},r}\{(k_{n}t-a_{n}\veps_{n}\beta_{0}\nu)\nu\}}{g_{B_{\delta},r}(1)}-\frac{g_{B_{\delta},r}(k_{n}t-a_{n}\veps_{n}\beta_{0}\nu)g_{B_{\delta},r}(\nu)}{g_{B_{\delta},r}(1)^{2}}+o_{p}(1)\right\} \cdot\\
 &  & \qquad\left\{ E_{G,r}(\nu\nu^{T})-E_{G,r}(\nu)E_{G,r}(\nu)^{T}+o_{p}(k_{n}^{-1})\right\} ^{-1}.
\end{eqnarray*}
By plugging \eqref{eq6} and \eqref{eq7} in the above, $a_{n}\veps_{n}(\beta_{\veps}-\beta_{0})$
is equal to 
\begin{eqnarray*}
 &  & \left\{ E_{G,r}(\nu)\beta_{0}A(\theta_{0})^{1/2}W_{\rm obs}-E_{G,r}(\nu)\beta_{0}A(\theta_{0})^{1/2}W_{\rm obs}+o_{p}(1)\right\} \cdot%\\
 %&  & \hspace{1cm}
\{\text{Var}_{G,r}(\nu)+o_{p}(k_{n}^{-1})\}^{-1}\\
 &&\hspace{1cm} = o_{P}(1)\{\text{Var}_{G,r}(\nu)+o_{p}(k_{n}^{-1})\}^{-1}.
\end{eqnarray*}
Since
\begin{align*}
\text{Var}_{G,r}(\nu) & \geq\frac{\inf_{t\in t(B_{\delta'})}\gamma_{n}(t)}{g_{B_{\delta},r}(1)}\int_{\mathbb{R}^{d}}\int_{t(B_{\delta'})}\{\nu-E_{G,r}(\nu)\}^{2}g_{n}(t,\nu)\,dtd\nu,
\end{align*}
where $\delta'$ is defined in the proof of Lemma \ref{Alemma2}(ii),
we have $\text{Var}_{G,r}(\nu)^{-1}=\Theta_{p}(1)$. Therefore $a_{n}\veps_{n}(\beta_{\veps}-\beta_{0})=o_{p}(1)$.
\end{proof}
\begin{lemma}\label{Alemma6} Results generalizing Lemma 5 in \cite{Li2017},
$i.e.$ replacing $\Pi_{\veps}$ and $\tPi_{\veps}$ therein with
$Q_{\veps}$ and $\tilde{Q}_{\veps}$, hold. \end{lemma}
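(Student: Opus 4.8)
The plan is to follow the same pointer strategy used for Lemma~\ref{Alemma3.5} and Lemma~\ref{Alemma4}. The proof of Lemma~5 in \cite{Li2017} is assembled entirely from (a) Lemma~2, Lemma~3, Lemma~4 and Lemma~7 of \cite{Li2017} together with Lemma~5 of \cite{Li2016}, and (b) only three elementary properties of the prior, namely $\sup_{\theta}\pi(\theta)<\infty$, $\pi(\btheta_{0})>0$, and $\pi\in C^{1}$ in a neighbourhood of $\btheta_{0}$. Every ingredient in group (a) has already been generalised in the present appendix: Lemma~\ref{Alemma2} generalises Lemma~5 of \cite{Li2016} and Lemma~2 of \cite{Li2017}; Lemma~\ref{Alemma3} generalises the mean expansions of Lemma~3 and Lemma~5 of \cite{Li2017}; Lemma~\ref{Alemma4} generalises Lemma~3 of \cite{Li2017}; Lemma~\ref{Alemma3.5} generalises Lemma~7 of \cite{Li2017}; and Lemma~\ref{Alemma5} generalises Lemma~4 of \cite{Li2017} on the regression slope $\beta_{\veps}$. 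In every instance the role played by $\pi(\theta+a_{n,\veps}^{-1}t)$ is taken over by $\gamma_{n}(t)=\tau_{n}^{-p}r_{\delta}(\theta+a_{n,\veps}^{-1}t)$, whose relevant bounds follow from Conditions~\ref{par_true}--\ref{initial_gradient}: $\sup_{t\in t(B_{\delta})}\gamma_{n}(t)=O_{p}(1)$ from Condition~\ref{initial_upper}, $\inf_{t\in t(B_{\delta'})}\gamma_{n}(t)=\Theta_{p}(1)$ for a suitable $\delta'<\delta$ from Condition~\ref{initial_lower}, $\sup_{t}\|D_{t}\gamma_{n}(t)\|=O_{p}(\tau_{n}/a_{n,\veps})$ from Condition~\ref{initial_gradient}, and $C^{1}$-smoothness with $\gamma_{n}$ non-zero at $t=0$ from Condition~\ref{par_true}. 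Hence the task reduces to running through the statements of Lemma~5 of \cite{Li2017} verbatim, substituting $Q_{\veps}$ for $\Pi_{\veps}$, $\tilde{Q}_{\veps}$ for $\tPi_{\veps}$, the regression-adjusted $Q_{\veps}^{*}$ for $\Pi_{\veps}^{*}$, and its Gaussian-limit surrogate $\tilde{Q}_{\veps}^{*}$ for $\tPi_{\veps}^{*}$.

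I would verify the statements in the following order. First, the tail bound $Q_{\veps}^{*}(\theta\in B_{\delta}^{c}\mid s_{\rm obs})=o_{p}(1)$, which holds because $q_{B_{\delta}^{c}}\{h(\theta)\}=O_{p}(\tau_{n}^{p}e^{-a_{n,\veps}^{\alpha_{\delta}}c_{\delta}})$ by Lemma~\ref{Alemma3.5}(i), whereas the normalising constant is $\Theta_{p}(\tau_{n}^{p}a_{n,\veps}^{d-p})$ by Lemma~\ref{Alemma3.5}(iii), so the extraneous factor $\tau_{n}^{p}$ cancels in the ratio. Second, the local comparison $\sup_{A}|Q_{\veps}^{*}(\theta\in A\cap B_{\delta}\mid s_{\rm obs})-\tilde{Q}_{\veps}^{*}(\theta\in A\cap B_{\delta}\mid s_{\rm obs})|=o_{p}(1)$, which follows from Lemma~\ref{Alemma3.5}(ii) exactly as in \cite{Li2017}. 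Third, the mean expansion $\btheta_{\veps}^{*}=\ttheta_{\veps}^{*}+o_{p}(a_{n}^{-1})$ in the regime $\veps_{n}=o(a_{n}^{-3/5})$, obtained by the argument parallel to Lemma~\ref{Alemma3.5}(iv) applied to the adjusted density $\tilde{q}_{\veps}(\theta,s)$ together with Lemma~\ref{Alemma3}(ii). Fourth, the Bernstein--von Mises limit $\sup_{A\in\mathfrak{B}^{p}}|\tilde{Q}_{\veps}^{*}\{a_{n}(\theta-\btheta_{\veps}^{*})\in A\mid s_{\rm obs}\}-\int_{A}N\{t;0,I(\btheta_{0})^{-1}\}\,dt|=o_{p}(1)$, which follows from Lemma~\ref{Alemma2} together with Lemma~\ref{Alemma5}; the latter is precisely the statement that the regression adjustment removes the $\veps_{n}$-order bias, since $a_{n}\veps_{n}(\beta_{\veps}-\beta_{0})=o_{p}(1)$ there, and one checks $\beta_{0}A(\btheta_{0})\beta_{0}^{T}=I(\btheta_{0})^{-1}$ so the Gaussian limit carries the stated covariance.

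The main obstacle is purely bookkeeping: tracking the powers of $\tau_{n}$ and the error terms $O_{p}(a_{n,\veps}^{-1})+O_{p}(a_{n}^{2}\veps_{n}^{4})$ supplied by Lemma~\ref{Alemma2}, since inside the unnormalised integrals these appear multiplied by $\tau_{n}^{p}$ and in the mean expansions they appear multiplied by $a_{n}$ (or $k_{n}$), so one must confirm they become $o_{p}(1)$ after normalisation and under $\veps_{n}=o(a_{n}^{-3/5})$. This is exactly the computation already carried out inside the proof of Lemma~\ref{Alemma5}, where the step $a_{n}^{2}\veps_{n}^{4}k_{n}=o_{p}(1)$ is what forces the exponent $3/5$; no new phenomenon arises, so the remainder is a line-by-line transcription of the proof of Lemma~5 of \cite{Li2017}, with $\pi$ replaced throughout by $\gamma_{n}$ and the three elementary properties of $\pi$ replaced by Conditions~\ref{par_true}--\ref{initial_gradient}.
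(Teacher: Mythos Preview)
Your proposal is correct and takes essentially the same pointer-style approach as the paper, which simply notes that the proof of Lemma~5 in \cite{Li2017} requires only Lemma~4 and Lemma~7 of \cite{Li2017}, both of which have already been generalised here (as Lemma~\ref{Alemma5} and Lemma~\ref{Alemma3.5} respectively), so the original argument goes through verbatim. Your write-up is considerably more detailed---listing a larger set of auxiliary lemmas and sketching the four-part structure of the regression-adjusted result---but this extra material is consistent with the paper's terse pointer and does not introduce any error.
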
 
\begin{proof}
In \cite{Li2017}, the proof of Lemma 5 requires Lemma 4 and 7 in
\cite{Li2017} to hold. Since their generalized results have been
proved, the proof of this lemma follows the same arguments.
\end{proof}
\begin{lemma}\label{Alemma7} Results generalizing Lemma 10 in \cite{Li2017}
hold. \end{lemma}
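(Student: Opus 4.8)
The plan is to proceed exactly as in the proofs of Lemmas~\ref{Alemma3.5}--\ref{Alemma6}: Lemma~10 of \cite{Li2017} is, in the end, a statement whose proof draws on the prior $\pi(\theta)$ only through a short list of elementary facts --- $\sup_{\theta}\pi(\theta)<\infty$, $\pi(\theta_{0})>0$, $\pi\in C^{1}$ near $\theta_{0}$, and $\|D\pi(\theta)\|=O(1)$ --- and each of these has an explicit counterpart for the possibly data-dependent $r_{n}$ among Conditions~\ref{par_true}--\ref{initial_gradient}. First I would rewrite every occurrence of $\pi(\theta+a_{n,\veps}^{-1}t)$ in the argument of \cite{Li2017} as $\tau_{n}^{p}\,\gamma_{n}(t)$, with $\gamma_{n}(t)=\tau_{n}^{-p}r_{\delta}(\theta+a_{n,\veps}^{-1}t)$, and note that the common factor $\tau_{n}^{p}$ cancels identically in each ratio that defines $Q_{\veps}$, $Q_{\veps}^{*}$, $\ttheta_{\veps}$, $\btheta_{\veps}^{*}$ and $\beta_{\veps}$. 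What is then left of the argument is controlled by $\sup_{t\in t(B_{\delta})}\gamma_{n}(t)=O_{p}(1)$ (Condition~\ref{initial_upper}), $\gamma_{n}(0)=\Theta_{p}(1)$ (Conditions~\ref{par_true} and \ref{initial_lower}), and $\sup_{t}\|D_{t}\gamma_{n}(t)\|=O_{p}(\tau_{n}/a_{n,\veps})$ (Condition~\ref{initial_gradient}) --- precisely the ingredients already used in Lemmas~\ref{Alemma2}, \ref{Alemma3} and \ref{Alemma5}.

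Next I would trace the dependency graph: in \cite{Li2017} the proof of Lemma~10 invokes its Lemma~2 (moment expansions of the unnormalized joint density), Lemma~4 (bias of the regression coefficient), Lemma~5 (expansions of $\ttheta_{\veps}$ and its regression-adjusted analogue) and Lemma~7 (truncation and tail estimates). These are generalized here, respectively, by Lemma~\ref{Alemma2}, Lemma~\ref{Alemma5}, Lemma~\ref{Alemma6} (together with Lemma~\ref{Alemma3}) and Lemmas~\ref{Alemma3.5}--\ref{Alemma4}, all under $\veps_{n}=o(a_{n}^{-3/5})$ and Condition~\ref{cond:likelihood_moments}. Substituting these at the places where \cite{Li2017} calls its originals, the Berry--Esseen-type bound and the central limit theorem for the regression-adjusted mean go through line by line; in particular one recovers
\[
\sup_{A\in\mathfrak{B}^{p}}\left|\int_{\{\btheta:\,a_{n}(\btheta-\btheta_{\veps}^{*})\in A\}} dQ_{\veps}^{*}(\btheta\mid\bS_{n}=\bs_{{\rm obs}})-\int_{A}N\{t;0,I(\btheta_{0})^{-1}\}\,dt\right|\longrightarrow 0
\]
in probability together with $a_{n}(\btheta_{\veps}^{*}-\btheta_{0})\to N\{0,I(\btheta_{0})^{-1}\}$ in distribution, which is what is needed to conclude the proof of Theorem~\ref{thm:ACC_limit_large_bandwidth}.

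The step I expect to be the main obstacle is the bookkeeping of remainder terms once the constant gradient bound on $\pi$ is replaced by the slower $\tau_{n}$-rate bound on $D_{t}\gamma_{n}$: error contributions that were $O_{p}(a_{n,\veps}^{-1})$ in \cite{Li2017} become a priori of size $O_{p}(\tau_{n}/a_{n,\veps})$ or $O_{p}(k_{n}^{-2}\tau_{n}/a_{n,\veps})$, and one must check they are still $o_{p}(a_{n}^{-1})$. This is exactly the accounting performed in the proof of Lemma~\ref{Alemma3}, and it succeeds because $\tau_{n}=o(a_{n})$ (Condition~\ref{initial_upper}) and $k_{n}=a_{n}/a_{n,\veps}$, while the sharpened bandwidth rate $\veps_{n}=o(a_{n}^{-3/5})$ annihilates the $O_{p}(a_{n}^{2}\veps_{n}^{4})$ pieces, so every residual collapses to $o_{p}(a_{n}^{-1})$. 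A secondary point is that replacing $\beta_{\veps}$ by its least-squares estimate $\hat{\beta}_{\veps}$ adds only a Monte-Carlo error that vanishes as $m_{\rm ACC}\to\infty$; this follows from Condition~\ref{cond:likelihood_moments} and the consistency of the empirical first two moments of $(\btheta,\bs)$ under $q_{\veps}(\btheta,\bs)$, exactly as in \cite{Li2017}.
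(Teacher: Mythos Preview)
Your proposal is correct and takes essentially the same approach as the paper: the paper's own proof consists of the single sentence ``The same arguments can be followed,'' and your write-up is simply a careful unpacking of what that means---tracing the substitution $\pi(\theta_{0}+a_{n,\veps}^{-1}t)\mapsto\tau_{n}^{p}\gamma_{n}(t)$ and invoking the already-generalized Lemmas~\ref{Alemma2}--\ref{Alemma6} at the corresponding places in \cite{Li2017}. Your discussion of the remainder-term bookkeeping and the conclusion for $Q_{\veps}^{*}$ goes somewhat beyond what is strictly needed for this lemma (those belong more to the proof of Theorem~\ref{thm:ACC_limit_large_bandwidth} itself), but nothing you write is wrong or on a different track.
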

\begin{proof}
The same arguments can be followed.
\end{proof}
\noindent \textbf{Proof of Theorem 3:} With all above lemmas, the
proof holds by following the same arguments in the proof of Theorem
1 in \cite{Li2017}.

	%------------------------------------------------------------------------------------------------------------
	\bibliographystyle{biometrika}
	\bibliography{ACC_paper_2018_11_19}
	
	%\begin{thebibliography}{7}
	%	\expandafter\ifx\csname natexlab\endcsname\relax\def\natexlab#1{#1}\fi	
	%\end{thebibliography}
	
\end{document}